\theoremstyle{plain}
\newtheorem{theorem}{Theorem}[section]
\newtheorem*{theorem*}{Theorem}
\newtheorem{proposition}[theorem]{Proposition}
\newtheorem*{proposition*}{Proposition}
\newtheorem{lemma}[theorem]{Lemma}
\newtheorem*{lemma*}{Lemma}
\newtheorem{corollary}[theorem]{Corollary}
\newtheorem*{corollary*}{Corollary}
\newtheorem{claim}[theorem]{Claim}
\newtheorem*{claim*}{Claim}
\theoremstyle{definition}
\newtheorem*{definition*}{Definition}
\theoremstyle{remark}
\newtheorem{observation}[theorem]{Observation}
\newtheorem*{observation*}{Observation}
\newtheorem{example}[theorem]{Example}
\newtheorem*{example*}{Example}
\newenvironment{claimproof}{\begin{proof}}{\end{proof}}
\title{Using Color Refinement to Boost Enumeration and Counting for 
  Acyclic~CQs of Binary Schemas}
\author[1]{Cristian Riveros}
\author[2]{Benjamin Scheidt}
\author[2]{Nicole Schweikardt}
\affil[1]{Pontificia Universidad Católica de Chile}
\affil[2]{Humboldt-Universität zu Berlin, Germany}
\date{}
\newenvironment{me}{\begin{enumerate}[(1)]}{\end{enumerate}}
\newenvironment{mea}{\begin{enumerate}[(a)]}{\end{enumerate}}
\renewcommand{\epsilon}{\varepsilon}
\newtheorem{fact}[theorem]{Fact}
\newenvironment{exampleWithEndmarker}{\begin{example}}{\mbox{}\hfill$\lrcorner$\end{example}}      
\newcommand{\exrel}[1]{\textbf{#1}}
\newcommand{\exdata}[1]{\texttt{#1}}
 \tikzset{
 	defaultstyle/.style={
 		>=stealth, 
 		semithick, 
 		auto,
 		initial text= {},
 		initial distance= {3mm},
 		accepting distance= {3mm}}}
\newcommand{\exsigma}{\sigma^{\operatorname{ex}}}
\newcommand{\exdb}{D^{\operatorname{ex}}}
\newcommand{\exgraph}{\GOne^{\operatorname{ex}}}
\newcommand{\exdbcol}{\exdb_{\operatorname{col}}}
\newcommand{\nc}[1]{\newcommand{#1}}
\newcommand{\rnc}[1]{\renewcommand{#1}}
\nc{\myparagraph}[1]{\paragraph{#1}}
\rnc{\leq}{\ensuremath{\leqslant}}
\rnc{\geq}{\ensuremath{\geqslant}}
\rnc{\le}{\leq}
\rnc{\ge}{\geq}
\nc{\isdef}{\ensuremath{:=}}
\nc{\deff}{\isdef}
\nc{\set}[1]{\ensuremath{\{#1\}}}
\nc{\setsize}[1]{\ensuremath{|#1|}}
\nc{\Setsize}[1]{\ensuremath{\big|#1\big|}}
\nc{\Set}[1]{\ensuremath{\big\{#1\big\}}}
\nc{\setc}[2]{\set{#1 \, : \, #2}}
\nc{\Setc}[2]{\Set{#1 \, : \, #2}}
\nc{\aufgerundet}[1]{\ensuremath{\lceil #1 \rceil}}
\nc{\abgerundet}[1]{\ensuremath{\lfloor #1 \rfloor}}
\nc{\dcup}{\ensuremath{\dot\cup}}
\nc{\ov}[1]{\ensuremath{\overline{#1}}}
\nc{\NN}{\ensuremath{\mathbb{N}}}
\nc{\NNpos}{\ensuremath{\NN_{\scriptscriptstyle\geq 1}}}
\nc{\RR}{\ensuremath{\mathbb{R}}}
\nc{\RRpos}{\ensuremath{\RR_{\scriptscriptstyle\geq 0}}}
\nc{\QQ}{\ensuremath{\mathbb{Q}}}
\nc{\QQpos}{\ensuremath{\QQ_{\scriptscriptstyle\geq 0}}}
\nc{\und}{\ensuremath{\wedge}}
\nc{\Und}{\ensuremath{\bigwedge}}
\nc{\oder}{\ensuremath{\vee}}
\nc{\Oder}{\ensuremath{\bigvee}}
\nc{\nicht}{\ensuremath{\neg}}
\nc{\impl}{\ensuremath{\to}}
\nc{\gdw}{\ensuremath{\leftrightarrow}}
\nc{\Semijoin}{\ensuremath{\ltimes}}
\nc{\free}{\ensuremath{\textrm{\upshape free}}}
\nc{\quant}{\ensuremath{\textrm{\upshape quant}}}
\nc{\ar}{\ensuremath{\operatorname{ar}}}
\nc{\Structure}[1]{\ensuremath{\mathcal{#1}}}
\nc{\A}{\Structure{A}}
\nc{\B}{\Structure{B}}
\nc{\C}{\Structure{C}}
\nc{\isom}{\ensuremath{\cong}}
\nc{\querycont}{\ensuremath{\sqsubseteq}}
\nc{\eval}[2]{\ensuremath{#1(#2)}}
\nc{\semantik}[1]{\ensuremath{\left\llbracket#1\right\rrbracket}}
\nc{\sem}[1]{{\semantik{#1}}}
\nc{\CanDB}[1]{\ensuremath{\A_{#1}}} %
\nc{\CanTup}[1]{\ensuremath{t_{#1}}} %
\newcommand{\queryphi}{\varphi}
\newcommand{\relS}{S} %
\newcommand{\relT}{T} %
\newcommand{\relE}{E} %
\nc{\Vars}{\ensuremath{\textrm{\upshape vars}}}
\nc{\vars}{\Vars}
\nc{\Cons}{\ensuremath{\textrm{\upshape cons}}}
\nc{\cons}{\Cons}
\nc{\atoms}{\ensuremath{\textrm{\upshape atoms}}}
\nc{\Atoms}{\atoms}
\nc{\Adom}{\ensuremath{\textrm{\upshape adom}}}
\nc{\adom}[1]{\ensuremath{\Adom(#1)}}
\nc{\dom}[1]{\ensuremath{\textrm{\upshape dom}(#1)}}
\newcommand{\DBone}[1]{}
\newcommand{\parent}{\pointerfont{parent}}
\nc{\arrayfont}[1]{\ensuremath{\texttt{#1}}}
\nc{\card}[1]{\ensuremath{|#1|}}
\newcommand{\assign}{\ensuremath{\alpha}}
\nc{\insertp}{\textsc{Insert}}
\nc{\cleanup}{\textsc{cleanUp}}
\nc{\cleanups}{\textsc{cleanUp'}}
\nc{\Yes}{\texttt{\upshape true}}
\nc{\No}{\texttt{\upshape false}}
\nc{\Dom}{\ensuremath{\textbf{dom}}}
\nc{\Var}{\ensuremath{\textbf{var}}}
\nc{\schema}{\ensuremath{\sigma}}
\nc{\DB}{\ensuremath{D}} %
\nc{\DBstrich}{\ensuremath{D'}}
\nc{\DBstart}{\ensuremath{{\DB_0}}}
\nc{\DBempty}{\ensuremath{{\DB_{\emptyset}}}}
\nc{\DS}{\ensuremath{\mathtt{D}}}
\rnc{\phi}{\queryphi}
\nc{\UpdateFont}[1]{\ensuremath{\textsf{#1}}}
\nc{\Delete}{\UpdateFont{delete}}
\nc{\Insert}{\UpdateFont{insert}}
\nc{\Update}{\UpdateFont{update}}
\nc{\AlgoFont}[1]{\ensuremath{\textbf{#1}}}
\nc{\PREPROCESS}{\AlgoFont{preprocess}}
\nc{\INIT}{\AlgoFont{init}}
\nc{\UPDATE}{\AlgoFont{update}}
\nc{\ENUMERATE}{\AlgoFont{enumerate}}
\nc{\COUNT}{\AlgoFont{count}}
\nc{\ANSWER}{\AlgoFont{answer}}
\nc{\TEST}{\AlgoFont{test}}
\nc{\EOE}{\texttt{\upshape EOE}\xspace}
\nc{\preprocessingtime}{\ensuremath{t_{\operatorname{prep}}}}
\nc{\delaytime}{\ensuremath{t_{\operatorname{delay}}}}
\nc{\preprocessingtimefunc}{\ensuremath{f_{\operatorname{prep}}}}
\nc{\delaytimefunc}{\ensuremath{f_{\operatorname{delay}}}}
\nc{\indexingtime}{\ensuremath{t_{\operatorname{index}}}}
\nc{\indexingtimefunc}{\ensuremath{f_{\operatorname{index}}}}
\nc{\booltime}{\ensuremath{t_{\booltask}}}
\nc{\countingtime}{\ensuremath{t_{\counttask}}}
\nc{\booltimefunc}{\ensuremath{f_{\booltask}}}
\nc{\countingtimefunc}{\ensuremath{f_{\counttask}}}
\nc{\preprocessingtimehat}{\ensuremath{\hat{t}_p}}
\nc{\inittimehat}{\ensuremath{\hat{t}_i}}
\nc{\delaytimehat}{\ensuremath{\hat{t}_d}}
\nc{\updatetimehat}{\ensuremath{\hat{t}_u}}
\nc{\answertimehat}{\ensuremath{\hat{t}_a}}
\nc{\countingtimehat}{\ensuremath{\hat{t}_c}}
\nc{\testingtimehat}{\ensuremath{\hat{t}_t}}
\nc{\phiBTypical}{\ensuremath{\phi'_{\relS\text{-}\relE\text{-}\relT}}}
\nc{\phiJTypical}{\ensuremath{\phi_{\relS\text{-}\relE\text{-}\relT}}}
\nc{\phiET}{\ensuremath{\phi_{\relE\text{-}\relT}}}
\nc{\restrict}[2]{\ensuremath{{#1}_{|#2}}}
\nc{\extend}[3]{\ensuremath{{#1}\tfrac{#3}{#2}}}
\nc{\emptyassign}{\ensuremath{\emptyset}}
\nc{\Assign}[2]{\ensuremath{\frac{#2}{#1}}}
\nc{\vroot}{\ensuremath{\varv_{\textsl{root}}}}
\nc{\pointerfont}[1]{\textit{#1}}
\nc{\varitem}[1]{\ensuremath{v^{#1}}}
\nc{\assitem}[1]{\ensuremath{\assign^{#1}}}
\nc{\constitem}[1]{\ensuremath{a^{#1}}}
\nc{\parentitem}[1]{\ensuremath{\parent^{#1}}}
\nc{\childitem}[2]{\ensuremath{\pointerfont{child}^{#1}_{#2}}}
\nc{\llist}[2]{\ensuremath{\mathcal{L}_{#2}^{#1}}}
\nc{\startlist}{\ensuremath{\mathcal{L}_{\text{\upshape start}}}\xspace}
\nc{\nextlistitem}[1]{\ensuremath{\pointerfont{next-listitem}^{#1}}}
\nc{\prevlistitem}[1]{\ensuremath{\pointerfont{prev-listitem}^{#1}}}
\nc{\countitem}[1]{\ensuremath{C_{\textit{below}}^{#1}}}
\nc{\desc}[1]{\ensuremath{\text{desc}}}
\nc{\Null}{\ensuremath{0}}
\nc{\arrayA}{\arrayfont{A}}
\nc{\arrayB}{\arrayfont{B}}
\nc{\arrayC}{\arrayfont{C}}
\nc{\arrayE}{\arrayfont{E}}
\nc{\ITEMS}{\mathcal{I}}
\nc{\NIL}{\textsc{nil}}
\nc{\TupleSet}{\ensuremath{\mathcal{T}}}
\nc{\ResultSet}{\ensuremath{\mathcal{R}}}
\nc{\SkipArrayNext}[1]{\ensuremath{\mathsf{skip}[#1].\mathsf{next}}}
\nc{\SkipArrayPrev}[1]{\ensuremath{\mathsf{skip}[#1].\mathsf{prev}}}
\nc{\AlgoA}{\ensuremath{\mathbb{A}}}
\nc{\nil}{\texttt{nil}\xspace}
\nc{\SkipStart}{\ensuremath{\mathsf{sk{-}start}}}
\nc{\tup}{\ensuremath{\ov{t}}}
\nc{\tups}{\ensuremath{\ov{s}}}
\nc{\prozvisit}{\ensuremath{\textsc{Visit}}}
\nc{\prozvisitrev}{\ensuremath{\textsc{Visit}^{-1}}}
\nc{\tut}{\ensuremath{t}}
\nc{\enumprev}{\ensuremath{\vartriangleleft}}
\nc{\SkipLast}{\ensuremath{\mathsf{sk{-}last}}}
\nc{\lllist}{\ensuremath{\mathcal{L}}}
\nc{\pllist}{\ensuremath{\mathcal{L}^+}}
\nc{\milist}{\ensuremath{\mathcal{L}^-}}
\nc{\cilist}{\ensuremath{\mathcal{L}^\circ}}
\nc{\numitmpl}{\ensuremath{+{-}\text{on}{-}\text{path}}}
\nc{\numitmmi}{\ensuremath{-{-}\text{on}{-}\text{path}}}
\nc{\numitmci}{\ensuremath{\circ{-}\text{on}{-}\text{path}}}
\nc{\DBnew}{\ensuremath{\DB_{\text{new}}}}
\nc{\DBold}{\ensuremath{\DB_{\text{old}}}}
\nc{\liitmpl}{\ensuremath{\mathcal{L}^{+{-}\text{on}{-}\text{path}}}}
\nc{\liitmmi}{\ensuremath{\mathcal{L}^{-{-}\text{on}{-}\text{path}}}}
\nc{\liitmci}{\ensuremath{\mathcal{L}^{\circ{-}\text{on}{-}\text{path}}}}
\nc{\ITEMSres}[1]{\ensuremath{\ITEMS|_{#1}}}
\nc{\prVisit}{\textsc{Visit}}
\nc{\prVisitRes}{\textsc{VisitRes}}
\nc{\prEnumWithItem}{\textsc{EnumWithItem}}
\nc{\prFindItems}{\textsc{FindItems}}
\nc{\SUBW}{\ensuremath{\textit{subw}}}
\nc{\ADW}{\ensuremath{\textit{adw}}}
\nc{\fcSUBW}{\ensuremath{\textit{fc-subw}}}
\nc{\fcFHW}{\ensuremath{\textit{fc-fhw}}}
\nc{\fcGHW}{\ensuremath{\textit{fc-ghw}}}
\nc{\TD}{\ensuremath{\textit{TD}}}
\nc{\FDecom}{\ensuremath{\textit{F}}}
\nc{\emptytuple}{\ensuremath{()}}
\nc{\emptyword}{\ensuremath{\varepsilon}}
\nc{\proj}{\ensuremath{\pi}}
\nc{\select}{\ensuremath{\sigma}}
\nc{\FD}{\ensuremath{\delta_{\textit{fd}}}}
\nc{\IND}{\ensuremath{\delta_{\textit{ind}}}}
\nc{\INDtilde}{\ensuremath{\tilde{\delta}_{\textit{ind}}}}
\nc{\SD}{\ensuremath{\delta_{\textit{sd}}}}
\nc{\CC}{\ensuremath{\delta_{\textit{cc}}}}
\nc{\DEP}{\ensuremath{\delta}}
\nc{\CONSTR}{\ensuremath{\Sigma}}
\nc{\qSET}{\ensuremath{q_{\textit{S-E-T}}}}
\nc{\pSET}{\ensuremath{p_{\textit{S-E-T}}}}
\nc{\qET}{\ensuremath{q_{\textit{E-T}}}}
\nc{\Ans}{\ensuremath{\textit{Ans}}}
\nc{\query}{\ensuremath{Q}}
\nc{\qatom}{\ensuremath{\myatom}}
\nc{\HG}{\ensuremath{\mathcal{H}}}
\nc{\Nodes}{\ensuremath{V}}
\nc{\Edges}{\ensuremath{E}}
\nc{\HD}{\ensuremath{\textit{H}}}
\nc{\Tree}{\ensuremath{T}}
\nc{\rootedTree}{\ensuremath{\hat{\Tree}}}
\nc{\treenode}{\ensuremath{t}}
\nc{\treenodeparent}{\ensuremath{p}}
\nc{\parentnode}{\treenodeparent} 
\nc{\treeroot}{\ensuremath{r}}
\nc{\Bag}{\ensuremath{\textit{bag}}}
\nc{\Cover}{\ensuremath{\textit{cover}}}
\nc{\FHD}{\ensuremath{\textit{F}}}
\nc{\Weight}{\ensuremath{\textit{weight}}}
\nc{\Width}{\ensuremath{\textit{width}}}
\nc{\GHW}{\ensuremath{\textit{ghw}}}
\nc{\FHW}{\ensuremath{\textit{fhw}}}
\nc{\freetreenodes}{\ensuremath{U}}
\nc{\prunedTree}{\ensuremath{\tilde{\Tree}}}
\nc{\prunedSchema}{\ensuremath{\tilde{\schema}}}
\nc{\prunedDB}{\ensuremath{\tilde{\DB}}}
\nc{\prunedDBold}{\ensuremath{\prunedDB_{\textit{old}}}}
\nc{\prunedDBnew}{\ensuremath{\prunedDB_{\textit{new}}}}
\nc{\prunedQuery}{\ensuremath{\tilde{\query}}}
\nc{\Start}{\ensuremath{\texttt{fetch-first}}}
\nc{\Next}{\ensuremath{\texttt{fetch-next}}}
\nc{\TestTuple}{\ensuremath{\texttt{test}}}
\nc{\PositionCursor}{\ensuremath{\texttt{position-cursor}}}
\nc{\AccessJth}{\ensuremath{\texttt{access}}}
\nc{\RankTuple}{\ensuremath{\texttt{rank}}}
\nc{\Mapping}[1]{\ensuremath{\tilde{#1}}}
\nc{\MappingR}{\ensuremath{\tilde{R}}}
\nc{\Algo}[1]{\ensuremath{\textsc{#1}}}
\nc{\True}{\ensuremath{\texttt{true}}}
\nc{\False}{\ensuremath{\texttt{false}}}
\nc{\QueryClass}{\ensuremath{\mathcal{Q}}}
\nc{\fcACQ}{\ensuremath{\textsf{\upshape{fc-ACQ}}[\sigma]}}
\nc{\fcACQOne}{\ensuremath{\textsf{\upshape{fc-ACQ}}[\sigmaOne]}}
\nc{\fcACQci}{\ensuremath{\textsf{\upshape{fc-ACQ}}[\cisigma]}}
\nc{\AllDBs}[1]{\ensuremath{\textsf{DB}[#1]}}
\nc{\taskdescription}[1]{\textsl{\textsf{#1}}}
\nc{\booltask}{\taskdescription{bool}}
\nc{\counttask}{\taskdescription{count}}
\nc{\enumtask}{\taskdescription{enum}}
\nc{\IndexingProblem}[2]{\ensuremath{\textup{\textsc{IndexingEval}}(#1,#2)}}
\nc{\IndexingProblemGeneral}{\IndexingProblem{\sigma}{\QueryClass}}
\nc{\IndexingProblemOurs}{\IndexingProblem{\sigma}{\fcACQ}}
\nc{\valuation}{\ensuremath{\nu}}
\nc{\val}{\ensuremath{\valuation}}
\nc{\sigmaOne}{\ensuremath{\bar{\sigma}}}
\nc{\DOne}{\ensuremath{\bar{D}}}
\nc{\GOne}{\ensuremath{\bar{G}}}
\nc{\VOne}{\ensuremath{\bar{V}}}
\nc{\EOne}{\ensuremath{\bar{E}}}
\nc{\QOne}{\ensuremath{\bar{Q}}}
\nc{\vl}{\ensuremath{\textnormal{\textsf{vl}}}}
\nc{\el}{\ensuremath{\textnormal{\textsf{el}}}}
\nc{\VLabels}{\ensuremath{L_{\VOne}}}
\nc{\ELabels}{\ensuremath{L_{\EOne}}}
\nc{\MyPlus}{\ensuremath{\rightarrow}}
\nc{\MyMinus}{\ensuremath{\leftarrow}}
\nc{\img}{\ensuremath{\textrm{img}}}
\nc{\Neighbors}[5]{\ensuremath{{#1}_{#2}^{#3}({#4},{#5})}}
\nc{\NSucc}[3]{\Neighbors{N}{\rightarrow}{#1}{#2}{#3}}
\nc{\NPred}[3]{\Neighbors{N}{\leftarrow}{#1}{#2}{#3}}
\nc{\hatNSucc}[3]{\Neighbors{\widehat{N}}{\rightarrow}{#1}{#2}{#3}}
\nc{\hatNPred}[3]{\Neighbors{\widehat{N}}{\leftarrow}{#1}{#2}{#3}}
\nc{\Numbers}[5]{\ensuremath{{#1}_{#2}^{#3}({#4},{#5})}}
\nc{\numSucc}[3]{\Numbers{\#}{\rightarrow}{#1}{#2}{#3}}
\nc{\numPred}[3]{\Numbers{\#}{\leftarrow}{#1}{#2}{#3}}
\nc{\numNeigh}[3]{\Numbers{\#}{d}{#1}{#2}{#3}}
\nc{\hatnumSucc}[3]{\Numbers{\widehat{\#}}{\rightarrow}{#1}{#2}{#3}}
\nc{\hatnumPred}[3]{\Numbers{\widehat{\#}}{\leftarrow}{#1}{#2}{#3}}
\nc{\Numb}[1]{\ensuremath{n_{#1}}}
\nc{\col}{\ensuremath{\textnormal{\textsf{col}}}}
\nc{\colAlt}{\col'}
\nc{\Dual}[1]{\ensuremath{\widetilde{#1}}}
\nc{\DSD}{\ensuremath{\textsf{\upshape{DS}}_D}}
\nc{\fcr}{\ensuremath{f_{\operatorname{col}}}}
\nc{\cisigma}{\ensuremath{\sigma_{\operatorname{col}}}}
\nc{\ciD}{\ensuremath{{D_{\operatorname{col}}}}}
\nc{\ciDn}{\ensuremath{{D^n_{\operatorname{col}}}}}
\nc{\ciQ}{\ensuremath{{Q_{\operatorname{col}}}}}
\nc{\ciQStrich}{\ensuremath{{Q'_{\operatorname{col}}}}}
\nc{\elmt}[3]{{\ensuremath{e^{#1}_{({#2},{#3})}}}}
\nc{\myEnum}{\ensuremath{\textsc{Enum}}}
\nc{\Parent}{\ensuremath{\textit{p}}}
\nc{\Children}{\ensuremath{\textit{ch}}}
\nc{\DataStructure}[2]{\textsf{\upshape{DS}}_{D,Q}}
\nc{\myatom}{\ensuremath{a}}
\renewcommand*{\mid}{\, : \,}
\newcommand*{\fd}{f_{\downarrow}}
\newcommand*{\ParentEdge}{\textit{pe}}
\tikzset{nudge/.code args={#1}{%
  \pgfkeysalso{transform canvas={xshift=#1}}%
}} 
\begin{document}

\maketitle

\begin{abstract}
  We present an index structure, called the color-index,
  to boost the evaluation of acyclic conjunctive queries (ACQs) over binary schemas.
  The color-index is based on the color refinement algorithm,
  a widely used subroutine for graph isomorphism testing algorithms.
  Given a database $D$, we use a suitable version of the color refinement algorithm to produce
  a stable coloring of $D$, an assignment from the active domain of $D$ to a set of colors $C_D$.
  The main ingredient of the color-index is a particular database $\ciD$
  whose active domain is $C_D$ and whose size is at most $|D|$.
  Using the color-index, we can evaluate any free-connex ACQ $Q$ over $D$ with preprocessing time
  $O(|Q| \cdot |\ciD|)$ and constant delay enumeration.
  Furthermore, we can also count the number of results of $Q$ over $D$
  in time $O(|Q| \cdot |\ciD|)$.
  Given that $|\ciD|$ could be much smaller than $|D|$
  (even constant-size for some families of databases),
  the color-index is the first index structure for evaluating free-connex ACQs that allows efficient
  enumeration and counting with performance that may be strictly smaller than the database size.
\end{abstract}

\noindent\fcolorbox{black}{gray!30}{%
\begin{minipage}[t]{\dimexpr\textwidth-2\fboxsep-2\fboxrule}%
	\centering\Large%
	This paper is superseded by \href{https://arxiv.org/abs/2601.04757}{arXiv:2601.04757}~\cite{RSS26} of the same authors, handling arbitrary relational schemas --- not just binary ones.
\end{minipage}%
}

\section{Introduction}\label{sec:introduction}
An important part of database systems are \emph{index structures} that provide
efficient access to the stored data and help to accelerate query evaluation.
Such index structures typically rely on hash tables or balanced trees such as B-trees or B$^+$-trees,
which boost the performance of value queries~\cite{ramakrishnan2003database}.
Another recent example is indices for worst-case optimal join algorithms~\cite{ngo2018worst}.
For example, \emph{Leapfrog Triejoin}~\cite{DBLP:conf/icdt/Veldhuizen14},
a simple worst-case optimal algorithm for evaluating multiway-joins on relational databases,
is based on so-called trie iterators for boosting the access under different join orders.
These trie indices have recently improved in the use of time and space~\cite{ArroyueloHNRRS21}.
Typically, index structures are not constructed for supporting the evaluation of a single query,
but for supporting the evaluation of an entire class of queries.
This paper presents a novel kind of index structure called the \emph{color-index}
to boost the evaluation of free-connex acyclic conjunctive queries (fc-ACQs).

Acyclic conjunctive queries (ACQs) were introduced
in~\cite{DBLP:journals/jacm/BeeriFMY83,DBLP:journals/siamcomp/BernsteinG81}
and have since then received a lot of attention.
From Yannakakis' seminal paper~\cite{Yannakakis1981} it is known that
the result of every fixed ACQ $Q$ over a database $D$ can be computed in time
linear in the product of the database size $|D|$ and the output size $|\sem{Q}(D)|$.
For the particular subclass fc-ACQ of \emph{free-connex} ACQs,
it is even known to be linear in the sum $|D|+|\sem{Q}(D)|$:
the notion of \emph{free-connex} ACQs was introduced in the seminal paper by
Bagan, Durand, and Grandjean~\cite{Bagan.2007}, who improved Yannakakis' result as follows.
For any database $D$, during a preprocessing phase that takes time linear in $|D|$,
a data structure can be computed that allows to enumerate, without repetition,
all the tuples of the query result $\sem{Q}(D)$,
with only a \emph{constant} delay between outputting any two tuples.
Note that the data structure computed in the preprocessing phase is designed for the particular
query $Q$, and the above running time statement suppresses factors that depend on $Q$.
For evaluating a new query $Q'$, one has to start an entirely new preprocessing phase that,
again, takes time linear in $|D|$.

The main contribution of this paper is a novel index structure called the \emph{color-index}.
Upon input of a database $D$ of a binary schema
(i.e., the schema consists of relation symbols of arity at most 2),
the color-index can be built in time $O(|D|{\cdot}\log|D|)$ in the worst-case,
and for many $D$ the time is only $O(|D|)$.
The main ingredient of the color-index is the \emph{color database} $\ciD$.
By using the color-index, upon input of an arbitrary fc-ACQ $Q$ we can enumerate
the tuples of the query result $\sem{Q}(D)$ with constant delay after a preprocessing phase
that takes time linear in $|\ciD|$.
Compared to the above-mentioned result by Bagan, Durand, and Grandjean~\cite{Bagan.2007},
this accelerates the preprocessing time from $O(|D|)$ to $O(|\ciD|)$.
The size of $\ciD$ is $O(|D|)$ in the worst-case, but depending on the particular structure of $D$,
the database $\ciD$ may be substantially smaller than $D$; we show that even $O(1)$ is possible.
Apart from enumerating the query result with constant delay, the color-index
can also be used to count the number of tuples in the query result $\sem{Q}(D)$;
this takes time $O(|\ciD|)$ (in data complexity).
To the best of our knowledge, the color-index is the first index structure for fc-ACQs
that allows efficient enumeration and counting with performance guarantees
that may be sublinear in the database size.

The starting point for constructing the color-index is to represent $D$ by
a suitable vertex-labeled and edge-labeled directed graph,
to which we then apply a variant of the well-known \emph{color refinement} algorithm.
Color refinement is a simple and widely used subroutine for graph isomorphism testing algorithms;
see e.g.~\cite{BBG-ColorRefinement,grohe_color_2021} for an overview
and~\cite{CFI-paper,Arvind2017,Kiefer2020,Kiefer2021} for details on its expressibility.
Its result is a particular coloring of the elements in the active domain of $D$.
The construction of the color-index and, in particular,
the color database $\ciD$ is based on this coloring.
Our result relies on a close connection between the colors computed by the
color refinement algorithm and the homomorphisms from acyclic conjunctive queries to the database.
This connection between colors and homomorphisms from tree-like structures has been used before
in different contexts~\cite{Dvorak2010,grohe_dimension_2014,Grohe2020a,Bollen2023,Kayali2022}.
Notions of index structures that are based on concepts of bisimulations
(which produce results similar to color refinement)
and geared towards conjunctive query evaluation have been proposed and empirically evaluated,
e.g., in~\cite{Picalausa2014,Picalausa2012,Tran2013}.
But to the best of our knowledge, the present paper is the first to use color refinement
to produce an index structure that guarantees efficient constant-delay enumeration and counting.

The remainder of the paper is organized as follows.
Section~\ref{sec:preliminaries} fixes the basic notation concerning databases and queries.
Section~\ref{sec:indexing} formalizes the general setting of \emph{indexing for query evaluation}.
Section~\ref{sec:ACQs} provides the background on fc-ACQs that is necessary
for achieving our main result.
Section~\ref{sec:mainresult} gives a detailed description of the color-index,
while Section~\ref{sec:eval} shows how it can be utilized for
enumerating and counting the results of fc-ACQs.
Section~\ref{sec:MainTheorem} provides a precise statement of our main result.
We conclude in Section~\ref{sec:conclusion} with a summary and an outlook on future work.
Further details are provided an appendix.

\section{Preliminaries}\label{sec:preliminaries}

We write $\NN$ for the set of non-negative integers, and we let
$\NNpos\deff\NN\setminus\set{0}$ and $[n]\deff\set{1,\ldots,n}$ for all $n\in\NNpos$.
Whenever $G$ denotes a graph (directed or undirected),
we write $V(G)$ and $E(G)$ for the set of nodes and the set of edges of $G$.
Given a set $U \subseteq V(G)$, the subgraph of $G$ \emph{induced} by $U$
(for short: $G[U]$) is the graph $G'$ such that $V(G') = U$ and $E(G')$ is
the set of all $e\in E(G)$ such that both endpoints of $e$ belong to $U$.
A \emph{connected component} of an undirected graph is a maximal connected subgraph of $G$.
A \emph{forest} is an undirected acyclic graph; and a \emph{tree} is a connected~forest.

We usually write $\bar{a} =(a_1,\ldots,a_k)$ to denote a $k$-tuple
(for some arity $k\in\NN$) and write $a_i$ to denote its $i$-th component.
Note that there is only one tuple of arity 0, namely,
the \emph{empty tuple} denoted as $\emptytuple$.
For a $k$-tuple $\bar{a}$ and an $\ell$-tuple $\bar{c}$,
we write $\bar{a}{\cdot}\bar{c}$ or $\bar{a}\bar{c}$ for
the $(k{+}\ell)$-tuple $({a}_1,\ldots,{a}_k,{c}_1,\ldots,{c}_\ell)$.
Given a function $f\colon X\to Y$ and a $k$-tuple $\bar{x}$ of elements in $X$,
we write $f(\bar{x})$ for the $k$-tuple $(f({x}_1),\ldots,f({x}_k))$.
For $S\subseteq X^k$ we let $f(S)=\setc{f(\bar{x})}{\bar{x}\in S}$.

We fix a countably infinite set $\Dom$ for the \emph{domain} of potential database entries,
which we also call \emph{constants}.
A \emph{schema} $\sigma$ is a finite, non-empty set of relation symbols,
where each $R\in\sigma$ is equipped with a fixed arity $\ar(R)\in\NNpos$.
A schema $\sigma$ is called \emph{binary} if every $R\in\sigma$ has arity $\ar(R)\leq 2$.

A \emph{database} $D$ of schema $\sigma$ ($\sigma$-db, for short)
is a tuple of the form $D=(R^D)_{R\in\sigma}$, where $R^D$ is a finite subset of $\Dom^{\ar(R)}$. %
The \emph{active domain} $\adom{D}$ of $D$ is the smallest subset $S$ of $\Dom$
such that $R^D\subseteq S^{\ar(R)}$ for all $R\in \sigma$.
The \emph{size} $|D|$ of $D$ is defined as the total number of tuples in $D$, namely,
$|D|=\sum_{R\in\sigma}|R^D|$.

A $k$-ary \emph{query} (for $k\in\NN$) of schema $\sigma$ ($\sigma$-query, for short) is a
syntactic object $Q$ which is associated with a function $\sem{Q}$ that maps every $\sigma$-db $D$
to a finite subset of $\Dom^k$, and which is \emph{generic} in the following sense.
For every permutation $\pi$ of $\Dom$ and every $\sigma$-db $D$ we have
$\pi(\sem{Q}(D)) = \sem{Q}(\pi(D))$, where $\pi(D)$ is the $\sigma$-db $D'$
with $R^{D'}\deff \pi(R^D)$ for all $R\in\sigma$.
\emph{Boolean} queries are $k$-ary queries for $k=0$ and
\emph{non-Boolean} queries are $k$-ary queries with $k\geq 1$.
Note that there are only two relations of arity 0, namely, $\emptyset$ and $\set{\emptytuple}$.
For a Boolean query $Q$ we write $\sem{Q}(D)=\Yes$ to indicate that $\emptytuple\in\sem{Q}(D)$,
and we write $\sem{Q}(D)=\No$ to indicate that~$\sem{Q}(D)=\emptyset$.
In this paper we will focus on the following evaluation tasks for a given $\sigma$-db $D$:
\begin{itemize}
	\item
	\textbf{Boolean query evaluation:} On input of a Boolean $\sigma$-query $Q$,
	decide if $\sem{Q}(D)=\Yes$.
	\item
	\textbf{Non-Boolean query evaluation:} Upon input of a non-Boolean $\sigma$-query $Q$,
	compute the relation $\sem{Q}(D)$.
	\item
	\textbf{Counting query evaluation:} Upon input of a $\sigma$-query $Q$,
	compute the number $|\sem{Q}(D)|$ of tuples in the result of $Q$ on $D$.
\end{itemize}

Concerning the second task, we are mainly interested in finding an \emph{enumeration algorithm}
for computing the tuples in $\sem{Q}(D)$.
Such an algorithm consists of two phases:
the \emph{preprocessing phase} and the \emph{enumeration phase}.
In the preprocessing phase, the algorithm is allowed to do arbitrary preprocessing to build
a suitable data structure $\DataStructure{D}{Q}$.
In the enumeration phase, the algorithm can use $\DataStructure{D}{Q}$ to enumerate all tuples
in $\sem{Q}(D)$ followed by an End-Of-Enumeration message $\EOE$.
We require here that each tuple is enumerated exactly once (i.e., without repetitions).
The \emph{delay} is the maximum time that passes between the start of the enumeration phase and
the first output, between the output of two consecutive tuples, and between the last tuple and $\EOE$.

For our algorithms, we will assume the RAM-model with a uniform cost measure.
In particular, storing and accessing elements of $\Dom$ requires $O(1)$ space and time.
This assumption implies that, for any $r$-ary relation $R^D$, we can construct
in time $O(r\cdot |R^D|)$ an index that allows to enumerate $R^D$ with $O(1)$ delay
and to test for a given $r$-tuple $\bar{c}$ whether $\bar{c}\in R^D$ in time $O(r)$.
Furthermore, this implies that given any finite partial function $f\colon A \to B$,
we can build a \emph{lookup table} in time $O(|\dom{f}|)$, where
$\dom{f} \isdef \set{ x \in A \mid f(x) \text{ is defined} }$,
and have access to $f(a)$ in constant time.
 
\section{Indexing for Query Evaluation}\label{sec:indexing}
In this section, we formalize the setting for \emph{indexing for query evaluation} for the tasks
of Boolean ($\booltask$), non-Boolean ($\enumtask$), and counting ($\counttask$) query evaluation
for a given class $\QueryClass$ of queries over a fixed schema $\sigma$.
We present here the general setting.
Later, we will instantiate it for a specific class of queries.
The scenario is:
\begin{me}
	\item
	As input we receive a $\sigma$-db $D$.
	We perform an \emph{indexing phase} in order to build a suitable data structure $\DSD$.
	This data structure shall be helpful later on to efficiently evaluate
	\emph{any} query $Q\in\QueryClass$.
	\item
	In the \emph{evaluation phase} we have access to $\DSD$.
	As input, we  receive arbitrary queries $Q\in\QueryClass$ and
	one of the three task descriptions $\booltask$, $\counttask$, or $\enumtask$,
	where $\booltask$ is only allowed in case that $Q$ is a Boolean query.
	The goal is to solve this query evaluation task for $Q$ on $D$.
\end{me}

This scenario resembles what happens in real-world database systems,
where indexes are built to ensure efficient access to the information stored in the database,
and subsequently these indexes are used for evaluating various input queries.
We formalize the problem as:
\begin{center}
	\framebox{
	\begin{tabular}{rl}
		\textbf{Problem:}
			\!\!\!\!\!\!&
			$\IndexingProblemGeneral$
			\\\hline\vspace{-3mm}\\
		\textbf{Indexing:}
			\!\!\!\!\!\!&
			$\left\{\text{
				\begin{tabular}{rl}
					\textbf{input:}
						&\!\!\!
						a $\sigma$-db $D$\\
					\textbf{result:}
						&\!\!\!
						a data structure $\DSD$
				\end{tabular}
			}\right.$%
			\\[3.5mm]\hline\vspace{-3mm}\\
		\!\!
		\textbf{Evaluation:}
			\!\!\!\!\!\!&
			$\left\{\!\!\!
				\text{
				\begin{tabular}{rl}
					\textbf{input:}
						&\!\!\!
						\parbox[t]{.615\linewidth}{
							a $\sigma$-query $Q\in\QueryClass$, and a task
							description in $\set{\booltask,\enumtask,\counttask}$
						}\\
					\textbf{output:}
						&\!\!\!
						\parbox[t]{.615\linewidth}{
							the correct answer solving the given task for $\sem{Q}(D)$
						}
				\end{tabular}
				}
			\right.$
			\!\!\!\!
	\end{tabular}
	}
\end{center}
\smallskip

The \emph{indexing time} is the time used for building the data structure $\DSD$.
It only depends on the input database $D$, and we usually measure it by a function
$\indexingtimefunc(D)$ that provides an upper bound on the time it takes
to perform the indexing phase for $D$.

The time it takes to answer a Boolean query $Q$ on $D$ or for counting the number of result tuples
of a query $Q$ on $D$ depends on~$Q$ and the particular properties of the data structure $\DSD$.
We usually measure these times by functions $\booltimefunc(Q,D)$ and
$\countingtimefunc(Q,D)$ that provide an upper bound on the time it takes
to perform the task by utilizing the data structure $\DSD$.
Concerning the task $\enumtask$, we measure the preprocessing time and the delay by functions
$\preprocessingtimefunc(Q,D)$ and $\delaytimefunc(Q,D)$ that provide upper bounds on the time taken
for preprocessing and the delay, respectively,
when using the data structure $\DSD$ to enumerate $\sem{Q}(D)$.

Note that for measuring running times we use $D$ and not $|D|$ (e.g., $\booltimefunc(Q,D)$),
because we want to allow the running time analysis to be more fine-grained
than just depending on the number of tuples of $D$.
The main result of this paper is a solution for $\IndexingProblemGeneral$ where $\QueryClass$ is
the class $\fcACQ$ of all \emph{free-connex acyclic conjunctive queries} of a
\emph{binary} schema $\sigma$ (i.e., all $R\in\sigma$ have arity $\ar(R)\leq 2$).
 
\section{Free-Connex Acyclic CQs}\label{sec:ACQs}
This section provides the necessary definitions and known results concerning $\fcACQ$.
We fix a countably infinite set $\Var$ of \emph{variables} such that $\Var\cap\Dom=\emptyset$.
An \emph{atom} $\myatom$ of schema $\sigma$ is of the form $R(x_1,\ldots,x_r)$ where $R\in\sigma$,
$r=\ar(R)$, and $x_1,\ldots,x_r\in\Var$.
We write $\vars(\myatom)$ for the set of variables occurring in $\myatom$.
Let $k\in\NN$.
A $k$-ary \emph{conjunctive query} (CQ) of schema $\sigma$ is of the form
\begin{equation}\label{eq:generalCQ}
	\Ans(z_1,\ldots,z_k) \leftarrow \myatom_1,\ldots,\myatom_d
\end{equation}
where $d\in\NNpos$, $\myatom_j$ is an atom of schema $\sigma$ for every $j\in [d]$,
and $z_1,\ldots,z_k$ are $k$ pairwise distinct variables in $\bigcup_{j\in[d]}\vars(\myatom_j)$.
The expression to the left (right) of $\leftarrow$ is called
the \emph{head} (\emph{body}) of the query.
For a CQ $Q$ of the form~\eqref{eq:generalCQ} we let
$\atoms(Q)=\set{\myatom_1,\ldots,\myatom_d}$, $\vars(Q)=\bigcup_{j\in[d]}\vars(\myatom_j)$,
and $\free(Q)=\set{z_1,\ldots,z_k}$.
The  \emph{(existentially) quantified} variables are the elements in
$\quant(Q)\deff\vars(Q)\setminus\free(Q)$.
A CQ $Q$ is called \emph{Boolean} if $\free(Q)=\emptyset$, and it is called \emph{full}
(or, \emph{quantifier-free}) if $\quant(Q)=\emptyset$.
The \emph{size} $|Q|$ of the query is defined as $|\atoms(Q)|$.

The semantics are defined as usual.
A \emph{valuation} $\val$ for $Q$ is a mapping $\val\colon \vars(Q)\to \Dom$.
A valuation $\val$ is a \emph{homomorphism} from $Q$ to a $\sigma$-db $D$ if for every atom
$R(x_1,\ldots,x_r)\in\atoms(Q)$ we have $(\val(x_1),\ldots,\val(x_r))\in R^D$.
The \emph{query result}  of a CQ $Q$ of the form~\eqref{eq:generalCQ} on
the $\sigma$-DB $D$ is defined as the set of tuples
\begin{equation}
	\sem{Q}(D) \;\isdef\; %
		\setc{\, (\val(z_1),\ldots,\val(z_k)) }
		{ \val \text{ is	a homomorphism from $Q$ to $D$}\,}.
\end{equation}

The \emph{hypergraph} $H(Q)$ of a CQ $Q$ is defined as follows.
Its vertex set is $\vars(Q)$, and it contains a hyperedge $\vars(\myatom)$
for every $\myatom\in\atoms(Q)$.
The \emph{Gaifman graph} $G(Q)$ of $Q$ is the undirected simple graph with vertex set $\vars(Q)$,
and it contains the edge $\set{x,y}$ whenever $x,y$ are distinct variables such that
$x,y\in\vars(\myatom)$ for some $\myatom\in\atoms(Q)$.

\emph{Acyclic} CQs and \emph{free-connex acyclic} CQs are standard notions studied in
the database theory literature
(cf.~\cite{DBLP:journals/jacm/BeeriFMY83,
	DBLP:journals/siamcomp/BernsteinG81,
	DBLP:journals/jcss/GottlobLS02,
	Bagan.2007,
	AHV-Book};
see~\cite{BGS-tutorial} for an overview).
A CQ $Q$ is called \emph{acyclic} if its hypergraph $H(Q)$ is \emph{$\alpha$-acyclic}, i.e.,
there exists an undirected tree $T=(V(T),E(T))$ (called a \emph{join-tree} of $H(Q)$ and of $Q$)
whose set of nodes $V(T)$ is precisely the set of hyperedges of $H(Q)$, and where for each variable
$x\in\vars(Q)$ the set $\setc{t\in V(T)}{x\in t}$ induces a connected subtree of $T$.
A CQ $Q$ is \emph{free-connex acyclic} if it is acyclic \emph{and} the hypergraph obtained
from $H(Q)$ by adding the hyperedge $\free(Q)$ is $\alpha$-acyclic.

Note that any CQ $Q$ that is either Boolean or full is free-connex acyclic iff it is acyclic.
However, $\Ans(x,z)\leftarrow R(x,y), R(y,z)$ is an example of a query that is acyclic,
but not free-connex acyclic.

For the special case of a \emph{binary} schema, there is a particularly simple characterization of
(free-connex) acyclic CQs that will be useful for our purpose
(see Appendix~\ref{appendix:fcACQ} for a proof).
\begin{proposition}[Folklore]\label{prop:binaryfcacqs}\upshape
	A CQ $Q$ of a \emph{binary} schema $\sigma$ is \emph{acyclic}
	iff its Gaifman graph $G(Q)$ is acyclic.
	The CQ $Q$ is \emph{free-connex acyclic} iff $G(Q)$ is acyclic
	and the following statement is true:\
	For every connected component $C$ of $G(Q)$,
	the subgraph of $C$ induced by the set $\free(Q) \cap V(C)$ is connected or empty.
\end{proposition}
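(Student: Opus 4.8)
The plan is to prove both biconditionals by analyzing how the hypergraph $H(Q)$ and its associated objects behave when the schema is binary. The key observation is that for a binary schema, every hyperedge of $H(Q)$ has at most two vertices, so $H(Q)$ is "almost" a graph: a hyperedge $\vars(\myatom)$ either is a singleton $\{x\}$ (from an atom $R(x)$ or $R(x,x)$) or a two-element set $\{x,y\}$ (from an atom $R(x,y)$ with $x\neq y$), and these two-element hyperedges are exactly the edges of the Gaifman graph $G(Q)$. So morally, $H(Q)$ and $G(Q)$ carry the same information up to singleton hyperedges and repeated atoms. I would first make this precise with a small lemma: $H(Q)$ is $\alpha$-acyclic if and only if $G(Q)$ (viewed as a hypergraph on the same vertex set) is $\alpha$-acyclic, and for graphs $\alpha$-acyclicity coincides with ordinary acyclicity (no cycles). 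The point is that singleton hyperedges $\{x\}$ can always be attached as leaves to any join-tree node containing $x$, so they never obstruct $\alpha$-acyclicity; and duplicate hyperedges can be placed adjacent to each other. Conversely, a cycle in $G(Q)$ yields a genuine obstruction to having a join-tree, by the standard characterization of $\alpha$-acyclicity (e.g.\ via the GYO reduction, or via the existence of a join-tree). This settles the first biconditional: $Q$ is acyclic iff $G(Q)$ is acyclic.

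For the second biconditional, recall $Q$ is free-connex acyclic iff $Q$ is acyclic \emph{and} the hypergraph $H^+(Q)$ obtained from $H(Q)$ by adding the hyperedge $\free(Q)$ is $\alpha$-acyclic. I would assume $Q$ is already acyclic (so $G(Q)$ is acyclic, i.e.\ a forest) and analyze when adding $\free(Q)$ preserves $\alpha$-acyclicity. If $\free(Q)$ has at most one element, nothing changes and we are done, so assume $|\free(Q)| \geq 2$. The idea is that adding a large hyperedge $S = \free(Q)$ to an $\alpha$-acyclic hypergraph whose "graph part" is a forest keeps it $\alpha$-acyclic precisely when, within each connected component $C$ of $G(Q)$, the vertices of $S \cap V(C)$ form a connected subgraph of $C$. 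One direction: if $S \cap V(C)$ induces a connected subtree of the forest $C$ for each component $C$, then I would build a join-tree for $H^+(Q)$ explicitly — take a join-tree $T$ for $H(Q)$, and attach the new node $S$ by connecting it to a node of $T$ corresponding to an edge/atom on the subtree spanned by $S \cap V(C)$ for one of the components it meets, then route the connectivity condition for each $x \in S$ through the subtree; since $S\cap V(C)$ is connected in the forest $C$ it spans a subtree, and the nodes of $T$ along that subtree can be arranged so that adding $S$ adjacent to one of them keeps every variable's occurrence-set connected. The other direction: if $S \cap V(C)$ is disconnected in $C$ for some component $C$, say it is split between two vertices $x, y$ lying in different subtrees of the forest-component or separated by a vertex $w \notin S$, then any join-tree for $H^+(Q)$ would force $S$'s node to be "between" the occurrences of $x$ and of $y$, but the unique path in the forest from $x$ to $y$ passes through $w$, and connectivity of $w$'s occurrence-set together with connectivity of $x$'s and $y$'s and of $S$'s yields a cycle in $T$ — contradiction. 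This is the standard "tree-like separator" argument.

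The main obstacle I expect is the second direction of the second biconditional: turning the intuitive "disconnectedness forces a cycle in the join-tree" picture into a rigorous argument. The cleanest route is probably to invoke a known characterization of free-connexity — for instance, the equivalence (due to Bagan–Durand–Grandjean and refined in later work) that $Q$ is free-connex acyclic iff $Q$ is acyclic and has a join-tree $T$ containing a subtree $T'$ whose union of bags is exactly $\free(Q)$ (a so-called free-connex join-tree), or the ext-$\mathcal{C}$-connex formulation — and then just check that, for a binary schema where $G(Q)$ is a forest, such a free-connex join-tree exists iff the stated component-wise connectivity condition holds. Alternatively, one can argue directly via GYO-style ear removal on $H^+(Q)$. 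I would go with whichever known characterization is cleanest to cite, reducing the whole proof to the graph-theoretic bookkeeping of the first paragraph plus this one structural check; the routine verifications (attaching singletons, handling duplicate atoms, the explicit join-tree construction) I would relegate to short paragraphs without belaboring the case analysis.
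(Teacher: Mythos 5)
Your overall plan is sound and, for most of the statement, runs parallel to the paper's proof. The reduction to self-loop-free, purely binary atoms (singleton hyperedges attach as leaves of any join-tree, duplicates sit adjacent, so only the two-element hyperedges --- i.e.\ the edges of $G(Q)$ --- matter) is exactly the paper's first Claim, and the appeal to ``$\alpha$-acyclicity $=$ ordinary acyclicity on graphs'' settles the first biconditional the same way. The ``if'' direction of the second biconditional (connectivity condition $\Rightarrow$ free-connex) is also the same construction: the paper builds the join-tree for $H(Q)+\free(Q)$ by induction on $|\vars(Q)|$, peeling off a quantified leaf of the forest, which is just a bottom-up organization of your explicit attachment of the node $\free(Q)$.

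Where you genuinely diverge is the ``only if'' direction. The paper proves it by induction on the size of a join-tree of $H(Q)+\free(Q)$, removing a leaf node $\{x,y\}$ and tracking how the connectivity invariant propagates; you instead argue the contrapositive via an obstruction: two free variables $x,y$ in a component whose unique tree-path passes through a quantified $w$ should forbid any join-tree. That route works, but be aware that the one sentence you offer for it (``connectivity of $w$'s, $x$'s, $y$'s and $S$'s occurrence-sets yields a cycle in $T$'') is not yet an argument --- $T$ is a tree, so the contradiction cannot literally be a cycle in $T$; it has to be extracted either from the chordality-plus-conformality characterization of $\alpha$-acyclicity (pick $x,y$ at minimal distance among free vertices in distinct pieces, so the tree-path plus the clique edge $\{x,y\}$ of the Gaifman graph of $H(Q)+\free(Q)$ is a chordless cycle, or for length $3$ a triangle covered by no hyperedge), or from a careful median/occurrence-subtree argument in $T$. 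Your stated fallback --- citing the free-connex join-tree characterization and verifying it against the forest structure, or running GYO ear removal on $H(Q)+\free(Q)$ --- is the right instinct and does close the gap, so I would not call the proposal wrong; but as written, this one step is a placeholder rather than a proof, whereas the paper's leaf-removal induction is fully self-contained and avoids importing any characterization beyond the join-tree definition.
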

We write $\fcACQ$ to denote the set of all free-connex acyclic CQs of schema $\sigma$.
In the following, we state two seminal results by Yannakakis~\cite{Yannakakis1981} and Bagan,
Durand, and Grandjean~\cite{Bagan.2007} that will be crucial for the main result of this paper.

\begin{theorem}[Yannakakis~\cite{Yannakakis1981}]\label{thm:Yannakakis}
	For every schema $\sigma$ there is an algorithm that receives as input a $\sigma$-db $D$ and
	a Boolean acyclic CQ $Q$ of schema $\sigma$ and takes time
	$O(|Q|{\cdot} |D|)$ to decide if $\sem{Q}(D)=\Yes$.
\end{theorem}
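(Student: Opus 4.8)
The plan is to follow the classical dynamic-programming argument on a join-tree. First I would compute a join-tree $T$ of $Q$; since $Q$ is acyclic this exists, and for a fixed query it can be found in time $O(|Q|)$ (the $|Q|$ factor in the running time already absorbs the cost of manipulating $T$). Root $T$ at an arbitrary node $t_{\mathrm{root}}$. For each node $t\in V(T)$ —which, recall, corresponds to an atom $\myatom_t$ of $Q$— we will maintain a relation $R_t$ over the variables $\vars(\myatom_t)$; initially $R_t$ is the set of tuples of the database relation named by $\myatom_t$, so $|R_t|\le|D|$ and the total initialization cost is $O(|Q|\cdot|D|)$ using the lookup-table / indexing assumption from Section~\ref{sec:preliminaries}.

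The core is the \emph{semijoin reduction} phase. Process the nodes of $T$ in a bottom-up order (children before parents): when handling a node $t$ with parent $p$, replace $R_p$ by the semijoin $R_p\Semijoin R_t$, i.e.\ discard every tuple of $R_p$ that has no matching tuple in $R_t$ on the shared variables $\vars(\myatom_t)\cap\vars(\myatom_p)$. Because $T$ is a join-tree, the connectedness (running-intersection) property guarantees that after one bottom-up sweep the relation $R_{t_{\mathrm{root}}}$ at the root is nonempty if and only if there is a homomorphism from $Q$ to $D$: a surviving tuple can always be extended, child subtree by child subtree, to a full homomorphism, and conversely any homomorphism restricts to surviving tuples everywhere. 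Each individual semijoin $R_p\Semijoin R_t$ costs $O\big(|R_p|+|R_t|\big)=O(|D|)$ time after building, in $O(|D|)$ time, a hash index on the projection of $R_t$ onto the shared variables (here we use that the arities are bounded by $\ar$, hence the shared-variable tuples have bounded length, and the uniform-cost RAM model). Summing over the $O(|Q|)$ edges of $T$ gives total time $O(|Q|\cdot|D|)$. Finally, output $\Yes$ iff $R_{t_{\mathrm{root}}}\neq\emptyset$.

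The main obstacle — really the only nontrivial point — is the correctness of the single-sweep semijoin reduction, i.e.\ proving that emptiness of the reduced root relation exactly characterizes unsatisfiability. This is where the join-tree (running-intersection) property is essential, and the proof goes by induction on the subtree rooted at $t$: one shows that after the bottom-up pass restricted to that subtree, every tuple remaining in $R_t$ extends to a homomorphism of the sub-CQ consisting of the atoms in that subtree. The subtlety is that a child may constrain a variable shared with the parent but \emph{not} with a sibling, so one must argue that the extensions produced for different children are compatible; the running-intersection property is precisely what rules out a clash, since any variable shared between two child subtrees must also appear in $\myatom_t$ and is therefore already pinned down by the tuple in $R_t$. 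One should also double-check a minor edge case: variables of $Q$ are assumed to occur in some atom, and if the Gaifman graph (equivalently the join-tree) is disconnected one runs the argument on each connected component and takes the conjunction; the bound $O(|Q|\cdot|D|)$ is unaffected.
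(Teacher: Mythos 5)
Your proposal is correct: it is the standard bottom-up semijoin-reduction proof of Yannakakis's theorem, with the key correctness point (the running-intersection property forcing any variable shared by two child subtrees to lie in the parent's bag, so the child extensions never clash) correctly identified. The paper itself offers no proof of this statement --- it imports it verbatim from Yannakakis~\cite{Yannakakis1981} --- so there is nothing to compare against; the only cosmetic mismatch is that the paper's join-tree nodes are the hyperedges $\vars(\myatom)$ rather than the atoms themselves, so if several atoms share a variable set you should initialize that node's relation as their (linear-time) intersection, which does not affect the $O(|Q|\cdot|D|)$ bound.
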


In the literature, Yannakakis' result is often described in terms of \emph{data complexity}, i.e.,
running time components that depend on the query are hidden in the O-notation.
Adopting this view, Yannakakis' result states that Boolean acyclic CQs can be evaluated with
linear-time data complexity.
Bagan, Durand, and Grandjean~\cite{Bagan.2007} achieved an influential result for the task of
enumerating the set $\sem{Q}(D)$ for (non-Boolean) \emph{free-connex acyclic} CQs $Q$:
They found an enumeration algorithm with $O(|D|)$ preprocessing time and constant delay.
This statement refers to data complexity, i.e.,
running time components that depend on the query are hidden in the O-notation.
Several proofs of (and different algorithms for) Bagan, Durand and Grandjean's theorem
are available in the literature~\cite{Bagan.2007,
	Bagan_PhD,
	BraultBaron_PhD,
	DBLP:journals/tods/OlteanuZ15,
	DynamicYannakakis2017,
	DBLP:journals/pvldb/IdrisUVVL18,
	DBLP:journals/sigmod/IdrisUVVL19};
all of them focus on data complexity.
For this paper we need a more refined statement that takes into account the combined complexity of
the problem, and that is implicit in~\cite{BGS-tutorial} (see Appendix~\ref{appendix:fcACQ}).

\begin{theorem}\label{thm:BGS-enum}
	For every schema $\sigma$ there is an enumeration algorithm that receives as input
	a $\sigma$-db $D$ and a query $Q\in\fcACQ$ and that computes within preprocessing time
	$O(|Q|{\cdot}|D|)$ a data structure for enumerating $\sem{Q}(D)$ with delay $O(|\free(Q)|)$.
\end{theorem}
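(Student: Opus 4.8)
The plan is to adapt the classical Yannakakis-style evaluation algorithm, in the refined form due to Bagan, Durand, and Grandjean, while keeping explicit track of the factors that depend on $Q$. The preprocessing phase will first make $D$ globally consistent with respect to $Q$ and then rewrite $Q$ into an equivalent \emph{full} acyclic CQ $Q'$ over a new, still globally consistent, database $D'$; the enumeration phase will produce the tuples of $\sem{Q'}(D')=\sem{Q}(D)$ by a depth-first traversal of a rooted join tree of $Q'$.

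\textbf{Preprocessing.} First I would merge any two atoms of $Q$ having the same variable set, by suitably intersecting their relations via hashing; this costs $O(|Q|\cdot|D|)$, and afterwards distinct atoms have distinct variable sets. Next I would compute (in time $O(|Q|)$ by standard algorithms) a join tree $T$ of $H(Q)$ together with a connected subtree $T_0$ of $T$ whose variables are exactly $\free(Q)$; the existence of such $T$ and $T_0$ is equivalent to $Q$ being free-connex --- for the binary schemas motivating this paper this is immediate from Proposition~\ref{prop:binaryfcacqs}, and in general it is a folklore reformulation of free-connexity. I would then run a full reducer along $T$: a bottom-up sweep in which the relation of each atom is replaced by its semijoin with the relations of its children, followed by an analogous top-down sweep. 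Each of the $O(|Q|)$ semijoins costs $O(|D|)$ in the RAM model, so this takes $O(|Q|\cdot|D|)$; this is essentially Yannakakis' algorithm (Theorem~\ref{thm:Yannakakis}) read in combined complexity, and afterwards $D$ is globally consistent, i.e.\ every tuple of every atom participates in a homomorphism from $Q$ to $D$. (For Boolean $Q$ we are then done: $\sem{Q}(D)=\Yes$ iff no relation became empty.) For non-Boolean $Q$, global consistency and the tree structure of $T$ imply that the projection onto $\free(Q)$ of the natural join $J$ of all relations of $D$ equals the natural join of the relations at the nodes of $T_0$. Hence, taking $Q'$ to be the full CQ whose atoms are the atoms of $Q$ whose variable set is a node of $T_0$, with head $\free(Q)$, and $D'$ the database of the corresponding reduced relations, we get $\sem{Q'}(D')=\sem{Q}(D)$, with $Q'$ full and acyclic, $T_0$ a join tree of $Q'$, and $|Q'|\le|Q|$. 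Moreover $D'$ is again globally consistent with respect to $Q'$, because restricting to $\free(Q)$ any homomorphism from $Q$ to $D$ yields a homomorphism from $Q'$ to $D'$. Each relation of $D'$ is a reduced copy of an original relation, so $|D'|=O(|Q|\cdot|D|)$, building $D'$ costs $O(|Q|\cdot|D|)$, and --- crucially --- since consistency is inherited, no second full reducer on $Q'$ is needed.

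\textbf{Slimming and enumeration.} To obtain delay $O(|\free(Q)|)$ I would first slim the join tree: repeatedly contract any node whose bag is contained in a neighbour's bag into that neighbour, replacing the neighbour's relation by the corresponding semijoin. Each contraction is non-increasing in size and costs $O(|D|)$, there are $O(|Q|)$ of them, and equivalence and global consistency are preserved by the same restriction argument as above. A running-intersection argument shows the resulting tree $T''$ has at most $|\vars(Q')|=|\free(Q)|$ nodes: rooting $T''$ and assigning each variable to the topmost node containing it, every node receives at least one variable, for otherwise its whole bag would lie in its parent's bag and it would have been contracted. I would then build, for every node $t$ of $T''$ with parent $p$, a hash index on $t$'s relation keyed by the projection onto $\vars(t)\cap\vars(p)$, each bucket held as a linked list; this costs $O(|D''|)=O(|\free(Q)|\cdot|D|)=O(|Q|\cdot|D|)$. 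In the enumeration phase I maintain a partial homomorphism built top-down along $T''$; since $D''$ is globally consistent, a compatible tuple exists at every node, so no search effort is ever wasted. To emit the next answer I backtrack to the deepest node that still has an untried compatible tuple (found in $O(1)$ via the bucket linked list), advance it, and greedily re-descend; this touches $O(|V(T'')|)=O(|\free(Q)|)$ nodes, each in $O(1)$ time, and the emitted tuple is the current valuation restricted to $\free(Q)=\vars(Q')$. The first tuple comes from the initial descent, and \EOE is emitted on backtracking past the root. This yields preprocessing time $O(|Q|\cdot|D|)$ and delay $O(|\free(Q)|)$.

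\textbf{The main obstacle} is the combined-complexity bookkeeping, which the standard data-complexity-only expositions suppress: (a) that $D'$ inherits global consistency from $D$, so that one avoids a second, quadratic full-reducer pass on $Q'$; (b) that the slimmed tree $T''$ has only $O(|\free(Q)|)$ nodes, which is what pins the delay at $O(|\free(Q)|)$ rather than $O(|Q|)$; and (c) that every intermediate relation --- after reduction, after the connex rewriting, and after slimming --- stays of size $O(|D|)$, so that each preprocessing step costs $O(|Q|\cdot|D|)$. I expect (a) and (b), together with the correctness of the connex rewriting identity $\sem{Q'}(D')=\sem{Q}(D)$, to be the points that need the most care.
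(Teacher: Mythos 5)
There is a genuine gap at the very start of your preprocessing, and everything downstream inherits it. You claim that free-connexity of $Q$ is equivalent to the existence of a join tree $T$ of $H(Q)$ together with a connected subtree $T_0$ whose variables are \emph{exactly} $\free(Q)$. This is false: it forces $\free(Q)$ to be a union of hyperedges of $H(Q)$, which need not hold. Take $Q = \Ans(x) \leftarrow R(x,y)$. This query is free-connex acyclic (adding the hyperedge $\set{x}$ keeps the hypergraph $\alpha$-acyclic; for binary schemas it also satisfies Proposition~\ref{prop:binaryfcacqs}, since a single free vertex induces a connected subgraph), yet every join tree of $H(Q)$ consists of the single node $\set{x,y}$, so no connected subtree has variable set $\set{x}$. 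The same failure occurs for, e.g., $\Ans(x)\leftarrow R(x,y),S(y,z)$, so your parenthetical that the binary case is ``immediate from Proposition~\ref{prop:binaryfcacqs}'' does not hold either. Since $Q'$, $D'$, the identity $\sem{Q'}(D')=\sem{Q}(D)$, and the bound $|V(T'')|\leq|\vars(Q')|=|\free(Q)|$ are all defined in terms of $T_0$, the argument does not go through as written.

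The repair is the standard one, and it is exactly what the paper's own (much shorter) proof imports from the cited tutorial: one must pass from join trees of $H(Q)$ to free-connex \emph{generalized hypertree decompositions} of width 1, in which a bag may be a \emph{proper subset} of the variable set of its covering atom, and the relation attached to a node is the \emph{projection} of the (fully reduced) covering relation onto the bag. The free-connex witness is then a connected set $U$ of decomposition nodes with $\bigcup_{t\in U}\Bag(t)=\free(Q)$ and $|U|\leq|\free(Q)|$; in the example above, $U$ is a single node with bag $\set{x}$ covered by $R(x,y)$, carrying $\pi_x(R^D)$ after reduction. With that modification your remaining steps (inheritance of global consistency, the ``topmost node'' argument bounding the number of nodes by $|\free(Q)|$, the bucket-indexed backtracking enumeration with $O(1)$ work per node) are sound and reproduce the construction behind the result the paper cites; the missing idea is precisely the bag/cover distinction and the projection onto sub-atom bags, not the enumeration machinery.
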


Our main result provides a solution for the problem $\IndexingProblem{\sigma}{\fcACQ}$
for \emph{binary} schemas $\sigma$.
The data structure $\DSD$ that we build for a given database $D$ during the indexing phase will
provide a new database $\ciD$.
This database $\ciD$ is potentially much smaller than $D$, and it will allow us to improve
the preprocessing time provided by Theorem~\ref{thm:BGS-enum} to $O(|Q|{\cdot}|\ciD|)$.
At a first glance one may be tempted to believe that it should be straightforward to generalize our
result from binary to arbitrary schemas by representing databases of a non-binary schema $\sigma$
as databases of a suitably chosen binary schema $\sigma'$.
However, the notion of fc-ACQs is quite subtle, and an fc-ACQ of schema $\sigma$ might translate
into a CQ of schema $\sigma'$ that is not an fc-ACQ
(see Appendix~\ref{appendix:BinaryToArbitrary} for an example), hence prohibiting to
straightforwardly reduce the case of general schemas to our results for fc-ACQs over binary schemas.  
\section{Construction of the color-index}\label{sec:mainresult}
For the remainder of this paper let us fix an arbitrary \emph{binary} schema $\sigma$,
i.e., every $R\in\sigma$ has arity $\ar(R)\leq 2$.
Our main result is a solution for $\IndexingProblemOurs$.
In this section, we describe the indexing phase of our solution.
Upon input of a $\sigma$-db $D$, we will build a data structure $\DSD$
that we call the \emph{color-index} for $\fcACQ$.
To describe this data structure, we need the following~notions.

\paragraph{Graph representation}
Let $D$ be a $\sigma$-db that we receive as input.
We start by encoding self-loops in $D$ into unary predicates.
Let $\sigmaOne$ be the schema obtained by inserting into $\sigma$ a new unary relation symbol
$S_R$ for every $R \in\sigma$ with $\ar(R)=2$.
We turn $D$ into a $\sigmaOne$-db $\DOne$ by letting $R^{\DOne}=R^D$ for every $R\in\sigma$,
and $S_R^{\DOne} \deff \setc{(v)}{(v,v)\in R^D}$ for every $R\in\sigma$ with $\ar(R)=2$.
Thus, relation $S_R$ of $\DOne$ consists of all elements $v$ such that $(v,v)\in R^D$
(i.e., $(v,v)$ is a \emph{self-loop} w.r.t.\ $R$ in~$D$).
Note that self-loops are still present in each binary relation $R$,
and $S_R$ is a redundant representation of these self-loops.
Clearly, the size of $\DOne$ is linear in the size of $D$.

We represent $\DOne$ as a vertex-labeled and edge-labeled directed graph
$\GOne=(\VOne,\EOne,\vl,\el)$ as follows:
$\VOne=\Adom(D)$ and $\EOne$ consists of all directed edges $(v,w)$ such that $v\neq w$ and
$(v,w)\in R^{\DOne}$ or $(w,v)\in R^{\DOne}$ for some $R\in \sigmaOne$ with $\ar(R)=2$.
Further, we define the vertex- and edge-labeling functions $\vl$ and $\el$:
\begin{alignat*}{3}
	\vl(v) \;\;&\deff\;\;\;&& 
		\setc{\,U\in\sigmaOne}{\ar(U)=1,\ (v)\in U^{\DOne}\,} \smallskip\\
	\el(v,w) \;\;&\deff&& 
		\setc{\,(R,\MyPlus)}{R\in\sigmaOne,\ \ar(R)=2,\ (v,w)\in R^{\DOne}\,} \ \,\cup\\
		& &&
		\setc{\,(R,\MyMinus)}{R\in\sigmaOne,\ \ar(R)=2,\ (w,v)\in R^{\DOne}\,}
\end{alignat*}
for every vertex $v\in\VOne$ and every edge $(v,w)\in\EOne$.

Note that $(\VOne,\EOne)$ is a self-loop-free simple and symmetric directed graph.
Self-loops present in $D$ are not represented by (labeled) edges but by vertex-labels in $\GOne$
(to enable this, we moved over from $D$ to $\DOne$).
For every edge $(v,w)\in \EOne$, the label $\el(v,w)$ determines the label $\el(w,v)$ of
the corresponding ``backwards edge'' as follows.
If $\el(v,w)=\lambda$, then $\el(w,v)=\Dual{\lambda}$ for the \emph{dual} label
$\Dual{\lambda}\deff \setc{(R,\MyPlus)}{(R,\MyMinus)\in \lambda}
\cup\setc{(R,\MyMinus)}{(R,\MyPlus)\in \lambda}$.

Let $\VLabels\deff \img(\vl)$ and $\ELabels\deff\img(\el)$ be the sets of labels that are
actually used as vertex labels and edge labels, respectively, of~$\GOne$.
One can easily check that $\GOne$ can be generated in time $O(|\DOne|)$, and $|\EOne|=O(|\DOne|)$.
Furthermore, although the number of potential labels of $\vl$ and $\el$
could be exponential in $|\sigmaOne|$, it holds that the number of actual labels
$|\VLabels|$ and $|\ELabels|$ is at most linear in $|\DOne|$.

\begin{exampleWithEndmarker}\label{example:running1}
	Consider the following running example about movies and actors
	taken almost verbatim from~\cite[Fig. 2]{AnglesABHRV17}.
	The schema $\exsigma$ has four binary relations symbols
	\exrel{Plays}, \exrel{ActedBy}, \exrel{Movie}, and \exrel{Screentime}
	(denoted by \exrel{P}, \exrel{A}, \exrel{M}, and \exrel{S}),
	and the database $\exdb$ has the following relations and tuples:
	\smallskip%
	\begin{center}%
	\begin{tabular}{cccc}
		\begin{tabular}{r|cc}
			\exrel{P} & &  \\ \hline
			& \exdata{PS} & \exdata{LM} \\
			& \exdata{PS} & \exdata{MM}
		\end{tabular} \qquad
		&
		\begin{tabular}{r|cc}
			\exrel{A} & &  \\ \hline
			& \exdata{LM} & \exdata{PS} \\
			& \exdata{MM} & \exdata{PS}
		\end{tabular} \qquad
		&
		\begin{tabular}{r|cc}
			\exrel{M} & &  \\ \hline
			& \exdata{LM} & \exdata{Dr.S} \\
			& \exdata{MM} & \exdata{Dr.S} 
		\end{tabular} \qquad
		&
		\begin{tabular}{r|cc}
			\exrel{S} & &  \\ \hline
			& \exdata{LM} & \exdata{18m} \\
			& \exdata{MM} & \exdata{34m}
		\end{tabular}
	\end{tabular}
	\end{center}
	\smallskip
	where Peter Sellers (\exdata{PS}) is an actor who plays as Lionel Mandrake (\exdata{LM}) 
	and Merkin Muffley (\exdata{MM}) in the same movie ``Dr.\ Strangelove'' (\exdata{Dr.S}).
	Each character appears 18 minutes (\exdata{18m}) and 34 minutes (\exdata{34m}) 
	in the movie, respectively.
	In Figure~\ref{fig:example:a}, we show its corresponding vertex-labeled and edge-labeled
	directed graph $\exgraph$.
\end{exampleWithEndmarker}

\begin{figure*}[t]
	\centering
	\captionsetup[subfigure]{justification=centering}
	\begin{subfigure}[t]{0.38\textwidth}
		\centering
		\begin{tikzpicture}[
				defaultstyle,
				level distance=2.5cm,
				sibling distance=2cm,
				mnode/.style={draw, rounded corners, minimum width=6mm},
				edge from parent/.style={draw=none},
				enode/.style={inner sep=1mm, sloped, at start, font={\tiny }},
				medge/.style={->},
			]
			\node[mnode] (PS) at (0,0) {\exdata{PS}}
			child {
				[level distance=1.5cm]
				node[mnode] (LM) {\exdata{LM}} 
				child { node[mnode] (18m) {\exdata{18m}} }
				child { node[mnode] (DrS) {\exdata{Dr.S}}}	
			}
			child {
				[level distance=1.5cm]
				node[mnode] (MM) {\exdata{MM}}
				child { (DrS) }
				child { node[mnode] (34m) {\exdata{34m}} }
			};

			\draw ([xshift=-1.5ex] PS.south) 
				edge[medge] 
				node[enode, anchor=south east] {$(\exrel{P}, \rightarrow), (\exrel{A}, \leftarrow)$} 
			([xshift=-.5ex] LM.north);
			\draw ([xshift=.5ex] LM.north) 
				edge[medge] 
				node[enode, anchor=north west] {$(\exrel{P}, \leftarrow), (\exrel{A}, \rightarrow)$} 
			([xshift=-.5ex] PS.south);
			\draw ([xshift=-.5ex] MM.north) 
				edge[medge] 
				node[enode, anchor=north east] {$(\exrel{P}, \leftarrow), (\exrel{A}, \rightarrow)$}
			([xshift=.5ex] PS.south);
			\draw ([xshift=1.5ex] PS.south) 
				edge[medge] 
				node[enode, anchor=south west] {$(\exrel{P}, \rightarrow), (\exrel{A}, \leftarrow)$}
			([xshift=.5ex] MM.north);
			\draw ([xshift=-1.5ex] DrS.north) 
				edge[medge] 
				node[enode, anchor=north east] {$(\exrel{M}, \leftarrow)$}
			([xshift=.5ex] LM.south);
			\draw ([xshift=1.5ex] LM.south) 
				edge[medge] 
				node[enode, anchor=south west] {$(\exrel{M}, \rightarrow)$}
			([xshift=-.5ex] DrS.north);
			\draw ([xshift=-1.5ex] MM.south) 
				edge[medge] 
				node[enode, anchor=south east] {$(\exrel{M}, \rightarrow)$} 
			([xshift=.5ex] DrS.north);
			\draw ([xshift=1.5ex] DrS.north) 
				edge[swap, medge] 
				node[enode, anchor=north west] {$(\exrel{M}, \leftarrow)$} 
			([xshift=-.5ex] MM.south);
			\draw ([xshift=-.5ex] 34m.north) 
				edge[medge]
				node[enode, anchor=north east] {$(\exrel{S}, \leftarrow)\,$} 
			([xshift=.5ex] MM.south);
			\draw ([xshift=1.5ex] MM.south) 
				edge[swap, medge] 
				node[enode, anchor=south west] {$(\exrel{S}, \rightarrow)$} 
			([xshift=.5ex] 34m.north);
			\draw ([xshift=-1.5ex] LM.south) 
				edge[medge] 
				node[enode, anchor=south east] {$(\exrel{S}, \rightarrow)$} 
			([xshift=-.5ex] 18m.north);
			\draw ([xshift=.5ex] 18m.north) 
				edge[swap, medge] 
				node[enode, anchor=north west] {$\,(\exrel{S}, \leftarrow)$} 
			([xshift=-.5ex] LM.south);
		\end{tikzpicture}
		\caption{}\label{fig:example:a}
	\end{subfigure}
	\hfill
	\begin{subfigure}[t]{0.38\textwidth}
		\centering
		\begin{tikzpicture}[
				defaultstyle,
				level distance=2.5cm,
				sibling distance=2cm,
				mnode/.style={draw, rounded corners, minimum width=6mm},
				edge from parent/.style={draw=none},
				enode/.style={inner sep=1mm, sloped, at start, font={\tiny }},
				medge/.style={->},
			]
			\node[mnode, fill=blue!40] (PS) at (0,0) {\exdata{B}}
			child {
				[level distance=1.5cm]
				node[mnode, fill=red!40] (LM) {\exdata{R}} 
				child { node[mnode, fill=yellow!40] (18m) {\exdata{Y}} }
				child { node[mnode, fill=green!40] (DrS) {\exdata{G}}}	
			}
			child {
				[level distance=1.5cm]
				node[mnode, fill=red!40] (MM) {\exdata{R}}
				child { (DrS) }
				child { node[mnode, fill=yellow!40] (34m) {\exdata{Y}} }
			};

			\draw ([xshift=-1.5ex] PS.south) 
				edge[medge] 
				node[enode, anchor=south east] {$(\exrel{P}, \rightarrow), (\exrel{A}, \leftarrow)$} 
			([xshift=-.5ex] LM.north);
			\draw ([xshift=.5ex] LM.north) 
				edge[medge] 
				node[enode, anchor=north west] {$(\exrel{P}, \leftarrow), (\exrel{A}, \rightarrow)$} 
			([xshift=-.5ex] PS.south);
			\draw ([xshift=-.5ex] MM.north) 
				edge[medge] 
				node[enode, anchor=north east] {$(\exrel{P}, \leftarrow), (\exrel{A}, \rightarrow)$}
			([xshift=.5ex] PS.south);
			\draw ([xshift=1.5ex] PS.south) 
				edge[medge] 
				node[enode, anchor=south west] {$(\exrel{P}, \rightarrow), (\exrel{A}, \leftarrow)$}
			([xshift=.5ex] MM.north);
			\draw ([xshift=-1.5ex] DrS.north) 
				edge[medge] 
				node[enode, anchor=north east] {$(\exrel{M}, \leftarrow)$}
			([xshift=.5ex] LM.south);
			\draw ([xshift=1.5ex] LM.south) 
				edge[medge] 
				node[enode, anchor=south west] {$(\exrel{M}, \rightarrow)$}
			([xshift=-.5ex] DrS.north);
			\draw ([xshift=-1.5ex] MM.south) 
				edge[medge] 
				node[enode, anchor=south east] {$(\exrel{M}, \rightarrow)$} 
			([xshift=.5ex] DrS.north);
			\draw ([xshift=1.5ex] DrS.north) 
				edge[swap, medge] 
				node[enode, anchor=north west] {$(\exrel{M}, \leftarrow)$} 
			([xshift=-.5ex] MM.south);
			\draw ([xshift=-.5ex] 34m.north) 
				edge[medge]
				node[enode, anchor=north east] {$(\exrel{S}, \leftarrow)$} 
			([xshift=.5ex] MM.south);
			\draw ([xshift=1.5ex] MM.south) 
				edge[swap, medge] 
				node[enode, anchor=south west] {$(\exrel{S}, \rightarrow)$} 
			([xshift=.5ex] 34m.north);
			\draw ([xshift=-1.5ex] LM.south) 
				edge[medge] 
				node[enode, anchor=south east] {$(\exrel{S}, \rightarrow)$} 
			([xshift=-.5ex] 18m.north);
			\draw ([xshift=.5ex] 18m.north) 
				edge[swap, medge] 
				node[enode, anchor=north west] {$(\exrel{S}, \leftarrow)$} 
			([xshift=-.5ex] LM.south);
		\end{tikzpicture}
		\caption{}\label{fig:example:b}
	\end{subfigure}
	\hfill 
	\begin{subfigure}[t]{0.2\textwidth}
		\centering
		\begin{tikzpicture}[
				defaultstyle,
				level distance=2.5cm,
				sibling distance=2cm,
				mnode/.style={draw, rounded corners, minimum width=6mm},
				edge from parent/.style={draw=none},
				enode/.style={inner sep=1mm, sloped, midway, font={\tiny }},
				medge/.style={->},
			]
			\node[mnode, fill=blue!40] (PS) at (0,0) {\exdata{B}}
			child {
				[level distance=1.5cm]
				node[mnode, fill=red!40] (LM) {\exdata{R}} 
				child { node[mnode, fill=yellow!40] (18m) {\exdata{Y}} }
				child { node[mnode, fill=green!40] (DrS) {\exdata{G}}}	
			};

			\draw ([xshift=-.5ex] PS.south) 
				edge[medge] 
				node[enode, anchor=north] {$(\exrel{P}, \rightarrow), (\exrel{A}, \leftarrow)$} 
			([xshift=-.5ex] LM.north);
			\draw ([xshift=.5ex] LM.north) 
				edge[medge] 
				node[enode, anchor=south] {$(\exrel{P}, \leftarrow), (\exrel{A}, \rightarrow)$} 
			([xshift=.5ex] PS.south);
			\draw ([xshift=-1.5ex] DrS.north) 
				edge[medge] 
				node[enode, anchor=north] {$(\exrel{M}, \leftarrow)$}
			([xshift=.5ex] LM.south);
			\draw ([xshift=1.5ex] LM.south) 
				edge[medge] 
				node[enode, anchor=south] {$(\exrel{M}, \rightarrow)$}
			([xshift=-.5ex] DrS.north);
			\draw ([xshift=-1.5ex] LM.south) 
				edge[medge] 
				node[enode, anchor=south] {$(\exrel{S}, \rightarrow)$} 
			([xshift=-.5ex] 18m.north);
			\draw ([xshift=.5ex] 18m.north) 
				edge[swap, medge] 
				node[enode, anchor=north] {$(\exrel{S}, \leftarrow)$} 
			([xshift=-.5ex] LM.south);
		\end{tikzpicture}
		\caption{}\label{fig:example:c}
	\end{subfigure}

	\caption{%
		(a) shows the graph $\exgraph$ representing the database $\exdb$ from
		Example~\ref{example:running1} (all vertices $v$ have the same label $\vl(v)=\emptyset$),
		(b) shows a coarsest stable coloring of $\exgraph$,
		and (c) shows a graph representing the color database $\exdbcol$ of $\exdb$.%
	}\label{fig:example}
	\vspace{-3mm}
\end{figure*}

\paragraph{Color refinement}
Once having produced $\GOne$, we apply to it a suitable variant of the well-known
\emph{color refinement} algorithm.
A high-level description of this algorithm, basically taken from~\cite{BBG-ColorRefinement},
is as follows.
The algorithm classifies the vertices of $\GOne$ by iteratively refining a coloring of the vertices.
Initially, each vertex $v$ has color $\vl(v)$.
Then, in each step of the iteration, two vertices $v,w$ that currently have the same color get
different refined colors if for some color $c$ of the currently used vertex-colors and
for some edge-label $\lambda\in\ELabels$,
we have $\numSucc{\lambda}{v}{c} \neq \numSucc{\lambda}{w}{c}$.
Here, for any vertex $u$ of $\GOne$, we let $\numSucc{\lambda}{u}{c}\deff |\NSucc{\lambda}{u}{c}|$
where $\NSucc{\lambda}{u}{c}$ denotes the set of all outgoing $\EOne$-neighbors $u'$ of $u$
(i.e., $(u,u')\in \EOne$) such that $\el(u,u') = \lambda$ and $u'$ has color $c$.
The process stops if no further refinement is achieved,
resulting in a \emph{stable coloring} of the vertices.

In~\cite{BBG-ColorRefinement}, Berkholz, Bonsma, and Grohe showed that this can be implemented
in such a way that it runs in time $O((|\VOne|+|\EOne|)\log |\VOne|)$.
To formally state their result, we need the following notation.
Let $f\colon \VOne \to S$ be a function, where $S$ is any set.
We say that a function $\col\colon \VOne \to C$ \emph{refines} $f$ iff for all
$v,w\in \VOne$ with $\col(v)=\col(w)$ we have $f(v)=f(w)$.
Further, we say that $\col$ \emph{is stable} iff for all $v,w\in \VOne$ with $\col(v)=\col(w)$,
and for every edge-label $\lambda\in \ELabels$ and for every color $c\in C$ we have:
$\numSucc{\lambda}{v}{c}=\numSucc{\lambda}{w}{c}$.
Finally, $\col$ is a \emph{coarsest stable coloring that refines $\vl$} iff $\col$ is stable,
$\col$ refines $\vl$, and for every coloring $\colAlt\colon\VOne\to C'$ (for some set $C'$)
that is stable and that refines $\vl$ we have: \,$\colAlt$ refines~$\col$.
The main result of~\cite{BBG-ColorRefinement} implies the following theorem
(see Appendix~\ref{appendix:proof_of_runtime} for details).

\begin{theorem}[\cite{BBG-ColorRefinement}]\label{thm:ColorRefinement}
  For any vertex-labeled and edge-labeled directed simple graph $\GOne=(\VOne,\EOne,\vl,\el)$, within time $O((|\VOne|+|\EOne|) \log|\VOne|)$ one can compute a coloring $\col\colon\VOne \to C$ (for a suitably chosen set $C$), such that $\col$ is a coarsest stable coloring that refines $\vl$.
\end{theorem}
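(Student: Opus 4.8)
The plan is to reduce the statement to the corresponding result of Berkholz, Bonsma, and Grohe~\cite{BBG-ColorRefinement} for \emph{vertex}-colored undirected graphs, by means of a small gadget that pushes the edge labels, together with the orientation of the edges, into fresh vertices. From $\GOne=(\VOne,\EOne,\vl,\el)$ I would build an undirected vertex-colored graph $\strucB$ as follows: for every directed edge $(v,w)\in\EOne$ introduce a fresh \emph{edge vertex} $e_{(v,w)}$, let the vertex set of $\strucB$ be $\VOne\cup\setc{e_{(v,w)}}{(v,w)\in\EOne}$, and replace each edge $(v,w)$ of $\GOne$ by the undirected path $v$--$e_{(v,w)}$--$e_{(w,v)}$--$w$ of length~$3$; thus $e_{(v,w)}$ has degree exactly~$2$, its neighbors being the original vertex $v$ and the edge vertex $e_{(w,v)}$. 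As initial coloring $\col_0$ of $\strucB$, assign to each $v\in\VOne$ the color $(\mathtt 0,\vl(v))$ and to each $e_{(v,w)}$ the color $(\mathtt 1,\el(v,w))$; the flags $\mathtt 0,\mathtt 1$ keep original vertices and edge vertices in separate classes forever, and using a path of length~$3$ rather than a single subdivision vertex is exactly what records which endpoint of an edge is its tail. There are only $O(|\VLabels|+|\ELabels|)=O(|\DOne|)$ distinct initial colors, so $\col_0$ can be normalized to consecutive integers in linear time.

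Next I would apply~\cite{BBG-ColorRefinement} to $(\strucB,\col_0)$, obtaining a coarsest stable coloring $\col'$ of $\strucB$ refining $\col_0$, and output its restriction $\col\deff\restrict{\col'}{\VOne}$. Since $\strucB$ has $|\VOne|+|\EOne|$ vertices and $\tfrac32|\EOne|$ edges, and $|\EOne|\le|\VOne|^2$ gives $\log(|\VOne|+|\EOne|)=O(\log|\VOne|)$ for $|\VOne|\ge 2$ (the cases $|\VOne|\le 1$ being trivial), constructing $\strucB$, calling~\cite{BBG-ColorRefinement}, and restricting together take time $O\big((|\VOne|+|\EOne|)\log|\VOne|\big)$.

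The substance is the correctness proof: that $\col$ and a coarsest stable coloring $\mu$ of $\GOne$ refining $\vl$ induce the same partition of $\VOne$. The key lemma is that $\col'(e_{(v,w)})=\col'(e_{(v',w')})$ implies $\col'(v)=\col'(v')$, $\el(v,w)=\el(v',w')$, and $\col'(w)=\col'(w')$; this follows by applying the stability of $\col'$ first at $e_{(v,w)}$ --- which, since $\col'$ refines $\col_0$ and therefore never confuses the original neighbor with the edge-vertex neighbor, forces $\col'(v)=\col'(v')$, $\el(v,w)=\el(v',w')$, and $\col'(e_{(w,v)})=\col'(e_{(w',v')})$ --- and then once more at $e_{(w,v)}$. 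Because the $\strucB$-neighbors of a vertex $v\in\VOne$ are precisely the edge vertices $e_{(v,u)}$ with $(v,u)\in\EOne$, and the lemma gives each $\col'$-class of edge vertices a well-defined pair $(\lambda,\gamma)$ with $\lambda\in\ELabels$ and $\gamma$ a $\col$-class, summing the stability equalities for $\col'$ on $\strucB$ over all such classes shows that $\col$ is stable on $\GOne$; moreover $\col$ refines $\vl$ because $\col'$ refines $\col_0$. Hence $\col$ refines $\mu$. For the converse, lift $\mu$ to $\strucB$ by coloring $v\in\VOne$ with $(\mathtt 0,\mu(v))$ and $e_{(v,w)}$ with $(\mathtt 1,\mu(v),\el(v,w),\mu(w))$; this lifted coloring is stable on $\strucB$ and refines $\col_0$, hence refines $\col'$, so $\mu$ refines $\col$. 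Thus $\col$ equals $\mu$ as a partition of $\VOne$, i.e., $\col$ is a coarsest stable coloring of $\GOne$ refining $\vl$.

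I expect the most delicate point to be the counting step that turns the stability of $\col'$ on the undirected graph $\strucB$ into the \emph{successor}-based stability of $\col$ on the directed graph $\GOne$; this is where one uses the key lemma together with the hypothesis that $(\VOne,\EOne)$ is symmetric with dual edge labels, so that outgoing edges already carry all the relevant information. One also has to be careful that the gadget genuinely encodes edge orientation --- a single subdivision vertex per edge would not, which is why each edge becomes a path of length~$3$ with two distinctly colored middle vertices. The rest --- the $O(|\DOne|)$ bound on the number of initial colors, normalizing $\col_0$, and $\log(|\VOne|+|\EOne|)=O(\log|\VOne|)$ --- is routine bookkeeping.
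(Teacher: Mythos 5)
Your proof is correct and follows the same basic strategy as the paper's: subdivide each directed edge into fresh vertices that carry the edge label as a vertex color, run the Berkholz--Bonsma--Grohe algorithm on the resulting vertex-colored graph, and prove that stable colorings of the two graphs correspond --- in particular, that same-colored subdivision vertices force same-colored endpoints and equal labels (your key lemma), and that a coarsest stable coloring of $\GOne$ lifts to a stable coloring of the auxiliary graph refining the initial one, which yields coarsest-ness of the restriction. The only substantive difference is the gadget. The paper keeps the auxiliary graph \emph{directed}: each $(u,w)\in\EOne$ becomes the directed $2$-path $u\to v_{(u,w)}\to w$ with $v_{(u,w)}$ colored $\el(u,w)$, and the out-neighbor-based (directed) version of~\cite{BBG-ColorRefinement} is invoked. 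You instead build an \emph{undirected} graph and replace each dual pair of edges by the length-$3$ path $v$--$e_{(v,w)}$--$e_{(w,v)}$--$w$, so that orientation is recoverable from which of the two distinctly colored middle vertices an endpoint touches. Your variant reduces to the undirected statement of~\cite{BBG-ColorRefinement} (arguably the more standard black box) at the cost of one extra application of stability in the key lemma, to propagate the color equality across the second middle vertex; the asymptotics are identical, since both auxiliary graphs have $|\VOne|+|\EOne|$ vertices and $\Theta(|\EOne|)$ edges. Both arguments use the symmetry of $\EOne$ and the duality of backward labels in the same place, namely when verifying that the lifted coloring is stable at the subdivision vertices.
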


In the indexing phase, we apply the algorithm provided
by Theorem~\ref{thm:ColorRefinement} to $\GOne$.
Let $\fcr(\DOne)$ denote the time taken for this, and note that
$\fcr(\DOne)\in O(|\DOne| \cdot\log |\Adom(\DOne)|)$.
Note that this is the \emph{worst-case} complexity; in case that $\DOne$ has a particularly simple
structure, the algorithm may terminate already in time $O(|\DOne|)$.

As a result we obtain a \emph{coarsest stable coloring $\col\colon\Adom(\DOne)\to C$
that refines $\vl$}, for a suitable set $C$.
Note that the number $|\img(\col)|$ of colors used by $\col$ is the smallest number possible
in order to obtain a stable coloring that refines $\vl$, and, moreover,
the coarsest stable coloring that refines $\vl$ is \emph{unique} up to a renaming of colors.

\begin{exampleWithEndmarker}\label{example:running2}
	Figure~\ref{fig:example:b} displays a stable coloring of the graph $\exgraph$
	(Figure~\ref{fig:example:a}) that corresponds to the database $\exdb$ presented in
	Example~\ref{example:running1}.
	For presentation, vertices are labeled with their corresponding colors
	(blue (\exdata{B}), red (\exdata{R}), green (\exdata{G}), yellow (\exdata{Y})).
	The reader can check that this is the coarsest stable coloring that refines $\vl$
	(in this case $\vl$ is trivial).
\end{exampleWithEndmarker}

Note that although nodes with the same color are ``isomorphic'' in the previous example
(in the sense that there is an isomorphism of the graph onto itself that maps a node of this color
to any other node of the same color), this is not always true.
One can see that in the graph
\raisebox{-0.8mm}{\mbox{
\begin{tikzpicture}[node distance=2mm,text width=5mm,
	wnode/.style={
		circle,
		inner sep=0pt,
		align=center,
		draw=black,
		scale=0.15,
		fill=white
	},
	bnode/.style={
		circle,
		inner sep=0pt,
		align=center,
		draw=black,
		scale=0.15,
		fill=black
	},
	stextnode/.style={
		circle,
		inner sep=-4pt,
		outer sep=0pt,
		draw=black,
		align=center,
		scale=1
	},
	mline/.style={-, line width=0.2pt}]
	\node (u1) at (0,0.13)  [wnode] {};
	\node (u2) at (0,-0.13) [wnode] {};
	\node (u3) at (0.2,0.13)  [bnode] {};
	\node (u4) at (0.2,-0.13) [bnode] {};
	\node (u5) at (0.4,0.13)  [wnode] {};
	\node (u6) at (0.4,-0.13) [wnode] {};
	
	\draw [mline] (u1) to (u2);
	\draw [mline] (u1) to (u3);
	\draw [mline] (u2) to (u4);
	\draw [mline] (u3) to (u4);
	\draw [mline] (u3) to (u5);
	\draw [mline] (u4) to (u6);
	\draw [mline] (u5) to (u6);
	
	\begin{scope}[xshift=6mm]
		\node (v1) at (0,0.13)  [wnode] {};
		\node (v2) at (0,-0.13) [wnode] {};
		\node (v3) at (0.2,0)  [bnode] {};
		\node (v4) at (0.4,0) [bnode] {};
		\node (v5) at (0.6,0.13)  [wnode] {};
		\node (v6) at (0.6,-0.13) [wnode] {};
		
		\draw [mline] (v1) to (v2);
		\draw [mline] (v1) to (v3);
		\draw [mline] (v2) to (v3);
		\draw [mline] (v3) to (v4);
		\draw [mline] (v4) to (v5);
		\draw [mline] (v4) to (v6);
		\draw [mline] (v5) to (v6);
	\end{scope}
\end{tikzpicture}}} 
all black nodes will receive the same color, although the black nodes on the left component are not
``isomorphic'' to the black nodes in the right component.
 
For later use let us summarize some crucial properties of $\col$.
Analogously to $\NSucc{\lambda}{u}{c}$ and $\numSucc{\lambda}{u}{c}$, let
$\numPred{\lambda}{u}{c}\deff |\NPred{\lambda}{u}{c}|$,
where $\NPred{\lambda}{u}{c}$ denotes all incoming $\EOne$-neighbors $u'$ of $u$
(i.e., $(u',u)\in \EOne$) such that $\el(u',u) = \lambda$ and $\col(u') = c$.

\begin{fact}\label{obs:col}\label{fact:col}
	For all $v,w\in\VOne$ with $\col(v)=\col(w)$, we have:\vspace{-\medskipamount}
	\begin{mea}
		\item\label{item:obs:col:a}
		$\vl(v)=\vl(w)$, and
		\item\label{item:obs:col:b}
		$\numSucc{\lambda}{v}{c}=\numSucc{\lambda}{w}{c}$ for all $\lambda\in\ELabels$
		and all $c\in C$, and
		\item\label{item:obs:col:c}
		$\numPred{\lambda}{v}{c}=\numPred{\lambda}{w}{c}$ for all $\lambda\in\ELabels$
		and all $c\in C$.
	\end{mea}
\end{fact}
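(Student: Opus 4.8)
The plan is to obtain all three items by directly unfolding the defining properties of the coloring $\col$ furnished by Theorem~\ref{thm:ColorRefinement} --- namely that $\col$ refines $\vl$ and that $\col$ is stable --- together with the symmetry of the graph $\GOne$.

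Item~\ref{item:obs:col:a} is immediate: by the very definition of ``$\col$ refines $\vl$'', any two vertices $v,w$ with $\col(v)=\col(w)$ satisfy $\vl(v)=\vl(w)$. Item~\ref{item:obs:col:b} is likewise immediate, being exactly the statement that $\col$ is stable, as spelled out just before Theorem~\ref{thm:ColorRefinement}.

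For item~\ref{item:obs:col:c} I would reduce the assertion about incoming neighbors to item~\ref{item:obs:col:b}, exploiting that $(\VOne,\EOne)$ is symmetric and that the labels of an edge and of its reverse are dual, i.e.\ $\el(u',u)=\Dual{\el(u,u')}$. The key observation is that for every vertex $u$, every $\lambda\in\ELabels$, and every color $c\in C$, the sets $\NPred{\lambda}{u}{c}$ and $\NSucc{\Dual{\lambda}}{u}{c}$ coincide: indeed $(u',u)\in\EOne$ with $\el(u',u)=\lambda$ holds iff $(u,u')\in\EOne$ with $\el(u,u')=\Dual{\lambda}$ holds (using $\Dual{\Dual{\lambda}}=\lambda$), while the requirement $\col(u')=c$ is unaffected; hence $\numPred{\lambda}{u}{c}=\numSucc{\Dual{\lambda}}{u}{c}$. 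One also notes that $\Dual{\lambda}\in\ELabels$ whenever $\lambda\in\ELabels$: if $\lambda=\el(a,b)$ for some edge $(a,b)\in\EOne$, then $(b,a)\in\EOne$ and $\el(b,a)=\Dual{\lambda}$, so $\Dual{\lambda}\in\img(\el)=\ELabels$. Now, given $v,w$ with $\col(v)=\col(w)$ and arbitrary $\lambda\in\ELabels$, $c\in C$, applying item~\ref{item:obs:col:b} to the edge-label $\Dual{\lambda}$ gives $\numSucc{\Dual{\lambda}}{v}{c}=\numSucc{\Dual{\lambda}}{w}{c}$, and the identity above turns this into $\numPred{\lambda}{v}{c}=\numPred{\lambda}{w}{c}$, as claimed.

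Since every step is a routine unfolding of definitions, I anticipate no genuine obstacle; the only point requiring a bit of care is the symmetry/duality bookkeeping in item~\ref{item:obs:col:c}, in particular checking that the dual $\Dual{\lambda}$ of a used edge-label is again a used edge-label, so that stability (item~\ref{item:obs:col:b}) may legitimately be invoked for it.
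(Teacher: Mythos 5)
Your proposal is correct and follows essentially the same route as the paper's own proof: items (a) and (b) are exactly the refinement and stability properties of $\col$, and item (c) is reduced to (b) via the identity $\NPred{\lambda}{u}{c}=\NSucc{\Dual{\lambda}}{u}{c}$ coming from the symmetry of $\EOne$ and the duality of edge-labels. Your extra check that $\Dual{\lambda}\in\ELabels$ is a small detail the paper leaves implicit, but nothing more.
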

\begin{proof}
	\eqref{item:obs:col:a} and~\eqref{item:obs:col:b} are met because $\col$ is %
	a refinement of $\vl$ and $\col$ is stable.
	\eqref{item:obs:col:c} follows from~\eqref{item:obs:col:b} because $\EOne$ is symmetric %
	and the label of a backward edge is the dual of the label of the corresponding forward edge,
	and hence $\NPred{\lambda}{v}{c}=\NSucc{\Dual{\lambda}}{v}{c}$.
\end{proof}

We now proceed to the final step of the indexing phase,
in which we use $\col$ to build the so-called \emph{color-index} data structure.

\paragraph{The color-index}
To describe the color-index structure $\DSD$, we use the following notation.
For all $c,c'\in C$ and every edge-label $\lambda$ we let
$\numSucc{\lambda}{c}{c'}\deff \numSucc{\lambda}{v}{c'}$
for some (and hence, for every, by Fact~\ref{fact:col}) $v\in \VOne$ with $\col(v)=c$.
For every potential edge-label $\lambda$
(i.e., $\lambda\subseteq\setc{(R,\MyPlus),\, (R,\MyMinus)}{R\in \sigmaOne,\, \ar(R)=2}$ with
$\lambda \neq \emptyset$) we let
\begin{alignat*}{3}
	\hatnumSucc{\lambda}{c}{c'} &\ \deff \
        \sum_{\lambda'\supseteq\lambda} \numSucc{\lambda'}{c}{c'}
        &&\qquad\text{and}\qquad &\hatNSucc{\lambda}{v}{c} &\ \deff \ \bigcup_{\lambda'\supseteq\lambda} \NSucc{\lambda'}{v}{c}\;,
\end{alignat*}
where the considered $\lambda'$ are actual edge-labels in $\ELabels$.
The reason why we exclude $\lambda=\emptyset$ is that we do not need it.
Including it would only unnecessarily blow up the number of tuples present in $\ciD$.
The data structure $\DSD$ that we build during the indexing phase consists of
the following five components:
\begin{me}
	\item\label{item:ciD:one}
	the schema $\sigmaOne$ and the $\sigmaOne$-db $\DOne$;
	\item\label{item:ciD:two}
	a lookup table to access the color $\col(v)$ given a vertex $v\in\VOne$, and
	an (inverse) lookup table to access the set $\setc{v\in\VOne}{\col(v) = c}$ given a $c\in C$,
	plus the number $\Numb{c}$ of elements in this set;
	\item\label{item:ciD:three}
	lookup tables to access the set $\hatNSucc{\lambda}{v}{c}$,
	given a potential edge-label $\lambda$, a vertex $v \in \VOne$, and a color $c\in C$;
	\item\label{item:ciD:four}
	lookup tables to access the number $\hatnumSucc{\lambda}{c}{c'}$,
	given a potential edge-label $\lambda$ and colors $c,c'\in C$;
	\item\label{item:ciD:five}
	the \emph{color database} $\ciD$ of the \emph{color schema} $\cisigma$ defined as follows:
	\begin{itemize}
		\item
		The schema $\cisigma$ consists of all the \emph{unary} relation symbols of $\sigmaOne$
		and a binary relation symbol $E_{\lambda}$ for every potential edge-label $\lambda$
		(i.e., $\lambda\subseteq\setc{(R,\MyPlus),\, (R,\MyMinus)}{R\in \sigmaOne,\, \ar(R)=2}$
		with~$\lambda\neq\emptyset$).
		\item The $\cisigma$-db $\ciD$ is defined as follows.
		For each unary relation symbol $U\in\sigmaOne$ we let $U^{\ciD}$ be the set of unary
		tuples $(c)$ for all $c\in C$ such that there is a $(v)\in U^{\DOne}$ with $\col(v)=c$.
		For every potential edge-label $\lambda$ we let $E_\lambda^{\ciD}$ be the set of
		all tuples $(c,c')\in C \times C$ such that $\hatnumSucc{\lambda}{c}{c'} > 0$.
		Note that $\Adom(\ciD)=C$.
	\end{itemize}
\end{me}

\begin{exampleWithEndmarker}\label{example:running3}
	For our example database $\exdb$ (Example~\ref{example:running1}),
	one can check that the corresponding color database $\exdbcol$ is:
	\begin{center}
	\begin{tabular}{rrr}
		\begin{tabular}{r|cc}
			$E_{\{(\exrel{P}, \rightarrow), (\exrel{A}, \leftarrow)\}}$ & &  \\ \hline
			& \exdata{B} & \exdata{R}
		\end{tabular}%
		&
		\begin{tabular}{r|cc}
			$E_{\{(\exrel{S}, \rightarrow)\}}$ & &  \\ \hline
			& \exdata{R} & \exdata{Y}
		\end{tabular}%
		&
		\begin{tabular}{r|cc}
			$E_{\{(\exrel{M}, \rightarrow)\}}$ & &  \\ \hline
			& \exdata{R} & \exdata{G}
		\end{tabular}\vspace{2mm}
          \\
		\begin{tabular}{r|cc}
			$E_{\{(\exrel{P}, \leftarrow), (\exrel{A}, \rightarrow)\}}$ & &  \\ \hline
			& \exdata{R} & \exdata{B}
		\end{tabular}%
		&
		\begin{tabular}{r|cc}
			$E_{\{(\exrel{S}, \leftarrow)\}}$ & &  \\ \hline
			& \exdata{Y} & \exdata{R}
		\end{tabular}%
		&
		\begin{tabular}{r|cc}
			$E_{\{(\exrel{M}, \leftarrow)\}}$ & &  \\ \hline
			& \exdata{G} & \exdata{R}
		\end{tabular}%
	\end{tabular}\medskip
\end{center}
\noindent
Further, 
$E^{\exdbcol}_{\set{(\exrel{P},\rightarrow)}}
= E^{\exdbcol}_{\set{(\exrel{A},\leftarrow)}}
= E^{\exdbcol}_{\set{(\exrel{P},\rightarrow),(\exrel{A},\leftarrow)}}$,
$E^{\exdbcol}_{\set{(\exrel{P},\leftarrow)}}
= E^{\exdbcol}_{\set{(\exrel{A},\rightarrow)}}
= E^{\exdbcol}_{\set{(\exrel{P},\leftarrow),(\exrel{A},\rightarrow)}}$
and \,$E^{\exdbcol}_{\lambda}=\emptyset$\, for all the remaining potential edge-labels $\lambda$.
In Figure~\ref{fig:example:c}, we present a graph representing the tuples of $\exdbcol$.
\end{exampleWithEndmarker}
 
Note that after having performed the color refinement algorithm (taking time $\fcr(D)$),
all the components~\eqref{item:ciD:one}--\eqref{item:ciD:five} of $\DSD$
can be built in total time $O(|D|)$.
(Recall from the problem statement in Section~\ref{sec:indexing} that
the schema $\sigma$ is fixed and not part of the input.
Thus, the exponential dependency on $\sigma$ caused by considering all
potential edge labels is suppressed by the O-notation.
In Appendix~\ref{appendix:MoreComplicatedIndexing} we outline how the exponential dependency
on $\sigma$ can be avoided.)
In summary, the indexing phase takes time $\fcr(D)+O(|D|)$.
The color-index database $\ciD$ has size $O(|D|)$ in the \emph{worst case}; but $|\ciD|$ might be
substantially smaller than $|D|$ in case that $D$ has a particularly well-behaved structure.
We formalize this statement in the following result.
\begin{proposition}\label{prop:colordbsize}
	There is a database family ${\{D^n\}}_{n\geq 3}$
	with $|D^n| \in \Theta(n)$ but $|\ciDn| \in O(1)$.
\end{proposition}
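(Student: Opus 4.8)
The plan is to construct, for each $n \geq 3$, a $\sigma$-db $D^n$ over some fixed binary schema $\sigma$ whose graph representation $\GOne$ is large (linear in $n$) but highly ``regular'' in the sense that color refinement collapses it to only constantly many colors. The cleanest candidate is a vertex-transitive-like structure: take $\sigma$ to contain a single binary relation symbol $E$, and let $D^n$ be (the edge relation of) a long simple undirected cycle $C_n$, i.e. $E^{D^n} = \setc{(a_i, a_{i+1}), (a_{i+1}, a_i)}{i \in [n]}$ with indices mod $n$, for $n \geq 3$ distinct constants $a_1, \ldots, a_n$. Then $|D^n| = 2n \in \Theta(n)$, and the graph $\GOne$ is exactly $C_n$ (no self-loops, one edge-label class for each direction, which here are dual to one another), with all vertex labels equal to $\emptyset$.

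First I would observe that every vertex of $C_n$ has exactly one outgoing $\smalleps$-neighbor with edge-label $\lambda = \set{(E,\MyPlus)}$ and one with $\lambda = \set{(E,\MyMinus)}$, and that the constant coloring $\col(v) = c$ (a single color) is already stable: for the unique nonempty relevant edge-labels $\lambda$ we have $\numSucc{\lambda}{v}{c} = 1$ for every $v$, independently of $v$. Since every stable coloring refining the (trivial) $\vl$ in particular must be a partition of $V(\GOne)$, and the constant coloring is coarser than any other partition, it follows from uniqueness (up to renaming) of the coarsest stable coloring refining $\vl$ — as noted in the text after Theorem~\ref{thm:ColorRefinement} — that the coarsest stable coloring $\col$ has $|\img(\col)| = 1$. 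Hence $\Adom(\ciDn) = C$ has exactly one element.

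Next I would read off $\ciDn$ from its definition (component~\eqref{item:ciD:five} of $\DSD$): its active domain is the single color $c$, so every relation $E_\lambda^{\ciDn}$ is a subset of $\set{(c,c)}$, of which there are only finitely many ($2^{|\sigmaOne|}-1$ potential edge-labels over the fixed schema), and there are no unary relations to populate beyond the fixed schema. Therefore $|\ciDn| \leq 2^{|\sigmaOne|} - 1 \in O(1)$, while $|D^n| = 2n \in \Theta(n)$, which is exactly the claim.

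The only point that needs a little care — and the main (minor) obstacle — is justifying that the coarsest stable coloring really is the constant one, i.e. arguing rigorously that being stable-and-refining-$\vl$ forces no finer partition to be ``more coarse.'' This is immediate from the definition of \emph{coarsest} given in the text: the constant coloring is stable and refines $\vl$, hence every stable coloring refining $\vl$ must refine \emph{it}; combined with the uniqueness-up-to-renaming remark, $\col$ must be the constant coloring. One should also note $n \geq 3$ is needed only to make $C_n$ a simple graph (so the graph representation has no multi-edges and no self-loops), which is why the family is indexed from $3$ onward. If a referee prefers an even more transparent witness, the same argument applies verbatim to any connected regular structure — e.g. the disjoint edge relation of a single directed cycle, or a ``bidirectional path of $E$-edges'' would also work but introduces two endpoint colors, so the cycle is the tidiest choice.
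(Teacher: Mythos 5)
Your proposal is correct and takes essentially the same approach as the paper, which uses a \emph{directed} $n$-cycle over a single binary relation symbol and likewise argues that the constant coloring is the coarsest stable coloring, so that $\ciDn$ has a one-element active domain and constantly many tuples. One small slip: for your \emph{symmetric} cycle, every edge carries the single label $\set{(E,\MyPlus),(E,\MyMinus)}$ (your claim that each vertex has one out-neighbor with label $\set{(E,\MyPlus)}$ and one with $\set{(E,\MyMinus)}$ describes the directed cycle instead), but since then every vertex has exactly two out-neighbors with that one label, the constant coloring is still stable and your conclusion is unaffected.
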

\begin{proof}
	Let $\sigma=\set{R}$ with $\ar(R)=2$.
	For $n\in\NN$ with $n\geq 3$ let $D^n$ be the $\sigma$-db $D$ with active domain $[n]$, where
	the relation $R^{D}$ consists of the tuple $(n,1)$ and the tuples $(i,i{+}1)$ for all $i<n$.
	I.e., $D$ is a directed cycle on $n$ nodes.
	Note that $\DOne=D$.
	The graph $\GOne=(\VOne,\EOne,\vl,\el)$ looks as follows:
	$\VOne=[n]$, $\EOne= R^{D}\cup\setc{(j,i)}{(i,j)\in R^{D}}$,
	$\vl(i)=\emptyset$ for every $i\in[n]$,
	and for every $(i,j)\in R^{D}$ we have $\el(i,j)=\set{(R,\rightarrow)}$ and
	$\el(j,i)=\set{(R,\leftarrow)}$.

	The color refinement algorithm will stop after time $\fcr(\DOne)=O(n)$ by assigning each node
	$i\in[n]$ the same color $\col(i)=c$, where $C=\set{c}$ consists of only one color
	(note that this is a coarsest stable coloring of $\GOne$ that refines $\vl$).

	Clearly, $\ELabels$ consists of only the edge-labels $\set{(R,\rightarrow)}$ and
	$\set{(R,\leftarrow)}$, and $\numSucc{\set{(R,\rightarrow)}}{c}{c}=1$ and
	$\numSucc{\set{(R,\leftarrow)}}{c}{c}=1$.
	Thus, $\hatnumSucc{\set{(R,\rightarrow)}}{c}{c}=1=\hatnumSucc{\set{(R,\leftarrow)}}{c}{c}$.
	Furthermore, for $\lambda_{\textit{max}} \deff\set{(R,\rightarrow), \allowbreak (R,\leftarrow)}$
	we have $\hatnumSucc{\lambda_{\textit{max}}}{c}{c}=0$.

	In summary, $\ciD$ is the database with active domain $C=\set{c}$, and with relations
	$E^{\ciD}_{\set{(R,\rightarrow)}} = E^{\ciD}_{\set{(R,\leftarrow)}} = \set{(c,c)}$ and
	$E^{\ciD}_{\lambda_{\textit{max}}}=\emptyset$.
	In particular, $|\ciD|=2$, while $|D|=n$.
\end{proof}

Note that we define the color-index (and, in particular, $\cisigma$ and $\ciD$) considering
the binary relation symbols $E_{\lambda}$ for every potential edge-label $\lambda$, i.e.,
$\lambda\subseteq\setc{\mbox{(R,\MyPlus)},\allowbreak (R,\MyMinus)}{R\in \sigmaOne,\, \ar(R)=2}$
with $\lambda\neq \emptyset$.
This implies an exponential blow-up on $|\sigma|$.
However, given that we assume that $\sigma$ is fixed,
this blow-up is a constant factor in the running time.
By spending some more effort, one can avoid the exponential blow-up on $|\sigma|$ entirely;
details on this can be found in Appendix~\ref{appendix:MoreComplicatedIndexing}.  

\section{Evaluation using the color-index}\label{sec:eval}
This section shows how to use the color-index $\DSD$ to evaluate
any free-connex acyclic CQ $Q$ over $D$. 
I.e., we describe the \emph{evaluation phase} of our solution for $\IndexingProblemOurs$.
We assume that the color-index $\DSD$ (including the color database $\ciD$)
of the given database $D$ has already been built during the indexing phase.
Let $Q\in\fcACQ$ be an arbitrary query that we may receive as an input during the evaluation phase.
We first explain how the evaluation tasks can be simplified by focusing on \emph{connected} queries.

\paragraph{Connected queries}
We say that a CQ $Q$ is \emph{connected} iff its Gaifman graph $G(Q)$ is connected.
Intuitively, if a CQ $Q$ is not connected, then we can write $Q$ as
\[
	\Ans(\bar{z}_1,\ldots,\bar{z}_\ell) \ \leftarrow \
	\bar{\alpha}_1(\bar{z}_1,\bar{y}_1),\ldots,\bar{\alpha}_\ell(\bar{z}_\ell,\bar{y}_\ell)
\]
such that each $\bar{\alpha}_i(\bar{z}_i,\bar{y}_i)$ is a sequence of atoms,
$\vars(\bar{\alpha}_i(\bar{z}_i,\bar{y}_i))$ and $\vars(\bar{\alpha}_j(\bar{z}_j,\bar{y}_j))$
are disjoint for $i \neq j$, and for each $i\in [\ell]$
the CQ $Q_i\deff \Ans(\bar{z}_i) \leftarrow \bar{\alpha}_i(\bar{z}_i,\bar{y}_i)$ is connected.
To simplify notation we assume w.l.o.g.\ that in the head of $Q$ the variables are ordered
in exactly the same way as in the list $\bar{z}_1,\ldots,\bar{z}_\ell$.
One can easily check that this decomposition satisfies
\begin{equation*}
	\sem{Q}(D) = \sem{Q_1}(D) \times \cdots \times \sem{Q_\ell}(D)
	\qquad\text{and}\qquad
	|\sem{Q}(D)| = |\sem{Q_1}(D)| \cdot  \ldots \cdot |\sem{Q_\ell}(D)|
\end{equation*}
for every $\sigma$-db $D$.
Then, we can compute $|\sem{Q}(D)|$ by first computing each number $|\sem{Q_i}(D)|$
and then multiplying all values in $O(|Q|)$-time.
Similarly, for enumerating $\sem{Q}(D)$ one can convert constant-delay enumeration algorithms
for $Q_1, \ldots, Q_\ell$ into one for $Q$ by iterating in nested loops over
the outputs of $\sem{Q_1}(D), \ldots,\allowbreak \sem{Q_\ell}(D)$.
By the previous argument, in the following we will concentrate on
the evaluation of \emph{connected} CQs without loss of generality.
For the remainder of this section, let us fix an arbitrary \emph{connected} CQ $Q\in\fcACQ$ and
assume that its head is of the form $\Ans(x_1,\ldots,x_k)$ (with $k\geq 0$).

\paragraph{The  modified query $\QOne$}
When receiving the connected query $Q\in\fcACQ$ with head $\Ans(x_1,\ldots,x_k)$,
the first step is to remove the self-loops of $Q$.
I.e., we translate $Q$ into the query $\QOne$ over schema $\sigmaOne$ as follows:
$\QOne$ is obtained from $Q$ by replacing every atom of the form $R(x,x)$ with the atom $S_R(x)$.
Obviously, the Gaifman graph of the query remains unchanged, i.e., $G(\QOne)=G(Q)$.
Clearly, $\QOne$ is a connected query in $\fcACQOne$ such that $\sem{\QOne}(\DOne)=\sem{Q}(D)$,
and $\QOne$ is \emph{self-loop-free}, i.e., every binary atom of $\QOne$
is of the form $R(x,y)$ with $x\neq y$ and $R\in\sigma$ with $\ar(R)=2$.

\paragraph{The color query $\ciQ$}
Next, we transform $\QOne$ into a query $\ciQ$ of schema $\cisigma$.
We start by picking an arbitrary variable $r$ in $\free(\QOne)$,
and if $\QOne$ is a Boolean query (i.e., $\free(\QOne) = \emptyset$),
then we pick $r$ as an arbitrary variable in $\vars(\QOne)$.
This $r$ will be fixed throughout the remainder of this~section.

Since $\QOne$ is connected and $\QOne\in \fcACQOne$,
the following is obtained from Proposition~\ref{prop:binaryfcacqs}:
Picking $r$ as the \emph{root node} turns the Gaifman graph $G(\QOne)$ into a rooted tree $T$.
Furthermore, there exists a strict linear order $<$ on $\vars(\QOne)$
that is compatible with the descendant relation of the rooted tree $T$
(i.e., for all $x,y\in\vars(\QOne)$ such that $x\neq y$ and $y$ is a descendant of $x$ in $T$,
we have $x<y$)
\emph{and} that satisfies $x<y$ for all variables $x\in \free(\QOne)$ and $y\in \quant(\QOne)$.
Such a $<$ can be obtained based on a variant of breadth-first search of $G(\QOne)$
that starts with the root node $r$ and that prioritizes free over quantified variables
using separate queues for free and for quantified variables.
We choose an arbitrary such order $<$, and we fix it throughout the remainder of this section.
We will henceforth assume that $r = x_1$ and that $x_1 < x_2 < \cdots < x_k$
(by reordering the variables $x_1,\ldots,x_k$ in the head of~$\QOne$ and $Q$,
this can be achieved without loss of generality).

Based on $\QOne$ and our choice of $<$,
we construct the \emph{color query} $\ciQ$ of schema $\cisigma$ by using the following procedure.
The query $\ciQ$ has exactly the same head as $\QOne$ (and $Q$).
We initialize $\atoms(\ciQ)$ to consist of all the \emph{unary} atoms of $\QOne$.
Afterwards we loop through all edges $e$ of the Gaifman graph $G(\QOne)$ and proceed as follows.
Let $e=\set{x,y}$ with $x<y$.
Initialize $\lambda_e$ to be the empty set.
For every $R\in \sigmaOne$ such that $R(x,y)\in\atoms(\QOne)$,
insert into $\lambda_e$ the tuple $(R,\MyPlus)$.
For every $R\in\sigma$ such that $R(y,x)\in\atoms(\QOne)$,
insert into $\lambda_e$ the tuple $(R,\MyMinus)$.
After having completed the construction of $\lambda_e$,
we insert into $\atoms(\ciQ)$ the atom $E_{\lambda_e}(x,y)$ (note that $\lambda_e\neq\emptyset$).

\begin{example}\label{example:queries}
	Let $\sigma=\set{R}$ with $\ar(R)=2$,
	and let $Q$ be the query
        \[
          \Ans(x_1,x_2) \leftarrow R(x_1,x_2), R(x_3,x_1), R(x_2,x_2).
        \]
	By Proposition~\ref{prop:binaryfcacqs} it is easy to see that $Q\in\fcACQ$.
	Note that $\sigmaOne=\set{R,S_R}$,
	and $\QOne$ is: \ \ $\Ans(x_1,x_2)\;\leftarrow \; R(x_1,x_2), \, R(x_3,x_1), \, S_R(x_2).$
	
	Choosing $x_1$ as the root of the Gaifman graph $G(\QOne)$, letting $x_1<x_2<x_3$, we get that
	$\lambda_{\set{x_1,x_2}}=\set{(R,\MyPlus)}$,\,
	and \,$\lambda_{\set{x_1,x_3}}=\set{(R,\MyMinus)}$.
	Furthermore, $\ciQ$ is the query:
	\[
	\Ans(x_1,x_2)\;\leftarrow\; S_R(x_2),\,
	E_{\set{(R,\MyPlus)}}(x_1,x_2),\, E_{\set{(R,\MyMinus)}}(x_1,x_3)\,. \tag*{$\lrcorner$}
	\]
\end{example}

Recall that when defining the color query $\ciQ$,
we defined labels $\lambda_e$ for every edge $e$ of the Gaifman graph $G(\QOne)$.
For every node $x$ of $G(\QOne)$, we let $\lambda_x$ be the set of all unary relation symbols
$U\in\sigmaOne$ such that $U(x)\in\Atoms(\QOne)$.
The following lemma summarizes the crucial correspondence between
homomorphisms from $\QOne$ to $\DOne$, the labels $\lambda_e$ and $\lambda_x$
associated to the edges $e$ and the nodes $x$ of the Gaifman graph of $\QOne$,
and the labeled graph $\GOne=(\VOne,\EOne,\vl,\el)$.

\begin{lemma}\label{lemma:homomorphisms}
	A mapping $\val\colon\vars(\QOne)\to \Dom$ is a homomorphism from $\QOne$ to $\DOne$
	iff the following is true:
	\vspace{-\medskipamount}%
	\begin{enumerate}[(1)]
		\item
		$\vl(\val(x))\supseteq \lambda_x$, for every $x\in\vars(\QOne)$, \ and
		\item
		$(\val(x),\val(y))\in \EOne$ and $\el(\val(x),\val(y))\supseteq \lambda_e$,
		for every edge $e=\set{x,y}$ of $G(\QOne)$ with $x<y$.
	\end{enumerate}
\end{lemma}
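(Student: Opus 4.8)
I would prove Lemma~\ref{lemma:homomorphisms} by a direct unwinding of the definitions; here is the plan. The two displayed conditions are nothing but a restatement, in the language of the labelled graph $\GOne$, of ``$\val$ satisfies every atom of $\QOne$'', so the first thing I would do is set up the dictionary between atoms of $\QOne$ and nodes/edges of $G(\QOne)$ with their labels. Since $\sigmaOne$ is binary and $\QOne$ is self-loop-free, every atom of $\QOne$ is either a unary atom $U(x)$ with $U\in\sigmaOne$, which contributes $U$ to $\lambda_x$ and creates no edge of $G(\QOne)$, or a binary atom $R(a,b)$ with $a\neq b$, which witnesses the Gaifman edge $e=\set{a,b}$ and, writing $\set{x,y}=\set{a,b}$ with $x<y$, contributes $(R,\MyPlus)$ to $\lambda_e$ if $(a,b)=(x,y)$ and $(R,\MyMinus)$ to $\lambda_e$ if $(a,b)=(y,x)$; conversely, by the construction of $\ciQ$ every member of every $\lambda_x$ and $\lambda_e$ arises in exactly this way. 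This is also where the passage from $Q$ to the self-loop-free $\QOne$ (recording loops in the unary $S_R$) does its work: $\GOne$ is loop-free, so a Gaifman edge of $\QOne$ can only be matched by a genuine labelled edge of $\GOne$.

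For the implication from ``homomorphism'' to (1)--(2), assume $\val$ is a homomorphism from $\QOne$ to $\DOne$. For (1), given $x\in\vars(\QOne)$ and $U\in\lambda_x$, the atom $U(x)$ is in $\atoms(\QOne)$, hence $(\val(x))\in U^{\DOne}$, hence $U\in\vl(\val(x))$ by the definition of $\vl$; so $\vl(\val(x))\supseteq\lambda_x$. For (2), fix a Gaifman edge $e=\set{x,y}$ with $x<y$; it is witnessed by at least one binary atom, and a witness $R(x,y)$ yields $(\val(x),\val(y))\in R^{\DOne}$ while a witness $R(y,x)$ yields $(\val(y),\val(x))\in R^{\DOne}$. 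By the definition of $\EOne$ these memberships put $(\val(x),\val(y))$ into $\EOne$, and by the definition of $\el$ they put $(R,\MyPlus)$ resp.\ $(R,\MyMinus)$ into $\el(\val(x),\val(y))$; ranging over all witnesses gives $\el(\val(x),\val(y))\supseteq\lambda_e$.

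For the converse, assume (1) and (2) and verify $\val$ on an arbitrary atom of $\QOne$. If the atom is $U(x)$, then $U\in\lambda_x\subseteq\vl(\val(x))$, so $(\val(x))\in U^{\DOne}$ by the definition of $\vl$. If the atom is $R(a,b)$ with $a\neq b$, set $e=\set{a,b}$ and $\set{x,y}=\set{a,b}$ with $x<y$; then (2) gives $(\val(x),\val(y))\in\EOne$ and $\el(\val(x),\val(y))\supseteq\lambda_e$, and since $\lambda_e$ contains $(R,\MyPlus)$ when $(a,b)=(x,y)$ and $(R,\MyMinus)$ when $(a,b)=(y,x)$, the definition of $\el$ gives $(\val(a),\val(b))\in R^{\DOne}$ in either case. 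Hence $\val$ satisfies every atom of $\QOne$, i.e.\ $\val$ is a homomorphism from $\QOne$ to $\DOne$.

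I do not expect a genuine obstacle here: the lemma is essentially a change of notation. The step most prone to error is the orientation bookkeeping in the dictionary above --- in particular remembering that $(R,\MyMinus)\in\el(v,w)$ encodes $(w,v)\in R^{\DOne}$ rather than $(v,w)\in R^{\DOne}$ --- together with the observation that the move to the self-loop-free $\QOne$ is precisely what makes a matched Gaifman edge correspond to an actual (loop-free) edge of $\GOne$.
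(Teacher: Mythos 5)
Your overall strategy is exactly the paper's: the paper dismisses this lemma in one sentence as ``an immediate consequence'' of the definitions of $\GOne$, $\QOne$, $\lambda_x$ and $\lambda_e$, and your write-up is precisely the atom-by-atom dictionary and unwinding that such a one-liner presupposes. Your ``if'' direction (conditions (1) and (2) imply that $\val$ is a homomorphism) is complete and correct, and the orientation bookkeeping for $(R,\MyPlus)$ versus $(R,\MyMinus)$ is handled properly in both directions.

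There is, however, one step in your ``only if'' direction that genuinely fails. From a witnessing atom $R(x,y)\in\atoms(\QOne)$ you conclude that the membership $(\val(x),\val(y))\in R^{\DOne}$ ``puts $(\val(x),\val(y))$ into $\EOne$''. But $\EOne$ is defined to contain only pairs $(v,w)$ with $v\neq w$, and a homomorphism is under no obligation to keep $\val(x)\neq\val(y)$ even though $x\neq y$ in the self-loop-free query $\QOne$. Concretely, take $\sigma=\set{R}$ with $R^D=\set{(a,a)}$ and a query containing the atom $R(x,y)$ with $x\neq y$: the valuation with $\val(x)=\val(y)=a$ is a homomorphism from $\QOne$ to $\DOne$, yet $(a,a)\notin\EOne$, so condition (2) fails. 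So this step is not merely under-justified --- the ``only if'' direction of the lemma is actually false for valuations that identify two adjacent variables on an element carrying a self-loop in $R^{\DOne}$. The passage from $Q$ to $\QOne$ removes \emph{syntactic} self-loops $R(x,x)$ but does not prevent this \emph{semantic} identification, so the observation you lean on at the end of your first paragraph does not do the work you ascribe to it. To be fair, this is a defect of the statement rather than of your proof strategy, and the paper's own one-sentence proof glosses over exactly the same point; still, a proof written at your level of detail should have surfaced it, and your closing remark identifies the orientation bookkeeping as the dangerous step while missing the $v\neq w$ clause in the definition of $\EOne$, which is where the argument actually breaks.
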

\begin{proof}
	This is an immediate consequence of our choice of $\GOne=(\VOne,\EOne,\vl,\el)$, $\QOne$,
	the sets $\lambda_x$ for all $x\in\vars(\QOne)$
	and the sets $\lambda_e$ for all edges $e$ of $G(\QOne)$.
\end{proof}

Note that the queries $Q$, $\QOne$, and $\ciQ$ have the same head and the same Gaifman graph.
Thus, as $Q\in\fcACQ$, also $\QOne\in\fcACQOne$ and $\ciQ\in\fcACQci$.
Upon input of $Q$ we can compute in time $O(|Q|)$ the rooted tree $T$,
the queries $\QOne$ and $\ciQ$, and lookup tables to obtain $O(1)$-time access to
$\lambda_x$ and $\lambda_e$ for each node $x$ and edge $e$ of the Gaifman graph of $Q$.

\paragraph{Evaluation phase for $\booltask$}
As a warm-up, we start with the evaluation of \emph{Boolean} queries.
For this task, we assume that $\QOne$ is a Boolean query, and as described before,
we assume w.l.o.g.\ that $\QOne$ is connected.
The following lemma shows that evaluating $\QOne$ on $\DOne$ can be reduced to evaluating
the color query $\ciQ$ on the color database $\ciD$.
\begin{lemma}\label{lemma:bool}
	If $\QOne$ is a Boolean query, then $\sem{\QOne}(\DOne) = \sem{\ciQ}(\ciD)$.
\end{lemma}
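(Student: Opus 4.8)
Since $\QOne$ and $\ciQ$ are both Boolean, each of $\sem{\QOne}(\DOne)$ and $\sem{\ciQ}(\ciD)$ is either $\Yes$ or $\No$, so it suffices to show that there is a homomorphism from $\QOne$ to $\DOne$ if and only if there is a homomorphism from $\ciQ$ to $\ciD$. The plan is to establish both implications by transporting homomorphisms along the stable coloring $\col$: Lemma~\ref{lemma:homomorphisms} translates a homomorphism into $\DOne$ into a purely graph-theoretic condition on $\GOne=(\VOne,\EOne,\vl,\el)$, while Fact~\ref{fact:col} guarantees that the counts $\numSucc{\lambda}{\cdot}{\cdot}$ and the vertex labels depend only on the color of a vertex, not on the vertex itself. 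Throughout I use that (by the standing assumption of this paragraph) $\QOne$ is connected and that, since $\QOne\in\fcACQOne$, its Gaifman graph $G(\QOne)$ equals the rooted tree $T$; in particular every edge of $G(\QOne)$ is a tree edge, and if $\set{x,y}$ is an edge with $x<y$ then $x$ is the parent of $y$ in $T$, because $<$ is compatible with the descendant relation.

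For the forward direction, given a homomorphism $\val$ from $\QOne$ to $\DOne$ I would set $\homg\deff\col\circ\val$ and verify that $\homg$ is a homomorphism from $\ciQ$ to $\ciD$. For every unary atom $U(x)$ of $\ciQ$ -- which is also an atom of $\QOne$ -- we have $(\val(x))\in U^{\DOne}$, hence $(\homg(x))\in U^{\ciD}$ by the definition of $U^{\ciD}$. For every atom $E_{\lambda_e}(x,y)$ of $\ciQ$ (with $e=\set{x,y}$ and $x<y$), Lemma~\ref{lemma:homomorphisms} yields $(\val(x),\val(y))\in\EOne$ and $\lambda'\deff\el(\val(x),\val(y))\supseteq\lambda_e$; thus $\val(y)\in\NSucc{\lambda'}{\val(x)}{\col(\val(y))}$, so $\numSucc{\lambda'}{\col(\val(x))}{\col(\val(y))}\geq 1$, and therefore $\hatnumSucc{\lambda_e}{\homg(x)}{\homg(y)}\geq 1>0$, which is precisely the condition for $(\homg(x),\homg(y))\in E_{\lambda_e}^{\ciD}$.

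The converse direction is the step I expect to require the most care. Given a homomorphism $\homg$ from $\ciQ$ to $\ciD$, I would construct a homomorphism $\val$ from $\QOne$ to $\DOne$ by processing $\vars(\QOne)$ in the order $<$, i.e.\ descending the rooted tree $T$: first pick $\val(r)$ to be any vertex of $\GOne$ whose color is $\homg(r)$ (such a vertex exists because $\homg$ maps into $\Adom(\ciD)$ and, by the definition of $\ciD$, every element of $\Adom(\ciD)$ is $\col(v)$ for some $v\in\VOne$), and then, for a non-root variable $y$ with parent $x$, exploit that the atom $E_{\lambda_e}(x,y)$ of $\ciQ$ (with $e=\set{x,y}$, $x<y$) forces $\hatnumSucc{\lambda_e}{\homg(x)}{\homg(y)}>0$, so there is an actual edge-label $\lambda'\supseteq\lambda_e$ with $\numSucc{\lambda'}{\homg(x)}{\homg(y)}>0$; since $\col(\val(x))=\homg(x)$, Fact~\ref{fact:col} lets me read this count off at $\val(x)$ itself, producing an outgoing $\EOne$-neighbor $w$ of $\val(x)$ with $\el(\val(x),w)=\lambda'\supseteq\lambda_e$ and $\col(w)=\homg(y)$, and I then set $\val(y)\deff w$. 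Since $G(\QOne)=T$, this assigns each variable exactly once and leaves no further edges to satisfy. It remains to check the two conditions of Lemma~\ref{lemma:homomorphisms} for $\val$: condition~(2) holds by construction for every edge of $G(\QOne)$; for condition~(1), each $U\in\lambda_x$ gives $(\homg(x))\in U^{\ciD}$, hence there is some $v\in U^{\DOne}$ with $\col(v)=\homg(x)=\col(\val(x))$, so that $U\in\vl(v)=\vl(\val(x))$ by Fact~\ref{fact:col}. Thus $\val$ is a homomorphism from $\QOne$ to $\DOne$, and combined with the forward direction this yields $\sem{\QOne}(\DOne)=\sem{\ciQ}(\ciD)$.
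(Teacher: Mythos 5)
Your proof is correct and follows essentially the same route as the paper's: the forward direction composes a homomorphism with $\col$ and checks the atoms of $\ciQ$ directly, while the converse builds a homomorphism top-down along the rooted tree using the non-emptiness of the sets $\hatNSucc{\lambda_e}{\cdot}{\cdot}$ forced by the $E_{\lambda_e}$-atoms. The only difference is that you inline the paper's auxiliary appendix results (Corollary~\ref{cor:homomorphisms} and Lemmas~\ref{claim:appendix:main-lemma:neighbors} and~\ref{prop:appendix:main-lemma:hom_correspondence}) rather than citing them, which does not change the substance.
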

The proof easily follows from Lemma~\ref{lemma:homomorphisms}, our particular choice of
$\ciQ$ and $\ciD$, and the fact that $\col$ is a stable coloring of $\GOne$ that refines $\vl$
(see Appendix~\ref{sec:evalApp} for details).

Since $\ciQ$ is acyclic, we use the algorithm from Theorem~\ref{thm:Yannakakis} to compute
$\sem{\ciQ}(\ciD)$ in time $O(|\ciQ|{\cdot}|\ciD|)$.
From Lemma~\ref{lemma:bool} we know that $\sem{\QOne}(\DOne) =\sem{\ciQ}(\ciD)$.
In summary, by using the information produced during the indexing phase,
we can solve the task of evaluating a Boolean query $Q$ on $D$ in time $O(|Q|{\cdot}|\ciD|)$.
 
\paragraph{Evaluation phase for $\enumtask$}
We now move to the non-Boolean case.
We assume that $\QOne$ is a connected $k$-ary query for some $k \geq 1$.
Recall that the head of $\QOne$ and $\ciQ$ is of the form 
$\Ans(x_1,\ldots,x_k)$ with $x_1<\cdots < x_k$,
where $<$ is the fixed order we used for constructing the query $\ciQ$.
Further, recall that $T$ is the rooted tree obtained from the Gaifman graph $G(\QOne)$
by selecting $r=x_1$ as its root.
For each $1 < i \leq k$, there is a unique $j < i$ such that
$x_j$ is the parent of node $x_i$ in $T$;
we will henceforth write $p(x_{i})$ to denote this particular $x_j$.

The following technical lemma highlights the connection between 
$\sem{\QOne}(\DOne)$ and $\sem{\ciQ}(\ciD)$
that will allow us to use the color-index for enumerating $\sem{\QOne}(\DOne)$.

To state the lemma, we need the following notation.
For any $i\in [k]$, we say that $(v_1, \ldots, v_i)$ is a \emph{partial output} of $\QOne$
over $\DOne$ of color $(c_1,\ldots,c_k)$ iff there exists $(v_{i+1}, \ldots, v_k)$
such that $(v_1, \ldots, v_i,\allowbreak v_{i+1}, \ldots, v_k) \in \sem{\QOne}(\DOne)$
and $(\col(v_1),\ldots,\col(v_k))=(c_1,\ldots,c_k)$.  

\begin{lemma}\label{lemma:mainLemma}
	Suppose that the head of $Q$ (and $\QOne$ and $\ciQ$) is of the form
	$\Ans(x_1,\ldots,x_k)$ with $x_1 < \cdots < x_k$ and $k \geq 1$.
	The following statements are true:%
	\vspace{-\medskipamount}%
	\begin{mea}
		\item 
		For every $(v_1,\ldots,v_k)\in\sem{\QOne}(\DOne)$ we have
		$(\col(v_1),\ldots,\col(v_k))\in\sem{\ciQ}(\ciD)$.
		\item 
		For all $\ov{c}=(c_1,\ldots,c_k)\in\sem{\ciQ}(\ciD)$,
		and for every $v_1 \in \Adom(\DOne)$ with $\col(v_1) = c_1$, $(v_1)$
		is a partial output of $\QOne$ over $\DOne$ of color $\ov{c}$. 
		Moreover, if $(v_1,\ldots,v_i)$ is a partial output of $\QOne$ over $\DOne$
		of color $\ov{c}$ and $e=\set{x_j,x_{i+1}}$ is an edge of $G(\QOne)$
		with $x_j < x_{i+1}$, then $(v_1,\ldots,v_i,v_{i+1})$ is a partial output of
		$\QOne$ over $\DOne$ of color $\ov{c}$ for every
		$v_{i+1} \in \hatNSucc{\lambda_e}{v_j}{c_{i+1}}$ and,
		furthermore, $\hatNSucc{\lambda_e}{v_j}{c_{i+1}}\neq\emptyset$.
	\end{mea}  
\end{lemma}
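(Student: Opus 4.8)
The plan is to reduce the statement to two easy facts about the color-index combined with Lemma~\ref{lemma:homomorphisms}, and then to push all of part~(b) through a single ``extension lemma'' about the rooted tree $T$. First I would record the two facts. (O1)~For every potential edge-label $\lambda$, all colors $c,c'\in C$, and every $v\in\VOne$ with $\col(v)=c$ we have $\hatnumSucc{\lambda}{c}{c'}=|\hatNSucc{\lambda}{v}{c'}|$, because the sets $\NSucc{\lambda'}{v}{c'}$ for distinct $\lambda'\in\ELabels$ are pairwise disjoint (each out-edge of $v$ carries exactly one $\el$-label) and $\numSucc{\lambda'}{v}{c'}=\numSucc{\lambda'}{c}{c'}$ by Fact~\ref{fact:col}; hence $(c,c')\in E_\lambda^{\ciD}$ iff $\hatNSucc{\lambda}{v}{c'}\neq\emptyset$ for one (equivalently, every) $v$ of color $c$. (O2)~For a unary $U\in\sigmaOne$ and $v\in\VOne$, $(\col(v))\in U^{\ciD}$ iff $U\in\vl(v)$; the direction ``only if'' uses that $\col$ refines $\vl$ (Fact~\ref{fact:col}). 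Since $\ciQ$ and $\QOne$ have the same variables and $\ciQ$ contains precisely the unary atoms of $\QOne$ together with an atom $E_{\lambda_e}(x,y)$ for each edge $e=\set{x,y}$ of $G(\QOne)$ with $x<y$, Lemma~\ref{lemma:homomorphisms} and (O1),(O2) yield two parallel criteria: $\val$ is a homomorphism $\QOne\to\DOne$ iff $\vl(\val(x))\supseteq\lambda_x$ for all $x$ and $\val(y)\in\hatNSucc{\lambda_e}{\val(x)}{\col(\val(y))}$ for every edge $e=\set{x,y}$ with $x<y$; and $\mu\colon\vars(\QOne)\to C$ is a homomorphism $\ciQ\to\ciD$ iff $\vl(v)\supseteq\lambda_x$ for some/every $v$ of color $\mu(x)$ (for all $x$) and $\hatNSucc{\lambda_e}{v}{\mu(y)}\neq\emptyset$ for some/every $v$ of color $\mu(x)$ (for all such edges $e=\set{x,y}$). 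Part~(a) then drops out: given $(v_1,\ldots,v_k)\in\sem{\QOne}(\DOne)$ witnessed by a homomorphism $\val$, the composition $\col\circ\val$ satisfies the second criterion (the unary part by (O2) applied to $\val(x)$, and for each edge $e=\set{x,y}$ the set $\hatNSucc{\lambda_e}{\val(x)}{\col(\val(y))}$ contains $\val(y)$, hence is nonempty), so $(\col(v_1),\ldots,\col(v_k))\in\sem{\ciQ}(\ciD)$.

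For part~(b), call a set $U\subseteq\vars(\QOne)$ \emph{ancestor-closed} if $r\in U$ and $U$ contains the $T$-parent of every non-root element of $U$; such a $U$ induces a subtree of $T$ containing the root, and $\set{x_1,\ldots,x_i}$ is ancestor-closed for every $i\in[k]$, since the $T$-parent of $x_\ell$ is some $x_j$ with $j<\ell$ (here we use that the order $<$ refines the descendant order of $T$). I would then prove the following \textbf{extension lemma}: for any homomorphism $\mu\colon\ciQ\to\ciD$, any ancestor-closed $U$, and any $g\colon U\to\VOne$ with $\col\circ g=\restrict{\mu}{U}$ that satisfies conditions~(1) and~(2) of Lemma~\ref{lemma:homomorphisms} for all $x\in U$ and all edges $e=\set{x,y}$ of $G(\QOne)$ with $\set{x,y}\subseteq U$, there exists a homomorphism $\val\colon\QOne\to\DOne$ extending $g$ with $\col\circ\val=\mu$. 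Its proof assigns the vertices of $\vars(\QOne)\setminus U$ in the fixed order $<$: when $x$ is reached, its $T$-parent $y$ is already assigned (it lies in $U$, or else $y<x$ because $x$ is a descendant of $y$, so $y$ was processed earlier), say $\val(y)=v'$ with $\col(v')=\mu(y)$; since $(\mu(y),\mu(x))\in E_{\lambda_e}^{\ciD}$ for $e=\set{y,x}$, (O1) supplies some $w\in\hatNSucc{\lambda_e}{v'}{\mu(x)}$, which we take as $\val(x)$, and $\vl(w)\supseteq\lambda_x$ holds by (O2). At the end, Lemma~\ref{lemma:homomorphisms} confirms that $\val$ is a homomorphism (each edge $\set{y,x}$ of $G(\QOne)$ with $y<x$ has $y$ the parent of $x$, so its condition was arranged when $x$ was assigned, or comes from $g$ if $x\in U$; the unary conditions hold throughout), and $\col\circ\val=\mu$ by construction.

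Part~(b) now follows by two applications of the extension lemma. Since $\ov{c}\in\sem{\ciQ}(\ciD)$, fix a homomorphism $\mu\colon\ciQ\to\ciD$ with $\mu(x_\ell)=c_\ell$ for $\ell\in[k]$. For the first assertion, take $U=\set{x_1}$ and $g(x_1)=v_1$; this is legitimate because $\col(v_1)=c_1=\mu(x_1)$ and $\vl(v_1)\supseteq\lambda_{x_1}$ by (O2), so the extension lemma produces a homomorphism $\val$ with $\val(x_1)=v_1$ and $\col(\val(x_\ell))=c_\ell$ for all $\ell$, which witnesses that $(v_1)$ is a partial output of $\QOne$ over $\DOne$ of color $\ov{c}$. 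For the ``moreover'' part, the edge $e=\set{x_j,x_{i+1}}$ with $x_j<x_{i+1}$ must be the parent edge of $x_{i+1}$ in $T$ (the only neighbour of $x_{i+1}$ preceding it in $<$ is its parent), so $j\le i$ and $U:=\set{x_1,\ldots,x_{i+1}}$ is ancestor-closed. Let $\val_0$ be a homomorphism witnessing that $(v_1,\ldots,v_i)$ is a partial output of color $\ov{c}$; by the homomorphism criterion above, $\val_0(x_{i+1})\in\hatNSucc{\lambda_e}{v_j}{c_{i+1}}$ (using $\val_0(x_j)=v_j$ and $\col(\val_0(x_{i+1}))=c_{i+1}$), so that set is nonempty. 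Given any $v_{i+1}$ in it, define $g$ on $U$ by $g(x_\ell)=v_\ell$ for $\ell\le i$ and $g(x_{i+1})=v_{i+1}$. Relative to the restriction of $\val_0$ to $\set{x_1,\ldots,x_i}$, the only additional requirements on $g$ are the unary atoms at $x_{i+1}$ (met by (O2), as $\col(v_{i+1})=c_{i+1}=\mu(x_{i+1})$) and the atom of the edge $e$ (met since $v_{i+1}\in\hatNSucc{\lambda_e}{v_j}{c_{i+1}}$), and $\col\circ g=\restrict{\mu}{U}$; hence the extension lemma yields a homomorphism $\val\colon\QOne\to\DOne$ with $\val(x_\ell)=v_\ell$ for $\ell\le i$, $\val(x_{i+1})=v_{i+1}$, and $\col(\val(x_\ell))=c_\ell$ for all $\ell\in[k]$, showing that $(v_1,\ldots,v_{i+1})$ is a partial output of color $\ov{c}$.

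The step I expect to be the main obstacle is the extension lemma and the bookkeeping in its second use: making the assignment of each new vertex depend only on its already-fixed $T$-parent --- exactly what acyclicity of $G(\QOne)$ together with compatibility of $<$ with the descendant order give us --- and recognising that prescribing a fresh value for $x_{i+1}$ affects only the part of $T$ below $x_{i+1}$, since in a tree the edge $e$ is the unique edge leaving $\set{x_1,\ldots,x_{i+1}}$ toward that part; thus nothing outside $\set{x_1,\ldots,x_{i+1}}$ has to change and the colors $c_1,\ldots,c_i$ of $v_1,\ldots,v_i$ remain intact.
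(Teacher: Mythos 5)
Your proof is correct and follows essentially the same route as the paper: the paper's appendix establishes the same two facts (its Observation on unary predicates and its Lemma on nonemptiness of the $\hatNSucc{\lambda}{v}{c}$ sets), and proves part~(b) via a notion of tuples ``consistent with $\ov{c}$'' whose equivalence with partial outputs is shown by exactly the top-down construction along $<$ that you package as your extension lemma on ancestor-closed sets. The only cosmetic difference is that you derive $\hatNSucc{\lambda_e}{v_j}{c_{i+1}}\neq\emptyset$ from the witness homomorphism $\val_0$ of the partial output rather than from the color-query homomorphism $\mu$; both are valid.
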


The proof uses Lemma~\ref{lemma:homomorphisms}, our particular choice of $\ciQ$ and $\ciD$,
and the fact that $\col$ is a stable coloring of $\GOne$ that refines $\vl$
(see Appendix~\ref{sec:evalApp} for details).

For solving the task $\enumtask$, we proceed as follows:
We use Theorem~\ref{thm:BGS-enum} with input $\ciD$ and $\ciQ$ to carry out a
preprocessing phase in time $O(|\ciQ|{\cdot}|\ciD|)$
and then enumerate the tuples in $\sem{\ciQ}(\ciD)$ with delay $O(k)$.
Each time we receive a tuple $\ov{c}=(c_1,\ldots,c_k) \in \sem{\ciQ}(\ciD)$,
we carry out the following algorithm:

\begin{quote}
	for all $v_1\in \VOne$ with $\col(v_1)=c_1$ do\ $\myEnum((v_1);\ov{c})$
\end{quote}
\noindent
For all $i\in [k]$, the procedure $\myEnum((v_1,\ldots,v_i);\ov{c})$ is as follows:
\begin{quote}
	if $i=k$ then \textbf{output} $(v_1,\ldots,v_k)$  \\
	else
	\\
	\mbox{ \ \ \ }let $x_j \deff p(x_{i+1})$ and let $e\deff \set{x_j,x_{i+1}}$ \\
	\mbox{ \ \ \ }for all $v_{i+1}\in \hatNSucc{\lambda_e}{v_j}{c_{i+1}}$
	do \ 
	$\myEnum((v_1,\ldots,v_i,v_{i+1});\ov{c})$\\
	endelse.
\end{quote}

\noindent
Clearly, for each fixed $\ov{c}=(c_1,\ldots,c_k)\in\sem{\ciQ}(\ciD)$ this outputs,
without repetition, tuples $(v_1,\ldots,v_k)\in\VOne^k$ such that
$(\col(v_1),\ldots,\allowbreak \col(v_k))=\ov{c}$.
From Lemma~\ref{lemma:mainLemma}(b) we obtain that the output consists of
precisely those tuples $(v_1,\ldots,v_k)$ with color
$(\col(v_1),\ldots,\col(v_k))\allowbreak =\ov{c}$ that belong to $\sem{\QOne}(\DOne)$.
From Lemma~\ref{lemma:mainLemma}(a) we obtain that, in total,
the algorithm enumerates all the tuples $(v_1,\ldots,v_k)$ in $\sem{\QOne}(\DOne)$.

From Lemma~\ref{lemma:mainLemma}(b) we know that every time we encounter a loop
of the form ``{for all $v_{i+1} \in \hatNSucc{\lambda_e}{v_j}{c_{i+1}}$}'',
the set $\hatNSucc{\lambda_e}{v_j}{c_{i+1}}$ is non-empty.
Thus, by using the lookup tables built during the indexing phase,
we obtain that the delay between outputting any two tuples of $\sem{\QOne}(\DOne)$ is $O(k)$.

In summary, by using the information produced during the indexing phase,
we can enumerate $\sem{\QOne}(\DOne)$ with delay $O(k)$ after a
preprocessing phase that takes time $O(|Q|{\cdot}|\ciD|)$. 

\paragraph{Evaluation phase for $\counttask$}
In case that $\QOne$ is a \emph{Boolean} query, the number of answers is either 0 or 1,
and the result for the task $\counttask$ can be obtained by evaluating the Boolean query
as described previously.

In case that $\QOne$ is a $k$-ary query for some $k\geq 1$,
we make the same assumptions and use the same notation as for $\enumtask$.
For solving the task $\counttask$, we proceed as follows.
For every $c\in C$, let $v_c$ be the first vertex in the lookup table for the set
$\setc{v\in\VOne}{\col(v)=c}$
(this is part of the color-index $\DSD$ that has been produced during the indexing phase).
For every $x\in\vars(\QOne)$ and for every $c\in C$
let $f_1(c,x)\deff 1$ if $\vl(v_c)\supseteq \lambda_x$, and let $f_1(c,x)=0$ otherwise.
Note that $f_1(c,x)$ indicates whether one (and thus, every) vertex of color $c$
satisfies all the unary atoms of the form $U(x)$ that occur in $\QOne$.
By using the lookup tables built during the indexing phase and the lookup tables for $\lambda_x$ and
$\lambda_e$ built during the evaluation phase after having received the input query $Q$,
we can compute in time $O(|C|{\cdot}|\QOne|)$ a lookup table that gives us $O(1)$-time access to
$f_1(c,x)$ for all $c\in C$ and all $x\in\vars(\QOne)$.

Recall that $T$ denotes the rooted tree obtained from the Gaifman graph $G(\QOne)$
by choosing $x_1$ to be its root.
For every node $x$ of $T$ we let  $\Children(x)$ be the set of its children,
and we let $T_x$ be the subtree of $T$ rooted at $x$.
For $x\neq x_1$ we write $\Parent(x)$ to denote the parent of $x$ in $T$.

We perform a bottom-up pass of $T$ and fix the following notions:
For every $c\in C$ and for every \emph{leaf} $x$ of $T$, let:
\[
	f_{\downarrow}(c,x) \ \deff \ \ f_1(c,x).
\]
For every $c\in C$ and every node $y\neq x_1$ of $T$, for $e\deff \set{\Parent(y),y}$ let:
\[
	g(c,y) \ \deff \ \ \sum_{c'\in C} \ f_{\downarrow}(c',y)  \cdot \ \hatnumSucc{\lambda_e}{c}{c'}.
\]
For every $c\in C$ and for every node $x$ of $T$ that is not a leaf, let:
\[
	f_{\downarrow}(c,x)\ \deff \ \ f_1(c,x) \cdot\!\! \prod_{y\in\Children(x)} g(c,y).
\]

By induction (bottom-up along $T$) it is straightforward to prove the following lemma
(see Appendix~\ref{sec:evalApp}).

\begin{lemma}\label{lemma:counting}
	For every $c\in C$ and every $v\in\VOne$ with $\col(v)=c$,
	the following is true for all $x\in V(T)$:%
	\vspace{-\medskipamount}%
	\begin{mea}
		\item
		$f_{\downarrow}(c,x)$ is the number of mappings $\val\colon V(T_x) \to \VOne$
		satisfying $\val(x)=v$ and
		\begin{enumerate}[(1)]
			\item for every $x'\in V(T_x)$ we have $\vl(\val(x'))\supseteq \lambda_{x'}$, and
			\item for every edge $e=\set{x',y'}$ in $T_x$ with $x'<y'$ we have\\
			$(\val(x'),\val(y'))\in \EOne$ and $\el(\val(x'),\val(y'))\supseteq\lambda_e$.
		\end{enumerate}
		\item
		For every $y\in\Children(x)$, the number $g(c,y)$ is the number of mappings
		$\val\colon \set{x}\cup V(T_y) \to \VOne$ satisfying $\val(x)=v$ and
		\begin{enumerate}[(1)]
		\item for every $x'\in V(T_y)$ we have $\vl(\val(x'))\supseteq \lambda_{x'}$, and
		\item for every edge $e=\set{x',y'}$ in $T_x$ with $x'<y'$ and
		$x',y'\in \set{x}\cup V(T_y)$ we have\\
		$(\val(x'),\val(y'))\in \EOne$ and $\el(\val(x'),\val(y'))\supseteq\lambda_e$.
		\end{enumerate}
	\end{mea}
\end{lemma}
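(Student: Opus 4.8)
The plan is to prove (a) and (b) together by induction along $T$ from the leaves upward: (a) for a leaf is immediate; the instance of (b) for a node $y$ (with $\Parent(y)$ playing the role of ``$x$'') will follow from (a) for $y$; and (a) for a non-leaf $x$ will follow from (b) for every $y\in\Children(x)$. Throughout fix $v\in\VOne$ with $\col(v)=c$, and note that the conditions~(1),(2) in parts (a) and (b) are just the homomorphism conditions of Lemma~\ref{lemma:homomorphisms} restricted to the relevant subtree. I would first record two small facts that drive the bookkeeping. \emph{(i)} By Fact~\ref{fact:col}(a), $\vl(v)=\vl(v_c)$ whenever $\col(v)=c$, so for a single vertex $x'$ with $\val(x')=v$ condition~(1) holds iff $f_1(c,x')=1$. \emph{(ii)} Unwinding the definitions of $\numSucc{\lambda'}{c}{c'}$ and $\hatnumSucc{\lambda}{c}{c'}$, one gets $\hatnumSucc{\lambda}{c}{c'}=|\hatNSucc{\lambda}{v}{c'}|$ for \emph{every} $v$ with $\col(v)=c$ and every potential edge-label $\lambda$; that the right-hand side does not depend on the chosen representative $v$ is precisely what stability of $\col$ (Fact~\ref{fact:col}(b)) delivers. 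In words, $\hatnumSucc{\lambda}{c}{c'}$ is the number of out-neighbours $w$ of any color-$c$ vertex $v$ with $\el(v,w)\supseteq\lambda$ and $\col(w)=c'$.

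For the base case of (a): if $x$ is a leaf then $V(T_x)=\set{x}$, there is a unique map with $\val(x)=v$, condition~(2) is vacuous, and by \emph{(i)} condition~(1) holds iff $f_1(c,x)=1=f_{\downarrow}(c,x)$. For the step ``(a) for $y$ $\Rightarrow$ (b) for $y$'': put $x\deff\Parent(y)$ and $e\deff\set{x,y}$; since the order $<$ respects the descendant relation we have $x<y$, matching the convention used to define $\lambda_e$. A map $\val$ on $\set{x}\cup V(T_y)$ with $\val(x)=v$ is the same data as a choice of $w\deff\val(y)$ together with $\val|_{V(T_y)}$; and because $T$ is a tree, inside $\set{x}\cup V(T_y)$ the vertex $x$ is adjacent only to $y$, so the conditions of~(b) restricted to $\set{x}\cup V(T_y)$ split into \emph{(I)} $(v,w)\in\EOne$ and $\el(v,w)\supseteq\lambda_e$, i.e.\ $w\in\hatNSucc{\lambda_e}{v}{\col(w)}$, and \emph{(II)} precisely the conditions counted by statement~(a) for $T_y$ with $\val(y)=w$. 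Summing over admissible $w$ and grouping by $c'\deff\col(w)$, the induction hypothesis~(a) for $y$ and fact~\emph{(ii)} give the count $\sum_{c'\in C}f_{\downarrow}(c',y)\cdot|\hatNSucc{\lambda_e}{v}{c'}|=\sum_{c'\in C}f_{\downarrow}(c',y)\cdot\hatnumSucc{\lambda_e}{c}{c'}=g(c,y)$, as required.

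For the step ``(b) for all children $\Rightarrow$ (a) for $x$'' at a non-leaf $x$: since $T$ is a tree, $V(T_x)$ is the disjoint union of $\set{x}$ and the sets $V(T_y)$ over $y\in\Children(x)$, every edge of $T_x$ is either some $\set{x,y}$ with $y\in\Children(x)$ or lies inside a single $T_y$, and there are no edges between distinct child subtrees. Hence a valid map $\val$ on $V(T_x)$ with $\val(x)=v$ is exactly the combined data of the single check $\vl(v)\supseteq\lambda_x$ (i.e.\ $f_1(c,x)=1$) at $x$ together with, independently for each $y\in\Children(x)$, a valid (in the sense of~(b)) map on $\set{x}\cup V(T_y)$ extending $x\mapsto v$; the pieces overlap only in $x$, whose image is fixed, so the choices are independent and every vertex- and edge-condition of $T_x$ is accounted for exactly once. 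The number of valid $\val$ is therefore $f_1(c,x)\cdot\prod_{y\in\Children(x)}g(c,y)=f_{\downarrow}(c,x)$ (both sides being $0$ when $f_1(c,x)=0$). This closes the induction and proves the lemma.

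The main obstacle I anticipate is not a computation but the careful bookkeeping in this last decomposition: one must verify that the ``pieces'' $\set{x}\cup V(T_y)$, together with their local conditions, partition the vertex- and edge-conditions of $T_x$ without overlap and without omission, so that $\prod_{y}g(c,y)$ is an exact count rather than an over- or under-count (this is why (b) deliberately omits the constraint on the root of its subtree). A secondary subtlety, easy to overlook, is that $\hatnumSucc{\lambda_e}{c}{c'}$ must genuinely be an invariant of the color $c$ and not merely of a chosen representative vertex --- and this is exactly where stability of the coloring is used.
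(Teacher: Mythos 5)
Your proposal is correct and follows essentially the same route as the paper's proof: a bottom-up induction along $T$ in which statement~(a) at a node's children yields statement~(b) at that node via the identity $|\hatNSucc{\lambda_e}{v}{c'}|=\hatnumSucc{\lambda_e}{c}{c'}$ (guaranteed by stability of $\col$), and statement~(b) at all children yields statement~(a) at the node by the independence of the child subtrees, which overlap only in the fixed root image. The two bookkeeping points you flag at the end are exactly the ones the paper's argument also relies on, and your treatment of them is sound.
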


\noindent
For the particular case where $x =x_1$,
Lemmas~\ref{lemma:counting}(a) and~\ref{lemma:homomorphisms} yield that
\[
	\sum_{c\in C}\ n_c \cdot f_{\downarrow}(c,x_1)
	\quad \text{is} \quad
	\parbox[t]{10cm}{%
		the number of homomorphisms from $\QOne$ to $\DOne$,\\
		where $n_c$ is the number of nodes $v\in\VOne$ with $\col(v)=c$.%
	}
\]

It is straightforward to see that a lookup table that provides $O(1)$-time access to
$f_{\downarrow}(c,x)$ and $g(c,y)$ for all $c\in C$ and all $x,y\in\vars(\QOne)$
can be computed in time $O(|\QOne| {\cdot} |\ciD|)$.
Hence, also $\sum_{c\in C}\ n_c \cdot f_{\downarrow}(c,x_1)$ can be computed in time
$O(|\QOne| {\cdot} |\ciD|) = O(|Q| {\cdot}|\ciD|)$.

In the particular case where $\free(Q)=\vars(Q)$,
the number of homomorphisms from $\QOne$ to $\DOne$ is precisely the number $|\sem{\QOne}(\DOne)|$.
This solves the task $\counttask$ in case that $\free(\QOne)=\vars(\QOne)$.

To solve the task $\counttask$ in case that $\free(\QOne)\neq\vars(\QOne)$,
we additionally perform a bottom-up pass of the subtree $T'$ of $T$ induced by $\free(\QOne)$
(to see that $T'$ is indeed a tree, recall Proposition~\ref{prop:binaryfcacqs}).
For each leaf $x$ of $T'$ and every $c\in C$ we let
\[
	f'_{\downarrow}(c,x) \deff 1 \ \ \text{if} \
		f_\downarrow(c,x)\geq 1, \quad \text{and} \ \
		f'_{\downarrow}(c,x) \deff 0 \ \text{otherwise.}
\]
For every $c\in C$ and every node $y\neq x_1$ of $T'$, for $e\deff \set{\Parent(y),y}$ let
\[
	g'(c,y) \ \deff \ \
		\sum_{c'\in C} \ f'_{\downarrow}(c',y)  \cdot \ \hatnumSucc{\lambda_e}{c}{c'}.
\]
For every $c\in C$ and for every node $x$ of $T'$ that is not a leaf, let
\[
	f'_{\downarrow}(c,x)\ \deff \ \ f_1(c,x) \cdot\!\! \prod_{y\in\Children(x)} g'(c,y).
\]
By induction (bottom-up along $T'$) one can show that for every $c \in C$
and every $v \in \VOne$ with $\col(v)=c$, $f'_{\downarrow}(c,x_1)$ is the number of tuples
$(v_1,\ldots,v_k)\in \sem{\QOne}(\DOne)$ with $v_1=v$.
Thus,
\[
|\sem{\QOne}(\DOne)| \ \ = \ \ \sum_{c\in C} \ n_c \cdot f'_{\downarrow}(c,x_1).
\]
Furthermore, this number can be computed in time $O(|\QOne|{\cdot}|\ciD|)$.
In summary, we compute the number $|\sem{\QOne}(\DOne)|$ in time $O(|\QOne|{\cdot}|\ciD|)$. 

\section{Statement of our main result}\label{sec:MainTheorem}

From Sections~\ref{sec:mainresult} and~\ref{sec:eval} we directly obtain our following main result,
where $\ciD$ is the \emph{color database} produced for the input database $D$
during the indexing phase.
Note that $\ciD$ has size $O(|D|)$ in the worst-case;
but depending on the particular structure of $D$,
the size $|\ciD|$ might be substantially smaller than $|D|$ (even $O(1)$ is possible,
cf.\ Proposition~\ref{prop:colordbsize}).

\begin{theorem}\label{thm:MainTheorem}
	For every binary schema $\sigma$, the problem $\IndexingProblemOurs$ can be solved with 
	indexing time $\fcr(D)+ O(|D|)$, such that afterwards, 
	every Boolean acyclic query $Q$ can be answered in time $O(|Q|{\cdot}|\ciD|)$.
	Furthermore, for every query $Q\in\fcACQ$  we can enumerate the tuples in $\sem{Q}(D)$ with 
	delay $O(|\free(Q)|)$ after preprocessing time $O(|Q|{\cdot}|\ciD|)$, 
	and we can compute the number $|\sem{Q}(D)|$ of result tuples in time $O(|Q|{\cdot}|\ciD|)$.
\end{theorem}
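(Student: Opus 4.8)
The plan is to assemble the ingredients developed in Sections~\ref{sec:mainresult} and~\ref{sec:eval}. For the indexing phase, upon input of a $\sigma$-db $D$ I would first construct $\DOne$ and its graph representation $\GOne=(\VOne,\EOne,\vl,\el)$ in time $O(|D|)$, then invoke the algorithm of Theorem~\ref{thm:ColorRefinement} on $\GOne$ to obtain, in time $\fcr(D)$, a coarsest stable coloring $\col$ that refines $\vl$, and finally assemble all five components of the color-index $\DSD$ --- in particular the color schema $\cisigma$, the color database $\ciD$, and the lookup tables --- in $O(|D|)$ additional time, where the exponential dependence on the fixed schema $\sigma$ caused by ranging over all potential edge-labels is absorbed into the $O$-notation. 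This gives indexing time $\fcr(D)+O(|D|)$; since $|\ciD|\le|D|$ the running-time bounds claimed below are at most $O(|Q|{\cdot}|D|)$, and by Proposition~\ref{prop:colordbsize} they may be much smaller.

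Next I would reduce all three tasks to the case of a \emph{connected} query. Given $Q\in\fcACQ$, read off its connected components $Q_1,\dots,Q_\ell$ from $G(Q)$ in time $O(|Q|)$; since $\sem{Q}(D)=\sem{Q_1}(D)\times\dots\times\sem{Q_\ell}(D)$, counting is obtained by multiplying the $|\sem{Q_i}(D)|$ in $O(|Q|)$ time, the Boolean task by conjoining the component answers, and enumeration by nesting the component enumerators in loops (discarding Boolean components answering $\Yes$ and reporting the empty result if some Boolean component answers $\No$), which sums the delays to $O(|\free(Q)|)$. For a connected $Q\in\fcACQ$ I would then compute in time $O(|Q|)$ --- using Proposition~\ref{prop:binaryfcacqs} --- the rooted tree $T$ with root $r=x_1$ and the compatible order $<$, the self-loop-free query $\QOne\in\fcACQOne$ with $\sem{\QOne}(\DOne)=\sem{Q}(D)$, the color query $\ciQ\in\fcACQci$, and lookup tables for the labels $\lambda_x$ and $\lambda_e$; note that $Q$, $\QOne$, and $\ciQ$ share head and Gaifman graph, so $\ciQ$ is again free-connex acyclic with $|\ciQ|=O(|Q|)$.

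I would then dispatch the three tasks. For $\booltask$: Lemma~\ref{lemma:bool} gives $\sem{\QOne}(\DOne)=\sem{\ciQ}(\ciD)$, and since $\ciQ$ is acyclic, Theorem~\ref{thm:Yannakakis} decides this in time $O(|\ciQ|{\cdot}|\ciD|)=O(|Q|{\cdot}|\ciD|)$. For $\enumtask$ with $k=|\free(Q)|\ge1$: use Theorem~\ref{thm:BGS-enum} on $\ciQ$ and $\ciD$ to enumerate $\sem{\ciQ}(\ciD)$ with delay $O(|\free(\ciQ)|)=O(k)$ after preprocessing $O(|\ciQ|{\cdot}|\ciD|)=O(|Q|{\cdot}|\ciD|)$, and for each emitted color tuple $\ov{c}=(c_1,\dots,c_k)$ run the procedure $\myEnum$ with the help of the lookup tables for the sets $\hatNSucc{\lambda_e}{v_j}{c}$; by Lemma~\ref{lemma:mainLemma}(b) this emits, without repetition, exactly the tuples of $\sem{\QOne}(\DOne)$ of color $\ov{c}$, and by Lemma~\ref{lemma:mainLemma}(a) letting $\ov{c}$ range over $\sem{\ciQ}(\ciD)$ yields all of $\sem{\QOne}(\DOne)=\sem{Q}(D)$. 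For $\counttask$: if $\QOne$ is Boolean the answer is $0$ or $1$ via the Boolean procedure; otherwise carry out the bottom-up dynamic program over $T$ computing $f_1$, $f_\downarrow$, $g$, and, if $\free(\QOne)\neq\vars(\QOne)$, also $f'_\downarrow$ and $g'$ over the subtree of $T$ induced by $\free(\QOne)$, whose correctness is Lemma~\ref{lemma:counting}, and return $\sum_{c\in C} n_c\cdot f'_\downarrow(c,x_1)=|\sem{\QOne}(\DOne)|$; the whole computation fits into $O(|\QOne|{\cdot}|\ciD|)=O(|Q|{\cdot}|\ciD|)$.

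The substantive content is already discharged by Lemmas~\ref{lemma:homomorphisms}, \ref{lemma:bool}, \ref{lemma:mainLemma}, and~\ref{lemma:counting}, so what remains for this theorem is essentially bookkeeping. The one place that needs real care --- and the part I would expect to be the main obstacle if those lemmas were not already available --- is the \emph{delay} bound for $\enumtask$: it relies on the guarantee of Lemma~\ref{lemma:mainLemma}(b) that $\hatNSucc{\lambda_e}{v_j}{c_{i+1}}\neq\emptyset$ for every branch reached inside $\myEnum$, so that no recursion path wastes time on empty loops, and this non-emptiness is exactly where the stability of $\col$ (Fact~\ref{fact:col}) enters. Everything else --- that $\ciQ$ inherits (free-connex) acyclicity and the head from $Q$, that each auxiliary table is built within the claimed budget, and that the component-wise combination preserves both the $O(|\free(Q)|)$ delay and the $O(|Q|{\cdot}|\ciD|)$ bounds --- is routine.
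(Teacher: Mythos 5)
Your proposal is correct and follows essentially the same route as the paper, which itself derives Theorem~\ref{thm:MainTheorem} directly by assembling the indexing phase of Section~\ref{sec:mainresult} with the three evaluation procedures of Section~\ref{sec:eval} (reduction to connected queries, Lemma~\ref{lemma:bool} plus Theorem~\ref{thm:Yannakakis} for $\booltask$, Theorem~\ref{thm:BGS-enum} on $\ciQ$ over $\ciD$ combined with $\myEnum$ and Lemma~\ref{lemma:mainLemma} for $\enumtask$, and the bottom-up dynamic program justified by Lemma~\ref{lemma:counting} for $\counttask$). Your emphasis on the non-emptiness guarantee of Lemma~\ref{lemma:mainLemma}(b) as the crux of the delay bound matches the paper's own discussion.
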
 

\section{Final Remarks}\label{sec:conclusion}
Our main result provides a solution for the problem $\IndexingProblemOurs$ for any
\emph{binary} schema $\sigma$.
Our approach can be summarized as follows.
Upon input of a database $D$ of schema $\sigma$, during the indexing phase we construct a
vertex-labeled and edge-labeled directed graph $\GOne$ that represents $D$.
We then use a suitable version of the well-known color refinement algorithm to construct a
coarsest stable coloring $\col\colon\Adom(D)\to C$ (for a suitable set $C$ of colors)
that refines the vertex-labels of $\GOne$.
Based on this, we construct the data structure $\DSD$ that we call \emph{color-index} and
that we view as the result of the indexing phase.
The main ingredient of $\DSD$ is the \emph{color database} $\ciD$.
This is a database of a newly constructed schema $\cisigma$; its active domain is the set $C$.

During the evaluation phase, we can access the color-index~$\DSD$.
When receiving an arbitrary $k$-ary input query $Q\in \fcACQ$, we use $\DSD$
(and, in particular, $\ciD$) to
\begin{itemize}
	\item
	answer $Q$, in case that $k=0$, in time $O(|Q|{\cdot}|\ciD|)$,
	\item
	compute $|\sem{Q}(D)|$ in time $O(|Q|{\cdot}|\ciD|)$, and
	\item
	enumerate $\sem{Q}(D)$ with delay $O(k)$ after $O(|Q|{\cdot}|\ciD|)$ preprocessing.
\end{itemize}

\noindent
It was known previously that each of these query evaluation tasks can be solved
(without access to our color-index $\DSD$) in time $O(|Q|{\cdot}|D|)$. 
Our result provides an improvement from $|D|$ to $|\ciD|$.
In the worst-case, $|\ciD|=O(|D|)$; but depending on the particular structure of $D$,
the color database $\ciD$ may be much smaller than $D$ (see Proposition~\ref{prop:colordbsize}).

\paragraph{Future work} 
Obvious questions (we currently work on) are:

\emph{Can our approach be generalized from fc-ACQs to queries of free-connex tree-width $\leq k$,
for any fixed $k \geq 1$?}
When sticking with binary schemas, we believe that this can be achieved by using a variant of
the $k$-dimensional Weisfeiler-Leman algorithm (cf.~\cite{CFI-paper,grohe_color_2021}).

\emph{Can our approach be generalized from binary schemas to arbitrary schemas?}
Due to subtleties concerning the notion of fc-ACQs, a straightforward representation of databases
over an arbitrary schema $\sigma$ by databases of a suitably chosen binary schema $\sigma'$
apparently does not help to easily lift our results from binary to arbitrary schemas
(see Appendix~\ref{appendix:BinaryToArbitrary}).
To solve the case for arbitrary schemas, a generalization of the color refinement algorithm from
graphs to hypergraphs would be desirable.
Böker~\cite{Boeker2019} provided a color refinement algorithm that seems promising for handling
queries $Q$ that are free-connex \emph{Berge-acyclic}.
But a new color refinement algorithm needs to be developed to handle also
free-connex acyclic queries that are not free-connex Berge-acyclic.

\emph{Can our approach be lifted to a dynamic scenario?} 
The \emph{dynamic Yannakakis} approach of Idris, Ugarte, and Vansummeren~%
\cite{DynamicYannakakis2017} lifts Theorem~\ref{thm:BGS-enum} to a dynamic scenario,
where a fixed query $Q$ shall be evaluated against a database $D$ that is frequently updated.
Note that our color-index $\DSD$ is not built for a fixed query,
but for supporting \emph{all} queries.
When receiving an update for the database $D$,
we would have to modify the color-index $\DSD$ accordingly.
Already a small change of the color of a single vertex or its local neighborhood in $\GOne$ might
impose drastic changes in the result of the color refinement algorithm and might change
the size of $\ciD$ from $O(1)$ to $\Theta(|D|)$, or vice versa.
(For example, consider the database $D^n$ from the proof of Proposition~\ref{prop:colordbsize}.
Inserting the tuple $(1,1)$ into the $R$-relation leads to 
a coarsest stable coloring with approximately $n/2$ colors;
and subsequently deleting $(1,1)$ from the $R$-relation again reduces the number of colors to 1.)
Nevertheless, it is an interesting question if $\DSD$ can be updated in time proportional to
the actual difference between the two versions of $\DSD$ before and after the update.

\emph{What is the potential practical impact of our approach?}
To further reduce the size of $\ciD$,
one may decide to let the color refinement procedure only run for a fixed number $r$ of rounds.
The resulting coarser variant of our color-index can still be used
for evaluating fc-ACQs of radius up to $r$.
We believe that this variant of the color-index deserves a detailed experimental evaluation.
 
\clearpage

\bibliography{literature}

\newpage

\appendix

\section*{APPENDIX}

\section{Details on acyclic and free-connex acyclic conjunctive queries}\label{appendix:fcACQ}

\subsection{Proof of Proposition~\ref{prop:binaryfcacqs}}
Let $\sigma$ be a binary schema, i.e., every $R\in\sigma$ has arity $\ar(R)\leq 2$.
In the following, if $G$ is a (hyper)graph, then we denote by $G - e$ the resulting (hyper)graph
after removing the (hyper)edge $e \in E(G)$ from $G$.
If a vertex becomes isolated due to this procedure, we remove it as well.
Further, we denote by $G + e$ the (hyper)graph resulting from adding the (hyper)edge $e$ to $G$.
First we show that we can assume that $Q$ contains no loops
and only uses predicates of arity exactly 2.
\begin{claim}
	$Q$ is free-connex acyclic iff $Q'$ is free-connex acyclic,
	where $\Atoms(Q') \isdef \set{ R(x,y) \mid R(x,y) \in \Atoms(Q),\ x {\neq} y }$
	and $\free(Q') \isdef \free(Q) \cap \Vars(Q')$.
\end{claim}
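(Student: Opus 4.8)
The plan is to single out the atoms that $Q'$ discards — namely the unary atoms $U(x)$ and the self-loop atoms $R(x,x)$ — and to observe that each of them contributes to the hypergraph $H(Q)$ only the singleton hyperedge $\set{x}$. Hence $H(Q')$ arises from $H(Q)$ by deleting some singleton hyperedges and then deleting any vertex that thereby becomes isolated; likewise $H(Q') + \free(Q')$ arises from $H(Q) + \free(Q)$ by the same two kinds of deletion, once one notes that $\free(Q') = \free(Q) \cap \vars(Q')$ drops precisely the free variables of $Q$ that survive in no atom of $Q'$. So the whole claim reduces to showing that these two elementary operations on hypergraphs both preserve and reflect $\alpha$-acyclicity; since $Q$ is free-connex acyclic iff both $H(Q)$ and $H(Q) + \free(Q)$ are $\alpha$-acyclic (and similarly for $Q'$), the claim follows by combining the two resulting equivalences.

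The technical core is the following lemma, which I would prove first: \textbf{(i)} if $H$ is a hypergraph and $e \in E(H)$ satisfies $e \subseteq f$ for some hyperedge $f \in E(H)$ with $f \neq e$, then $H$ is $\alpha$-acyclic iff $H - e$ is $\alpha$-acyclic; and \textbf{(ii)} if a vertex $x$ lies in exactly one hyperedge of $H$, then adding or removing $x$ from that hyperedge preserves $\alpha$-acyclicity, and if that hyperedge is $\set{x}$ itself then adding or removing the whole hyperedge (together with $x$) preserves $\alpha$-acyclicity. The "$\Leftarrow$" parts are routine: given a join tree of the smaller hypergraph, attach a new leaf $\set{x}$ below a node $f$ with $x \in f$ (for (ii) with a singleton, attach $\set{x}$ below any node, as $x$ occurs nowhere else), and verify the running-intersection property survives. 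The interesting part is "$H$ $\alpha$-acyclic $\Rightarrow$ $H - e$ $\alpha$-acyclic" in (i): take a join tree $T$ of $H$; if the node $e$ is a leaf of $T$, delete it; otherwise let $t$ be the neighbour of $e$ on the $T$-path from $e$ to $f$, note that $e \subseteq t$ (every $x \in e$ lies in both $e$ and $f$, hence in every node on that path by connectivity), and contract the edge $\set{e,t}$, i.e.\ delete $e$ and make all its other neighbours adjacent to $t$. A short check, using $e \subseteq t$, shows the result is a join tree of $H - e$.

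Equipped with the lemma, I would peel the discarded singleton hyperedges off $H(Q)$ one at a time: a singleton $\set{x}$ coming from a unary or loop atom either is \emph{absorbable}, because $x$ also occurs in a $2$-element hyperedge and part (i) applies, or else $x$ occurs nowhere else, so $\set{x}$ is the unique hyperedge on an isolated vertex and part (ii) removes $\set{x}$ and $x$ together. Each step preserves and reflects $\alpha$-acyclicity, and after all steps exactly $H(Q')$ remains; this yields "$H(Q)$ $\alpha$-acyclic iff $H(Q')$ $\alpha$-acyclic". Running the same peeling inside $H(Q) + \free(Q)$ turns it into $H(Q') + \free(Q')$: a discarded singleton $\set{x}$ with $x \in \free(Q)$ is handled by part (i) when $\free(Q) \supsetneq \set{x}$ or $x$ lies in a binary atom, and by part (ii) otherwise; and once every such $\set{x}$ is gone, a variable $x$ that occurred only in unary/loop atoms of $Q$ lies solely in the hyperedge $\free(Q)$, so part (ii) lets us delete it, shrinking $\free(Q)$ to $\free(Q')$. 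This gives "$H(Q) + \free(Q)$ $\alpha$-acyclic iff $H(Q') + \free(Q')$ $\alpha$-acyclic", and combining the two equivalences proves the claim.

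The step I expect to be the main obstacle is the contraction in part (i) of the lemma: after deleting the non-leaf node $e$ and re-wiring its neighbours onto $t$, one must check that the connected-subtree condition still holds for every variable, which hinges on the observation $e \subseteq t$ and a careful but routine case split on whether a given variable lies in $e$. A minor point to handle is the degenerate case in which $Q$ has no binary atom with two distinct variables, so that $Q'$ has no atoms at all; one should either exclude it explicitly (as is customary, a CQ has at least one atom) or observe that then $H(Q)$, $H(Q')$, $H(Q)+\free(Q)$ and $H(Q')+\free(Q')$ are all edgeless (hence trivially $\alpha$-acyclic) and the equivalence is immediate.
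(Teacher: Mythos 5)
Your proof is correct and follows essentially the same route as the paper's: both observe that the discarded unary and self-loop atoms contribute only singleton hyperedges, and both establish the equivalence by explicit join-tree surgery (detaching a singleton node and rewiring its neighbours onto an adjacent node containing the variable, resp.\ re-attaching it as a leaf), together with swapping the hyperedge $\free(Q)$ for $\free(Q')$. Your packaging of this surgery as two reusable $\alpha$-acyclicity-preservation lemmas applied one singleton at a time is only an organizational variation of the paper's one-shot transformation.
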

\begin{proof}
	\noindent ($\Rightarrow$)\; Let $Q$ be free-connex acyclic.
	Then $H(Q) + \free(Q)$ has a join-tree $T$.
	We translate $T$ into a join-tree $T'$ for $H(Q')+\free(Q')$ as follows.
	First, we replace the node $\free(Q)$ in $T$ with $\free(Q')$.
	Then, for every hyperedge $e = \set{ x }$,
	we look for a neighbor $f$ of $e$ in $T$ such that $x \in f$.
	If there is none, we let $f$ be an arbitrary neighbor.
	Then we connect all other neighbors of $e$ in $T$ with $f$ and remove $e$ from $T$.
	The resulting graph is still a tree and the set of nodes containing $x$ still
	induces a connected subtree, or it is empty --- but in that case also $x \not\in \Vars(Q')$.
	I.e., now the set of nodes in $T'$ is precisely the set of hyperedges of $H(Q')+\free(Q')$,
	and for every $x \in \Vars(Q')$,
	the set $\set{ t \in V(T') \mid x \in t }$ still induces a connected subtree.
	Hence, $T'$ is a join-tree for $H(Q') + \free(Q')$.

	Since $Q$ is free-connex acyclic, there also exists a join-tree $T$ for $H(Q)$.
	With the same reasoning as above, $T$ can be transformed into a join-tree for $H(Q')$.
	In summary, we obtain that $Q'$ is free-connex acyclic.
	\medskip

	\noindent ($\Leftarrow$)\;
	Let $Q'$ be free-connex acyclic.
	Then $H(Q') + \free(Q')$ has a join-tree $T'$.
	We can extend $T'$ to a join-tree $T$ for $H(Q) + \free(Q)$ as follows.
	First, we replace the node $\free(Q')$ in $T'$ with $\free(Q)$.
	For every hyperedge $e$ in $H(Q)$ of the form $e = \set{ x }$
	we add $e$ as a new node to $T'$ and insert the edge $\set{ e, f }$ into $T'$ as follows:
	If $x \not\in \Vars(Q')$ we let $f$ be an arbitrary node in $V(T')$.
	Otherwise, let $f \in V(T')$ be a node such that $x \in f$.
	It is easy to verify that the resulting tree $T$ is a join-tree for $H(Q) + \free(Q)$.
	Since $Q'$ is free-connex acyclic, there also exists a join-tree $T$ for $H(Q')$.
	With the same reasoning as above, $T'$ can be transformed into a join-tree for $H(Q)$.
	In summary, we obtain that $Q$ is free-connex acyclic.
\end{proof}

Due to the above claim, we can assume w.l.o.g.\
that $Q$ contains no self-loops and only uses predicates of arity 2.
Notice that this means that $H(Q) = G(Q)$.
It is well-known that for undirected graphs the notion of
$\alpha$-acyclicity and the standard notion of acyclicity for graphs coincide.
I.e., a graph is $\alpha$-acyclic iff it is acyclic (see~\cite{BraultBaron16} for an overview).
Hence, in our setting, $H(Q)=G(Q)$ is $\alpha$-acyclic,
if and only if it is acyclic, i.e., a forest.
This immediately yields the first statement of Proposition~\ref{prop:binaryfcacqs}.
The second statement of Proposition~\ref{prop:binaryfcacqs} is obtained by the following claim.

\begin{claim}\label{appendix:claim:1forprop}
	$Q$ is free-connex acyclic iff $G(Q)$ is acyclic and the following statement is true:
	(**) For every connected component $C$ of $G(Q)$,
	the subgraph of $C$ induced by the set $\free(Q) \cap V(C)$ is connected or empty.
\end{claim}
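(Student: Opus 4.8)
The plan is to reduce the statement, via the claim just proved, to a question about a forest with one extra hyperedge, and then analyze that hypergraph through the classical characterization of $\alpha$-acyclicity. By the preceding claim we may assume $Q$ has no self-loops and only binary atoms, so $H(Q)=G(Q)$, this graph has no isolated vertices, and for such graphs $\alpha$-acyclicity coincides with acyclicity (being a forest). By definition $Q$ is free-connex acyclic iff $G(Q)$ is $\alpha$-acyclic (equivalently acyclic) and $G(Q)+\free(Q)$ is $\alpha$-acyclic. Hence, writing $F:=G(Q)$ and $Z:=\free(Q)$, and noting that if $G(Q)$ is not acyclic both sides of the claimed equivalence fail, it suffices to show: if $F$ is a forest, then the hypergraph $F+Z$ (the forest together with the single hyperedge $Z$) is $\alpha$-acyclic iff (**) holds. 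I would analyze $F+Z$ via the Beeri--Fagin--Maier--Yannakakis characterization (see~\cite{DBLP:journals/jacm/BeeriFMY83,BraultBaron16}): a hypergraph is $\alpha$-acyclic iff it is conformal and its primal graph is chordal. Here the primal graph of $F+Z$ is $P:=F\cup K_Z$, the forest $F$ with an added clique on $Z$.

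For the direction ``(**)$\Rightarrow\alpha$-acyclic'' I would check conformality and chordality of $P$ separately. \emph{Conformality:} let $K$ be a clique of $P$; if $K\subseteq Z$ it lies in the hyperedge $Z$, so assume some $u\in K\setminus Z$. Every other vertex of $K$ is then an $F$-neighbour of $u$ (an edge at $u\notin Z$ cannot come from $K_Z$). Two $F$-neighbours of $u$ outside $Z$, together with $u$, would form a triangle in the forest $F$ --- impossible; two $F$-neighbours $z_1,z_2$ of $u$ inside $Z$ would lie with $u$ in one component $C$ of $F$, with the unique $C$-path between $z_1,z_2$ being $z_1$--$u$--$z_2$, so $z_1,z_2$ would be in distinct components of $F[Z\cap V(C)]$, contradicting (**); and one $F$-neighbour inside plus one outside $Z$ again gives a triangle in $F$. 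Hence $|K\setminus\{u\}|\le 1$, so $|K|\le 2$, and cliques of size $\le 2$ are always inside a hyperedge (using that $F$ has no isolated vertices). \emph{Chordality:} suppose $P$ has a chordless cycle $\Gamma$ of length $\ge 4$. No three consecutive vertices of $\Gamma$ lie in $Z$ (that would give a $K_Z$-chord), so $\Gamma$ has no two consecutive edges joining $Z$-vertices, and in fact at most one such edge (two of them, being non-adjacent on $\Gamma$, have all four endpoints in $Z$ and thus a $K_Z$-chord). If $\Gamma$ has no such edge it is a cycle in the forest $F$ --- impossible; if it has exactly one, say $v_1v_2$ with $v_1,v_2\in Z$, the remaining edges $v_2v_3,\dots,v_\ell v_1$ are $F$-edges, so $v_2,\dots,v_\ell,v_1$ is the (unique) forest path between $v_1,v_2$, whence by (**) it stays inside $Z$, so $v_3\in Z$ and $v_1v_3\in K_Z$ is a chord --- contradiction. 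Thus $P$ is chordal, and $F+Z$ is $\alpha$-acyclic.

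For the converse I would argue contrapositively. If (**) fails, fix a component $C$ with $Z\cap V(C)\neq\emptyset$ and $F[Z\cap V(C)]$ disconnected, pick $a,b$ in different components of $F[Z\cap V(C)]$, and let $a=u_0,u_1,\dots,u_m=b$ be the unique path between them in $F$; it must leave $Z$, so $u_i\notin Z$ for some $0<i<m$. In $P$ the $F$-edges $u_0u_1,\dots,u_{m-1}u_m$ together with the $K_Z$-edge $u_0u_m$ form a cycle of length $m{+}1\ge 3$; since no two non-consecutive $u_j$'s are $F$-adjacent, every chord of this cycle is a $K_Z$-edge between two $Z$-vertices, so $u_i$ lies on no chord, and repeatedly cutting chords yields a chordless cycle $\Gamma$ still through $u_i$. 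If $|\Gamma|\ge 4$, then $P$ is not chordal. If $|\Gamma|=3$, then both $\Gamma$-edges at $u_i$ are $F$-edges (as $u_i\notin Z$) while the opposite edge is not (that would be a triangle in $F$), hence it lies in $K_Z$ and the other two vertices of $\Gamma$ are in $Z$ --- giving a triangle clique of $P$ contained in no hyperedge of $F+Z$, so $F+Z$ is not conformal. Either way $F+Z$ is not $\alpha$-acyclic, so $Q$ is not free-connex acyclic.

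The part needing the most care is this last (converse) direction: the failure of (**) only provides a forest path that exits $Z$, not the convenient situation of a single non-$Z$ vertex with two $Z$-neighbours, so one must pass to the cycle in the primal graph and extract a chordless subcycle, keeping track that the distinguished non-$Z$ vertex is incident to no chord and therefore survives all chord-cuttings. The remaining points --- that the reduction of the preceding claim leaves $H(Q)=G(Q)$ without isolated vertices, and the degenerate cases $Z=\emptyset$ or $|Z|\le 1$ (where (**) is trivially true and $F+Z$ is obviously $\alpha$-acyclic) --- are routine.
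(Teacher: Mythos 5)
Your proof is correct, but it takes a genuinely different route from the one in the paper. After the common reduction to self-loop-free, purely binary queries (so $H(Q)=G(Q)=:F$ and $Z:=\free(Q)$), the paper proves both directions by explicit induction on join trees: for the forward direction it inducts on the size of a join tree of $H(Q)+\free(Q)$, peeling off a leaf node $\set{x,y}$ and doing a case analysis on whether one endpoint is pendant or both are free; for the converse it inducts on $|V(G(Q))|$, removing a quantified leaf (which exists by (**)) and extending the join tree supplied by the induction hypothesis. You instead invoke the classical Beeri--Fagin--Maier--Yannakakis characterization ($\alpha$-acyclic iff conformal with chordal primal graph) and analyze the graph $F\cup K_Z$ directly: under (**), every clique has size at most $2$ or lies inside $Z$, and every cycle of length at least $4$ acquires a $K_Z$-chord; when (**) fails, the forest path between two separated free variables, closed by a $K_Z$-edge, yields after chord-cutting either a long chordless cycle or a non-conformal triangle --- and your key observation that all chords of the evolving cycle are $K_Z$-edges, so the distinguished non-free internal vertex is never an endpoint of a cut, is exactly the point that makes this work. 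Your argument is induction-free and gives a transparent picture of where free-connexness breaks, at the price of importing a nontrivial (though standard and cited) characterization of $\alpha$-acyclicity; the paper's argument is longer but self-contained, using only the definition of a join tree. One point you gesture at but should state explicitly: the reduction via the preceding claim also preserves condition (**), because the discarded vertices are isolated in $G(Q)$ and form singleton components on which (**) is vacuous --- though the paper is equally terse on this.
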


\begin{proof}
\noindent $(\Rightarrow)$\;
By assumption, $Q$ is free-connex acyclic.
I.e., $H(Q)$ is $\alpha$-acyclic and $H(Q)+\free(Q)$ is $\alpha$-acyclic.
Since $H(Q) = G(Q)$, and since on graphs $\alpha$-acyclicity coincides with acyclicity,
this means that $G(Q)$ is acyclic, i.e., it is a forest.

Since $H(Q)+\free(Q)$ is $\alpha$-acyclic, there exists a join-tree $T$ for $H(Q)+\free(Q)$.
To prove (**) we proceed by induction on the number of nodes in the join-tree of $H(Q) + \free(Q)$.

If $T$ has at most two nodes, $H(Q)$ consists of a single hyperedge,
and this is of the form $\set{ x,y }$ with $x \neq y$.
Therefore, $G(Q)$ consists of a single edge $\set{ x,y }$, and in this case (**) trivially holds.

In the inductive case, let $T$ be a join tree of $H(Q) + \free(Q)$.
Consider some leaf $e$ of $T$ (i.e, $e$ only has one neighbor in $T$)
of the form $e = \set{ x,y }$ and its parent $p$.
Then, $T - e$ is a join-tree for $H(Q') + \free(Q)$ where
$\Atoms(Q') \isdef \set{ R(z_1, z_2) \in \Atoms(Q) \mid \set{ z_1, z_2 } \neq \set{ x, y } }$
and $\free(Q')$ is the set of all variables in $\free(Q)$ that occur in an atom of $Q'$.
It is easy to see that this implies the existence of a join-tree $T'$ for $H(Q') + \free(Q')$.
Since, furthermore, $G(Q')$ is acyclic, the query $Q'$ is free-connex acyclic,
and by induction hypothesis, (**) is true for $G(Q')$.
Note that $G(Q')=G(Q)-e$.
Obviously, $x$ and $y$ are in the same connected component $C$ in $G(Q)$.
We have to consider the relationship between $x$, $y$ and the rest of $C$.

Since $e$ is a leaf in $T$, $x,y \in p$ would imply that $p = \free(Q)$.
Using that we know that
\begin{enumerate}[(i)]
	\item either $y$ only has $x$ as a neighbor in $G(Q)$ or vice-versa, or
	\item $x$ and $y$ are both free variables and both have other neighbors
	than $y$ and $x$, respectively.
\end{enumerate}

\emph{Case (i)}:\; Assume w.l.o.g.\ that $x$ is the only vertex adjacent to $y$ in $G(Q)$.
Then $C' \isdef C - \set{ x,y }$ is a connected component in $G(Q')$, $y \not\in \Vars(Q')$,
i.e., in particular also $y \not\in \free(Q')$.
By induction hypothesis, $\free(Q') \cap V(C')$ induces a connected subgraph on $C'$.
If $y \not\in \free(Q)$ it follows trivially,
that $\free(Q) \cap V(C)$ induces a connected subgraph on $C$.
If $y \in \free(Q)$ and $\free(Q) \cap V(C) = \set{ y }$, this is also trivial.
Otherwise, there exists a $z \in \free(Q) \cap V(C)$ different from $y$.
Since $x,y,z$ are all in the same connected component $C$,
but $x$ is the only vertex adjacent to $y$,
there must be another vertex $w$ that is adjacent to $x$.
Therefore, $x$ is part of another node $\set{ x, w }$ somewhere in $T$,
which by definition means that $x$ must be in $p$.
Since $y, z \in \free(Q)$, $y$ must also be in $p$.
Therefore, $p = \free(Q)$ and that means $x \in \free(Q)$.
Hence, $\free(Q) \cap V(C)$ forms a connected component on $C$ in this case as well.

\emph{Case (ii)}:\; Assume that both variables $x$ and $y$ are free
and both have another vertex adjacent to them in $G(Q)$.
Then $x,y \in \free(Q')$ (i.e., $\free(Q) = \free(Q')$) and removing the edge $\set{ x,y }$ splits
$C$ into two connected components $C_x$, $C_y$.
By induction hypothesis, $\free(Q') \cap V(C_x)$ and $\free(Q') \cap V(C_y)$ induce connected
subgraphs on $C_x$ and $C_y$, respectively, and $x$ (and $y$, resp.) are part of them.
Thus, adding the edge $\set{ x,y }$ establishes a connection between them,
so $\free(Q) \cap V(C)$ also induces a connected subtree on $C$.

Every other connected component $D \neq C$ in $G(Q)$ is also one in $G(Q')$.
Thus, (**) is true for $G(Q)$.
\medskip

\noindent $(\Leftarrow)$\;
By assumption, $G(Q)$ is acyclic and (**) is satisfied.
Since $G(Q)=H(Q)$ and $\alpha$-acyclicity coincides with acyclicity on graphs, $Q$ is acyclic.
It remains to show that $H(Q)+\free(Q)$ has a join-tree.

We proceed by induction over the number of vertices in $G(Q)$.
Recall that $Q$ only uses binary relation symbols and that it has no self-loops.
Thus, in the base case, $G(Q)$ consists of two vertices connected by an edge.
Since $H(Q) = G(Q)$, it is easy to see that $H(Q) + \free(Q)$ has a join-tree.

In the inductive case, let $C$ be a connected component of $G(Q)$
that contains a quantified variable, i.e., $\free(Q) \cap V(C) \neq V(C)$.
If no such component exists, $H(Q) + \free(Q)$ has a trivial join-tree by letting
all hyperedges of $H(Q)$ be children of $\free(Q)$ in $T$.

Because of (**) there must be a leaf $x$ in $C$ such that $x \not\in \free(Q)$.
Let $y$ be the parent of $x$, i.e., let $\set{ x,y }$ be an (or rather, the only)
edge in $G(Q)$ that contains $x$.
Let $H' \isdef H(Q) - \set{ x,y }$.
Then $H' = H(Q')$ for the query $Q'$ where
$\Atoms(Q') = \set{ R(z_1, z_2)\in\Atoms(Q) \mid x \not\in \set{ z_1, z_2 } }$
and $\free(Q') = \free(Q) \setminus \set{ x }$.

Let $T'$ be the join-tree that exists by induction hypothesis for
$H(Q') + \free(Q')$, i.e., for $H' + \free(Q) \setminus \set{ x }$.

If $x \not\in \free(Q)$, we can extend $T'$ to a join-tree $T$ for $H(Q)$ by
adding the edge $\set{ x,y }$ as a child node to some node that contains $y$ in $T'$.
We can similarly handle the case $\free(Q) = \set{ x }$
by also inserting $\set{ x }$ as a child of $\set{ x,y }$.

If $x \in \free(Q)$ but $\free(Q) \varsupsetneqq \set{ x }$, it holds that $y \in \free(Q)$,
by the assumption that $\free(Q) \cap V(C)$ induces a connected subgraph.
In this case we can add $\set{ x,y }$ as a child of $\free(Q)$.
\end{proof}

\subsection{Proof of Theorem~\ref{thm:BGS-enum}}
We provide further definitions that are taken from~\cite{BGS-tutorial}.
We use these definitions for the proof of Theorem~\ref{thm:BGS-enum} below.

A \emph{tree decomposition} (TD, for short) of a CQ $Q$ and its hypergraph $H(Q)$
is a tuple $\TD=(\Tree,\Bag)$, such that:
\begin{enumerate}[(1)]
	\item
	$\Tree=(\Nodes(\Tree),\Edges(\Tree))$ is a finite undirected tree, and
	\item
	$\Bag$ is a mapping that associates with every node $\treenode\in\Nodes(\Tree)$
	a set $\Bag(\treenode)\subseteq \Vars(\query)$ such that
	\begin{enumerate}
		\item
		for each atom $\qatom\in\Atoms(\query)$ there exists $\treenode\in\Nodes(\Tree)$
		such that $\Vars(\qatom)\subseteq\Bag(\treenode)$,
		\item\label{item:pathcondition:treedecomp}
		for each variable $v\in\Vars(\query)$ the set
		$\Bag^{-1}(v)\deff\setc{\treenode\in\Nodes(\Tree)}{v\in\Bag(\treenode)}$
		induces a connected subtree of $\Tree$.
	\end{enumerate}
\end{enumerate}
The \emph{width} of $\TD=(\Tree,\Bag)$ is defined as
$\Width(\TD)=\max_{\treenode\in\Nodes(\Tree)}|\Bag(t)| -1$.

A \emph{generalized hypertree decomposition} (GHD, for short) of a CQ $\query$
and its hypergraph $H(Q)$ is a tuple $\HD=(\Tree,\Bag,\Cover)$ that consists of
a tree decomposition $(\Tree,\Bag)$ of $\query$ and a mapping $\Cover$ that associates with
every node $\treenode\in\Nodes(\Tree)$ a set $\Cover(\treenode)\subseteq\Atoms(\query)$ such that
$\Bag(\treenode)\subseteq\bigcup_{\qatom\in\Cover(\treenode)}\Vars(\qatom)$.
The sets $\Bag(\treenode)$ and $\Cover(\treenode)$ are called the \emph{bag} and the \emph{cover}
associated with node $\treenode\in\Nodes(\Tree)$.
The \emph{width} of a GHD $\HD$ is defined as the maximum number of atoms in a $\Cover$-label
of a node of $\Tree$, i.e., $\Width(\HD)=\max_{\treenode\in\Nodes(\Tree)}|\Cover(t)|$.
The \emph{generalized hypertree width} of a CQ $\query$, denoted $\GHW(\query)$,
is defined as the minimum width over all its generalized hypertree decompositions.

A tree decomposition $\TD=(\Tree,\Bag)$ of a CQ $\query$ is \emph{free-connex} if there is a set
$\freetreenodes \subseteq \Nodes(\Tree)$ that induces a connected subtree of $\Tree$
and that satisfies the condition
$\free(\query) = \bigcup_{\treenode\in\freetreenodes} \Bag(\treenode)$.
Such a set $\freetreenodes$ is called a \emph{witness} for the free-connexness of $\TD$.
A GHD is free-connex if its tree decomposition is free-connex.
The \emph{free-connex generalized hypertree width} of a CQ $\query$, denoted $\fcGHW(\query)$,
is defined as the minimum width over all its free-connex generalized hypertree decompositions.

It is known (\cite{DBLP:journals/jcss/GottlobLS02,Bagan.2007,BraultBaron_PhD};
see also~\cite{BGS-tutorial} for an overview as well as proof details)
that the following is true for every schema $\sigma$ and every CQ $Q$ of schema $\sigma$:
\begin{enumerate}[(I)]
	\item\label{item:acqchara}
	$Q$ is acyclic iff $\GHW(Q)=1$.
	\item\label{item:fcacqchara}
	$Q$ is free-connex acyclic iff $\fcGHW(Q)=1$.
\end{enumerate}

A GHD $\HD=(\Tree,\Bag,\Cover)$ of a CQ $\query$ is called \emph{complete} if,
for each atom $\qatom\in\Atoms(\query)$ there exists a node $\treenode\in\Nodes(\Tree)$
such that $\Vars(\qatom)\subseteq\Bag(\treenode)$ and $\qatom\in\Cover(\treenode)$.

The following has been shown in~\cite{BGS-tutorial} (see Theorem~4.2 in~\cite{BGS-tutorial}):
\begin{theorem}[\cite{BGS-tutorial}]\label{thm:BGS-refined}
	For every schema $\sigma$ there is an algorithm which receives as input a query $Q\in\fcACQ$,
	a complete free-connex width~1 GHD $\HD=(\Tree,\Bag,\Cover)$ of $\query$
	along with a witness $\freetreenodes \subseteq\Nodes(\Tree)$,
	and a $\sigma$-db $D$ and computes within preprocessing time $O(|\Nodes(\Tree)|{\cdot}|D|)$
	a data structure that allows to enumerate $\sem{\query}(D)$ with delay $O(|\freetreenodes|)$.
\end{theorem}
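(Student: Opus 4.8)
The plan is to run a combined-complexity version of Yannakakis' full-reducer algorithm on the bag relations induced by the decomposition, and then to enumerate the result by traversing only the free part $T[\freetreenodes]$. First I would extract the relevant relations. Since $\Width(\HD)=1$, each node $\treenode\in\Nodes(\Tree)$ has a single cover atom $\qatom_\treenode\in\Cover(\treenode)$ with $\Bag(\treenode)\subseteq\Vars(\qatom_\treenode)$; as $\sigma$ is fixed, $|\Bag(\treenode)|$ is bounded by the maximum arity in $\sigma$ and hence is $O(1)$. For every node I compute the \emph{bag relation} $B_\treenode\deff\proj_{\Bag(\treenode)}(\qatom_\treenode^{D})$, which takes time $O(|D|)$ per node by one scan of the corresponding relation of $D$, so $O(|\Nodes(\Tree)|{\cdot}|D|)$ in total, with $|B_\treenode|\le|D|$. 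Completeness of $\HD$ guarantees that every atom of $\query$ occurs as a cover atom whose variables all lie in the corresponding bag, so together with the tree-decomposition conditions the natural join $\bowtie_{\treenode}B_\treenode$ equals exactly the set of variable-valuations that are homomorphisms from $\query$ to $D$.

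Next I would root $\Tree$ at some $\treeroot\in\freetreenodes$ and apply the Yannakakis \emph{full reducer}: a bottom-up pass replacing each parent relation $B_{\treenodeparent}$ by $B_{\treenodeparent}\Semijoin B_\treenode$ for each child $\treenode$, followed by a top-down pass replacing each child $B_\treenode$ by $B_\treenode\Semijoin B_{\treenodeparent}$. By the RAM-model assumptions of Section~\ref{sec:preliminaries}, each semijoin of two bounded-arity relations runs in time $O(|D|)$, and there are $O(|\Nodes(\Tree)|)$ of them, so this pass also costs $O(|\Nodes(\Tree)|{\cdot}|D|)$. Afterwards the relations are pairwise consistent along every tree edge, which on an acyclic structure forces global consistency: every tuple of every $B_\treenode$ extends to a tuple of $\bowtie_{\treenode}B_\treenode$.

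The crucial step is to reduce enumeration to the free subtree. Because $\freetreenodes$ is a witness, $\bigcup_{\treenode\in\freetreenodes}\Bag(\treenode)=\free(\query)$, so no node of $\freetreenodes$ can contain a quantified variable, i.e.\ $\Bag(\treenode)\subseteq\free(\query)$ for all $\treenode\in\freetreenodes$, and $T[\freetreenodes]$ is a connected subtree containing $\treeroot$. I would then argue that $\sem{\query}(D)=\proj_{\free(\query)}\big(\bowtie_{\treenode}B_\treenode\big)=\bowtie_{\treenode\in\freetreenodes}B_\treenode$: global consistency lets every tuple of the subtree-join $\bowtie_{\treenode\in\freetreenodes}B_\treenode$ be extended through the consistent quantified subtrees hanging below $\freetreenodes$ to a full homomorphism, and conversely every homomorphism projects into this subtree-join, with the connectivity condition on each variable's occurrence set ensuring values propagate consistently along the tree. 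Moreover the reduced relations restricted to $\freetreenodes$ are pairwise consistent along the edges of $T[\freetreenodes]$, hence globally consistent on that subtree, so its join has no dangling tuples. For every edge of $T[\freetreenodes]$ I build, in time $O(|D|)$, a hash index grouping the tuples of the child relation by their projection onto the shared variables, giving $O(1)$ access to the first matching child tuple and $O(1)$ advancement; this again stays within $O(|\Nodes(\Tree)|{\cdot}|D|)$.

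Finally, enumeration is a depth-first traversal of $T[\freetreenodes]$: iterate over $B_{\treeroot}$, and for each current tuple descend into the children in $\freetreenodes$, at each node iterating over the indexed matching tuples and emitting the assembled free-tuple once all nodes of $\freetreenodes$ are assigned. Because $T[\freetreenodes]$ is a tree, this produces each tuple of $\sem{\query}(D)$ exactly once, and because the relations over $\freetreenodes$ carry no dangling tuples, every partial assignment explored does extend, so producing each output and each backtracking step touches at most $O(|\freetreenodes|)$ nodes, giving delay $O(|\freetreenodes|)$. The main obstacles I expect are the two data-independence points: proving the projection identity $\proj_{\free(\query)}(\bowtie_{\treenode}B_\treenode)=\bowtie_{\treenode\in\freetreenodes}B_\treenode$ cleanly from global consistency and the connectivity conditions, and arranging the traversal together with its indices so that the delay is genuinely $O(|\freetreenodes|)$ with no scanning of ``dead'' partial tuples --- which is exactly the classic difficulty in the Bagan--Durand--Grandjean constant-delay argument.
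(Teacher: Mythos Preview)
The paper does not actually prove this theorem: it is stated in the appendix as a quotation of Theorem~4.2 from~\cite{BGS-tutorial}, with no proof given beyond the citation. So there is no ``paper's own proof'' to compare against.

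That said, your outline is the standard argument underlying the cited result and is essentially correct. The three ingredients --- materializing the width-1 bag relations in $O(|\Nodes(\Tree)|{\cdot}|D|)$, running the two-pass Yannakakis semijoin reducer to obtain global consistency, and then enumerating only over the connected free subtree $T[\freetreenodes]$ --- are exactly how~\cite{BGS-tutorial} (and before it~\cite{Bagan.2007,BraultBaron_PhD}) proceed. Your observation that $\Bag(\treenode)\subseteq\free(\query)$ for every $\treenode\in\freetreenodes$ is the point that makes the projection identity $\proj_{\free(\query)}\big(\bowtie_\treenode B_\treenode\big)=\bowtie_{\treenode\in\freetreenodes}B_\treenode$ go through; the delay bound then follows because, after full reduction, the nested-loop traversal over $T[\freetreenodes]$ never encounters a dead end, so any two consecutive outputs are separated by a backtrack-and-advance path of length $O(|\freetreenodes|)$ in the enumeration tree. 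The two ``obstacles'' you flag are real but well understood, and your sketch already contains the right ideas for both.
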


It is known that for every schema $\sigma$ there is an algorithm that receives as input
a query $Q\in\fcACQ$ and computes in time $O(|Q|)$ a free-connex width 1 GHD of $Q$
(\cite{Bagan_PhD}; see also Section~5.1 in~\cite{BGS-tutorial}).
When applying to this GHD the construction used in the proof of~\cite[Lemma~3.3]{BGS-tutorial},
and afterwards performing the completion construction from~\cite[Remark~3.1]{BGS-tutorial},
one can compute in time $O(|Q|)$ a GHD $\HD=(\Tree,\Bag,\Cover)$ of $\query$ along with a witness
$\freetreenodes \subseteq\Nodes(\Tree)$ that satisfy the assumptions of
Theorem~\ref{thm:BGS-refined} and for which, additionally,
the following is true: $|U|\leq |\free(Q)|$ and $|\Nodes(\Tree)|\in O(|Q|)$.
In summary, this provides a proof of Theorem~\ref{thm:BGS-enum}.

\subsection{An example concerning a non-binary schema}\label{appendix:BinaryToArbitrary}
At a first glance one may be tempted to believe that it should be straightforward to generalize
our main result from binary to arbitrary schemas by representing databases of
a non-binary schema $\sigma$ as databases of a suitably chosen binary schema $\sigma'$.
However, the notion of fc-ACQs is quite subtle, and an fc-ACQ of schema $\sigma$ might translate
into a CQ of schema $\sigma'$ that is not an fc-ACQ, hence prohibiting to straightforwardly
reduce the case of general schemas to our results for fc-ACQs over binary schemas.
Here is a concrete example.

Let us consider a schema consisting of a single, ternary relation symbol $R$.
A straightforward way to represent a database $D$ of this schema by
an edge-labeled bipartite digraph $D'$ is as follows.
$D'$ has 3 edge-relations called $E_1, E_2, E_3$.
Every element in the active domain of $D$ is a node of $D'$.
Furthermore, every tuple $t = (a_1,a_2,a_3)$ in the $R$-relation of $D$ serves as a node of $D'$,
and we insert into $D'$ an $E_1$-edge $(t, a_1)$, an $E_2$-edge $(t,a_2)$,
and an $E_3$-edge $(t,a_3)$.
Now, a CQ $Q$ posed against $D$ translates into a CQ $Q'$ posed against $D'$ as follows:
$Q'$ has the same head as $Q$.
For each atom $R(x,y,z)$ in the body of $Q$ we introduce a new variable $u$ and insert
into the body of $Q'$ the atoms $E_1(u,x), E_2(u,y), E_3(u,z)$.

We want to build an index structure on $D$ that supports constant-delay enumeration and counting
for all fc-ACQs $Q$ posed against $D$.
The problem is that an fc-ACQ $Q$ against $D$ does \emph{not} necessarily translate
into an \emph{fc-ACQ} $Q'$ against $D'$ --- and therefore,
lifting our approach from binary to arbitrary schemas is not so easy.
Here is a specific example of such a query $Q$:
\[
	\Ans(x,y,z) \ \leftarrow \ R(x,y,z), \ R(x,x,y), \ R(y,y,z), \ R(z,z,x).
\]
This query is an fc-ACQ (as a witness, take the join-tree whose root is labeled with $R(x,y,z)$
and has 3 children labeled with the remaining atoms in the body of the query).
But the associated query $Q'$ is
\[
\Ans(x,y,z) \ \leftarrow
	\begin{array}[t]{l}
		E_1(u_1,x), E_2(u_1,y), E_3(u_1,z), \\
		E_1(u_2,x), E_2(u_2,x), E_3(u_2,y), \\
		E_1(u_3,y), E_2(u_3,y), E_3(u_3,z), \\
		E_1(u_4,z), E_2(u_4,z), E_3(u_4,x).
	\end{array}
\]
Note that the Gaifman-graph of $Q'$ is not acyclic (it contains the cycle $x-u_2-y-u_3-z-u_4-x$).
Therefore, by Proposition~\ref{prop:binaryfcacqs}, $Q'$ is not an fc-ACQ\@.
This simple example illustrates that the straightforward encoding of the database $D$
as a database over a binary schema won't help to easily lift the main result of
this paper from binary to arbitrary schemas.

\section{Proof of Theorem~\ref{thm:ColorRefinement}}\label{appendix:proof_of_runtime}

\begin{proof}[Proof of Theorem~\ref{thm:ColorRefinement}]
	We slightly adapt the proof of Theorem~12 in~\cite{BBG-ColorRefinement}
	since we also have to consider vertex colors.
	Nothing interesting happens when we do that, though.

	Let $G' = (\VOne \cup V', E', \alpha)$ be the vertex-colored digraph where
	$V' \isdef \set{ v_{(u,w)} \mid (u,w) \in \EOne }$,
	$E' \isdef \set{ (u, v_{u,w}) \mid (u,w) \in \EOne }
		\cup \set{ (v_{u,w}, w) \mid (u,w) \in \EOne }$
	and $\alpha(u) \isdef \vl(u)$ for all $u \in \VOne$
	and $\alpha(v_e) \isdef \el(e)$ for all $v_e \in V'$.
	According to~\cite{BBG-ColorRefinement}, one can compute a coarsest stable coloring
	$\beta'$ of $G'$ that refines $\alpha$ (and therefore also $\vl$) in time
	$O((|\VOne|+|V'|+|E'|) \cdot \log(|\VOne|+|V'|))$.
	In the case of $G'$ (a graph without edge-colors),
	stable means that for all $u,v$ such that $\beta'(u) = \beta'(v)$
	it holds that $| \NSucc{}{u}{c}| = |\NSucc{}{v}{c}|$ for all $c \in \img(\beta')$.
	Here, $\NSucc{}{u}{c}$ denotes the set of all outgoing
	$E'$-neighbors $w$ of $u$ such that $\beta'(w) = c$.
	\bigskip

	We will now show that a stable coloring of $\GOne$ corresponds to a stable coloring of $G'$
	and vice-versa.

	\medskip
	($\Rightarrow$):\;
	Let $\beta$ be a stable coloring of $\GOne$.
	Let $\beta'$ be the extension of $\beta$ to $G'$ where for all $v_{(u,w)} \in V'$
	we let $\beta'(v_{(u,w)}) \isdef (\el(u,w), \beta(w))$.
	Now we argue that $\beta'$ is stable on $G'$.
	Notice that the colors on $V$ and on $V'$ are distinct,
	therefore it suffices to consider the following two cases:
	
	\emph{Case 1}:\;
	Let $u,v \in \VOne$ such that $\beta'(u) = \beta'(v)$.
	Notice that $u$ and $v$ only have neighbors in $V'$,
	therefore for all colors $c \in \img(\beta)$ of the original coloring $\beta$,
	we have $|\NSucc{}{u}{c}| = |\NSucc{}{v}{c}| = 0$.
	Next, observe that for every color $c \in \img(\beta)$,
	there is a one-to-one correspondence between the set
	\begin{align*}
		\NSucc{\lambda}{u}{c} &=
			\set{ w \in \VOne \mid (u,w) \in \EOne, \ \el(u,w) = \lambda, \ \beta(w) = c }, \;
			\text{and the set} \\
		\NSucc{}{u}{d} &=
			\set{ v_{(u, w)} \mid (u,w) \in \EOne,\ \el(u,w) = \lambda,\ \beta(w) = c }, \;
			\text{where } d = (\lambda, c) \in \img(\beta').
	\end{align*}
	Thus, let $c \in \img(\beta')$ of the form $(\lambda, d)$.
	Then $|\NSucc{}{u}{c}| = |\NSucc{\lambda}{u}{d}| = |\NSucc{\lambda}{v}{d}| = |\NSucc{}{v}{c}|$.
	I.e., the coloring $\beta'$ is stable.

	\emph{Case 2}:\; Let $v_{e}, v_{f} \in V'$ such that $\beta'(v_{e}) = \beta'(v_{f})$.
	Since there is only one outgoing neighbor for both $v_{e}$ and $v_{f}$ by construction,
	it is easy to see that $|\NSucc{}{v_{e}}{d}| = 1$ and $|\NSucc{}{v_{f}}{d'}| = 1$
	for specific colors $d,d' \in \img(\beta)$ and 0 else.
	By construction, $\beta'(v_e) = (\lambda, d)$ and $\beta'(v_f) = (\lambda', d')$.
	Since $\beta'(v_e) = \beta'(v_f)$, $d = d'$ must hold.

	\medskip
	($\Leftarrow$):\; In the backwards-direction, let $\beta'$ be a stable coloring of $G'$
	and let $\beta$ be the restriction of $\beta'$ to $V$.
	We must argue, that $\beta$ is a stable coloring for $\GOne$.
	Notice that for all $v_{(u,v)}, v_{(u',v')} \in V'$ we have that
	$\beta'(v_{(u,v)}) = \beta'(v_{(u',v')})$ if and only if
	$\el(u,v) = \el(u',v')$ and $\beta'(v) = \beta'(v')$.

	Let $u,v \in \VOne$ such that $\beta(u) = \beta(v)$, let $c \in \img(\beta)$
	and let $\NSucc{\lambda}{u}{c} = \set{ w_1, \dots, w_{\ell} }$ for some $\lambda$.
	By the above observation, there must be a $d \in \img(\beta')$ such that
	$\NSucc{}{u}{d} = \set{ v_{(u,w_1)}, \dots, v_{(u,w_{\ell})} }$.
	Since $\beta'(u) = \beta'(v)$ this means that
	$\NSucc{}{v}{d} = \set{ v_{(v,w_1')}, \dots, v_{(v,w'_{\ell})} }$,
	which by the above observation also means that $\beta'(w'_i) = c$, i.e.,
	$\NSucc{\lambda}{v}{c} = \set{ w_1', \dots, w_{\ell}' }$.
	Hence, the coloring $\beta$ is stable.

	Finally, it is clear that if $\beta$ were not a coarsest stable coloring,
	then $\beta'$ would not be a coarsest stable coloring either.
	Thus, we can use the algorithm from~\cite{BBG-ColorRefinement} to compute
	a coarsest stable coloring $\beta$ for $\GOne$ that refines $\vl$ in time
	$O((|\VOne|+|V'|+|E'|) \cdot \log(|\VOne|+|V'|)) =
		O((|\VOne| + |\EOne|) \cdot \log(|\VOne| + |\EOne|)) =
		O((|\VOne| + |\EOne|) \cdot \log |\VOne|)$.
\end{proof}
 
\section{Missing Proofs in Section~\ref{sec:eval}}\label{sec:evalApp}
Let $D$, $Q \in \fcACQ$ be fixed in this section.
Recall that $\Vars(\QOne) = \Vars(\ciQ)$ and $G(\QOne) = G(\ciQ)$.

\subsection{Fundamental Observations for the Results of Section~\ref{sec:eval}}
\smallskip

\begin{observation}\label{obs:appendix:main-lemma:unary_predicates}
	For every color $c \in C$ and for every $u \in \Adom(\DOne)$ with $\col(u) = c$ we have: \ \
	$\set{ U \mid (c) \in U^{\ciD} } = \vl(u)$.
\end{observation}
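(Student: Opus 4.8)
The plan is to establish the asserted set equality by proving the two inclusions separately; both follow by unfolding the definition of the color database $\ciD$, with the only substantial input being that $\col$ refines $\vl$ (Fact~\ref{fact:col}\eqref{item:obs:col:a}).

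For the inclusion ``$\subseteq$'' I would fix a unary relation symbol $U\in\sigmaOne$ with $(c)\in U^{\ciD}$. By the definition of $U^{\ciD}$ there is some $v\in\Adom(\DOne)$ with $\col(v)=c$ and $(v)\in U^{\DOne}$, i.e.\ $U\in\vl(v)$. Since $\col(u)=c=\col(v)$, Fact~\ref{fact:col}\eqref{item:obs:col:a} yields $\vl(u)=\vl(v)$, hence $U\in\vl(u)$. For the converse ``$\supseteq$'', fix $U\in\vl(u)$, i.e.\ $(u)\in U^{\DOne}$. As $\col(u)=c$, the element $u$ itself witnesses, via the definition of $U^{\ciD}$, that $(c)\in U^{\ciD}$, so $U\in\set{U\mid(c)\in U^{\ciD}}$. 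Combining the two inclusions gives the claim.

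I do not expect any real obstacle here: the argument is a routine unfolding of the construction of $\ciD$. The single place where anything is genuinely used is Fact~\ref{fact:col}\eqref{item:obs:col:a}, which guarantees that $\vl$ is constant on each color class and hence that, when passing from a color $c$ to a representative vertex of that color, the particular choice of representative is irrelevant.
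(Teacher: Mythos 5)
Your proposal is correct and follows essentially the same route as the paper: unfold the definition of $U^{\ciD}$, note that $(v)\in U^{\DOne}$ is equivalent to $U\in\vl(v)$, and use that $\col$ refines $\vl$ so the choice of representative in the color class is irrelevant. Spelling it out as two inclusions rather than a chain of equivalences is only a cosmetic difference.
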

\begin{proof}
	This trivially follows from the following facts:
	By definition, $(c) \in U^{\ciD}$ iff there exists a $v \in \Adom(\DOne)$ with
	$\col(v) = c$ and $(v)\in U^{\DOne}$.
	Furthermore, $(v)\in U^{\DOne}$ iff $U \in \vl(v)$.
	Finally, the coloring $\col$ refines $\vl$, i.e.,
	for all $u, v \in \Adom(\DOne)$ with $\col(u) = \col(v)$ we have $\vl(u) = \vl(v)$.
\end{proof}

For every variable $x \in \Vars(\QOne)$ other than the root $x_1$,
let $\Parent(x)$ be the unique \emph{parent} of $x$, that is the unique $z \in \Vars(\QOne)$
such that $z < x$ and $e \isdef \set{ x,z }$ is an edge in $G(\QOne)$.
We write $\ParentEdge(x)$ to denote the edge $\set{\Parent(x),x}$
connecting $x$ with its parent $\Parent(x)$.
Conversely, for all $x \in \Vars(\QOne)$, let $\Children(x)$ be the set of all children of $x$,
i.e., all $y \in \Vars(\QOne)$ such that $\Parent(y) = x$.

Using Lemma~\ref{lemma:homomorphisms}, one obtains:
\begin{corollary}\label{cor:homomorphisms}
	A mapping $\nu: \vars(\QOne) \to \Dom$ is a homomorphism from $\QOne$ to $\DOne$ if and only if:
	\begin{enumerate}[(1)]
		\item For all $x \in \Vars(\QOne)$ we have: \ $\lambda_x \subseteq \vl(\nu(x))$, and
		\item For all $x \in \Vars(\QOne)$ with $x \neq x_1$ we have:\ \,
		$\nu(x) \in \hatNSucc{\lambda}{\nu(y)}{d}$,
		where $y = \Parent(x)$, $\lambda = \lambda_{\ParentEdge(x)}$ and $d = \col(\nu(x))$.
	\end{enumerate}
\end{corollary}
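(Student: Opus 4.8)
The plan is to obtain Corollary~\ref{cor:homomorphisms} as a purely mechanical reformulation of Lemma~\ref{lemma:homomorphisms}, using the tree structure of $G(\QOne)$ and the definition of $\hatNSucc{\cdot}{\cdot}{\cdot}$. The first step is to record that, since $G(\QOne)$ is the tree $T$ rooted at $x_1$, its edge set is exactly $\setc{\set{\Parent(x),x}}{x\in\vars(\QOne),\ x\neq x_1}$, and that the order $<$ (being compatible with the descendant relation of $T$) orients each such edge from the parent to the child, i.e.\ $\Parent(x)<x$. Consequently, writing an edge of $G(\QOne)$ in the form ``$e=\set{x',y'}$ with $x'<y'$'' used in Lemma~\ref{lemma:homomorphisms} means exactly $x'=\Parent(x)$, $y'=x$, and $\lambda_e=\lambda_{\ParentEdge(x)}$.

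Next I would invoke Lemma~\ref{lemma:homomorphisms}. Condition~(1) of the lemma is literally condition~(1) of the corollary, so there is nothing to do. Condition~(2) of the lemma, rewritten via the correspondence above, states: for every $x\neq x_1$, with $y\deff\Parent(x)$ and $\lambda\deff\lambda_{\ParentEdge(x)}$, we have $(\val(y),\val(x))\in\EOne$ and $\el(\val(y),\val(x))\supseteq\lambda$. It then remains to unfold the definition $\hatNSucc{\lambda}{v}{c}=\bigcup_{\lambda'\supseteq\lambda}\NSucc{\lambda'}{v}{c}$ (union over actual edge-labels $\lambda'\in\ELabels$): taking $c=d\deff\col(\val(x))$, the membership $\val(x)\in\hatNSucc{\lambda}{\val(y)}{d}$ holds iff there is some actual edge-label $\lambda'\supseteq\lambda$ with $(\val(y),\val(x))\in\EOne$, $\el(\val(y),\val(x))=\lambda'$ and $\col(\val(x))=d$. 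The last equality is automatic by the choice of $d$, and witnessing $\lambda'$ by $\el(\val(y),\val(x))$ shows this is equivalent to ``$(\val(y),\val(x))\in\EOne$ and $\el(\val(y),\val(x))\supseteq\lambda$''. So condition~(2) of the corollary is exactly condition~(2) of the lemma, and since Lemma~\ref{lemma:homomorphisms} is an ``iff'', the corollary follows in both directions at once.

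I do not expect a genuine obstacle: the argument is a bookkeeping unwinding of definitions. The one point deserving explicit care is the correspondence between edges of $G(\QOne)$ and parent/child pairs together with the orientation convention --- one must verify that every edge of $G(\QOne)$ is $\set{\Parent(x),x}$ for a unique child $x$, that $<$ makes $\Parent(x)$ the smaller endpoint, and hence that $\lambda_e$ (defined through the $<$-smaller endpoint of $e$) coincides with $\lambda_{\ParentEdge(x)}$; all of this is immediate from the way $T$ and $<$ were constructed from $G(\QOne)$. With that correspondence in hand, conditions~(1) and~(2) of the corollary are verbatim conditions~(1) and~(2) of Lemma~\ref{lemma:homomorphisms}.
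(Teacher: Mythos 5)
Your proposal is correct and follows essentially the same route as the paper's proof: both identify the edges of $G(\QOne)$ with parent/child pairs under $<$ so that $\lambda_e=\lambda_{\ParentEdge(x)}$, and then unfold the definition of $\hatNSucc{\lambda}{\val(y)}{d}$ (with the witness $\lambda'=\el(\val(y),\val(x))\in\ELabels$) to see that membership is equivalent to $(\val(y),\val(x))\in\EOne$ and $\el(\val(y),\val(x))\supseteq\lambda_e$. No gaps.
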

\begin{proof}
	Condition (1) is the same as in Lemma~\ref{lemma:homomorphisms}.
	It remains to show that condition~(2) holds if and only if condition~(2) of
	Lemma~\ref{lemma:homomorphisms} holds.
	Notice that for every edge $e = \set{ x,y }$ of $G(\QOne)$ with $y < x$ it holds that
	$y = \Parent(x)$ and $\lambda_e = \lambda_{\ParentEdge(x)}$.
	Let $u = \nu(y)$ and $w = \nu(x)$.
	Then $(u,w) \in \EOne$ and $\el(u, w) \supseteq \lambda_e$ holds if and only if
	$w \in \NSucc{\lambda'}{u}{d}$ for $d = \col(w)$ and $\lambda' \supseteq \lambda_e$.
	This is the case if and only if $w \in \hatNSucc{\lambda_e}{u}{d}$, i.e.,
	iff $\nu(x) \in \hatNSucc{\lambda}{\nu(y)}{d}$.
\end{proof}

\begin{lemma}\label{claim:appendix:main-lemma:neighbors}
	Let $\mu: \Vars(\ciQ) \to C$ be a homomorphism from $\ciQ$ to $\ciD$.
	For every $x \in \Vars(\ciQ)$ that is not a leaf of $T$,
	for every $v \in \VOne$ with $\col(v) = \mu(x)$, and for every $z \in \Children(x)$ we have: \
	$\hatNSucc{\lambda}{v}{d} \neq \emptyset$, where $\lambda = \lambda_{\ParentEdge(z)}$
	and $d = \mu(z)$.
\end{lemma}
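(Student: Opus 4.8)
The plan is to simply unfold the relevant definitions; no deep argument is required. First I would observe that, since $z \in \Children(x)$ in the rooted tree $T$, the edge $e \isdef \ParentEdge(z) = \set{x,z}$ is an edge of the Gaifman graph $G(\ciQ) = G(\QOne)$, and since $x$ is the parent of $z$, the variable $z$ is a descendant of $x$ in $T$, so $x < z$ with respect to the fixed order $<$ used to build $\ciQ$. Consequently, by the construction of the color query, when the loop processes the edge $e$ it produces exactly the label $\lambda_e = \lambda_{\ParentEdge(z)} = \lambda$ and inserts the atom $E_{\lambda}(x,z)$ into $\Atoms(\ciQ)$.

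Next, since $\mu$ is a homomorphism from $\ciQ$ to $\ciD$, this atom forces $(\mu(x),\mu(z)) \in E_\lambda^{\ciD}$. By the definition of the color database $\ciD$, the relation $E_\lambda^{\ciD}$ consists precisely of those pairs $(c,c') \in C \times C$ with $\hatnumSucc{\lambda}{c}{c'} > 0$; hence $\hatnumSucc{\lambda}{\mu(x)}{\mu(z)} > 0$, i.e.\ $\hatnumSucc{\lambda}{\mu(x)}{d} > 0$ with $d = \mu(z)$.

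Finally I would connect this count to the cardinality of $\hatNSucc{\lambda}{v}{d}$. Because $\col(v) = \mu(x)$, the definition of $\hatnumSucc{\lambda}{\cdot}{\cdot}$ — which by Fact~\ref{fact:col} is independent of the choice of a vertex of the given color — gives $\hatnumSucc{\lambda}{\mu(x)}{d} = \hatnumSucc{\lambda}{v}{d} = \sum_{\lambda' \supseteq \lambda} \numSucc{\lambda'}{v}{d}$, where $\lambda'$ ranges over the actual edge-labels in $\ELabels$. Since every outgoing edge of $v$ in $\GOne$ carries exactly one edge-label, the sets $\NSucc{\lambda'}{v}{d}$ are pairwise disjoint for distinct $\lambda'$, so $\setsize{\hatNSucc{\lambda}{v}{d}} = \sum_{\lambda' \supseteq \lambda} \setsize{\NSucc{\lambda'}{v}{d}} = \sum_{\lambda' \supseteq \lambda} \numSucc{\lambda'}{v}{d} = \hatnumSucc{\lambda}{v}{d} > 0$. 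Therefore $\hatNSucc{\lambda}{v}{d} \neq \emptyset$, which is the claim.

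The only step that needs a moment's care — and hence the "main obstacle", modest as it is — is this last one: one must notice that the defining union $\hatNSucc{\lambda}{v}{d} = \bigcup_{\lambda' \supseteq \lambda} \NSucc{\lambda'}{v}{d}$ is in fact a \emph{disjoint} union, so that its cardinality equals the sum $\hatnumSucc{\lambda}{v}{d}$ of the individual neighbor-counts, and that this sum is the same for every vertex of color $\mu(x)$ (by stability of $\col$, i.e.\ Fact~\ref{fact:col}). Everything else is a direct reading-off of the constructions of $\ciQ$ and of $\ciD$.
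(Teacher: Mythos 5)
Your proposal is correct and follows essentially the same route as the paper's proof: read off the atom $E_{\lambda}(x,z)$ from the construction of $\ciQ$, use the homomorphism $\mu$ to get $(\mu(x),\mu(z))\in E_{\lambda}^{\ciD}$, hence $\hatnumSucc{\lambda}{\mu(x)}{d}>0$, and conclude non-emptiness of $\hatNSucc{\lambda}{v}{d}$ for any $v$ of color $\mu(x)$. Your extra care in the last step (the disjointness of the union defining $\hatNSucc{\lambda}{v}{d}$) is a fine way to justify what the paper states without comment, though it already suffices to note that a positive sum of non-negative terms has some positive summand $\numSucc{\lambda'}{v}{d}$.
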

\begin{proof}
	Let $x \in \Vars(\ciQ)$ with $\Children(x) \neq \emptyset$,
	let $v \in \VOne$ such that $\col(v) = \mu(x) = f$ and let $z \in \Children(x)$.
	Let $\lambda = \lambda_{\ParentEdge(z)}$ and $d = \mu(z)$.

	Then, $E_\lambda(x, z) \in \Atoms(\ciQ)$ and since $\mu$ is a homomorphism,
	$(\mu(x), \mu(z)) \in E_{\lambda}^{\ciD}$, i.e., $(f, d) \in E_{\lambda}^{\ciD}$.
	This means that $\hatnumSucc{\lambda}{f}{d} > 0$,
	which means that for every $u \in \VOne$ with $\col(u) = f$,
	$\hatNSucc{\lambda}{u}{d} \neq \emptyset$,
	which means in particular that $\hatNSucc{\lambda}{v}{d} \neq \emptyset$.
\end{proof}

\begin{lemma}\label{prop:appendix:main-lemma:hom_correspondence}
	Let $\mu: \Vars(\ciQ) \to C$ and $\nu: \Vars(\QOne) \to \VOne$ be mappings such that
	for all $x \in \Vars(\QOne)$ we have that:
	\begin{enumerate}[(a)]
		\item\label{prop:appendix:main-lemma:hom_correspondence:a}
		$\col(\nu(x)) = \mu(x)$, and
		\item\label{prop:appendix:main-lemma:hom_correspondence:b}
		$x = x_1$ or $\nu(x) \in \hatNSucc{\lambda}{\nu(y)}{d}$
		where $y = \Parent(x)$, $\lambda = \lambda_{\ParentEdge(x)}$ and $d = \mu(x)$.
	\end{enumerate}
	Then, $\mu$ is a homomorphism from $\ciQ$ to $\ciD$ if and only if
	$\nu$ is a homomorphism from $\QOne$ to $\DOne$.
\end{lemma}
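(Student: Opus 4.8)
The plan is to show that both statements ``$\nu$ is a homomorphism from $\QOne$ to $\DOne$'' and ``$\mu$ is a homomorphism from $\ciQ$ to $\ciD$'' are equivalent to one and the same condition, namely $(\star)$: $\lambda_x\subseteq\vl(\nu(x))$ for every $x\in\Vars(\QOne)$. Once this is established, the claimed equivalence is immediate. Throughout I will use that $\Vars(\QOne)=\Vars(\ciQ)$ and $G(\QOne)=G(\ciQ)$, so that the rooted tree $T$, the parent function $\Parent$, the children sets $\Children$, and the parent-edges $\ParentEdge(\cdot)$ are the same for $\QOne$ and $\ciQ$.

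\emph{The $\QOne$-side.} Here I would invoke Corollary~\ref{cor:homomorphisms}: $\nu$ is a homomorphism from $\QOne$ to $\DOne$ iff (1) $\lambda_x\subseteq\vl(\nu(x))$ for all $x\in\Vars(\QOne)$, and (2) for every non-root $x$ one has $\nu(x)\in\hatNSucc{\lambda}{\nu(y)}{d}$ with $y=\Parent(x)$, $\lambda=\lambda_{\ParentEdge(x)}$, and $d=\col(\nu(x))$. The key observation is that hypotheses~\ref{prop:appendix:main-lemma:hom_correspondence:a} and~\ref{prop:appendix:main-lemma:hom_correspondence:b} of the lemma already deliver~(2): by~\ref{prop:appendix:main-lemma:hom_correspondence:b} we have $\nu(x)\in\hatNSucc{\lambda}{\nu(y)}{d}$ with $y=\Parent(x)$, $\lambda=\lambda_{\ParentEdge(x)}$, $d=\mu(x)$, and by~\ref{prop:appendix:main-lemma:hom_correspondence:a} this $d$ equals $\col(\nu(x))$. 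Hence, under our standing hypotheses, $\nu$ is a homomorphism from $\QOne$ to $\DOne$ iff $(\star)$ holds.

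\emph{The $\ciQ$-side.} Recall that $\Atoms(\ciQ)$ consists of the unary atoms of $\QOne$ together with one binary atom $E_{\lambda_e}(x,y)$ for each edge $e=\set{x,y}$ of $G(\QOne)$ with $x<y$. For the unary atoms: $\mu$ satisfies all of them iff $\setc{U}{(\mu(x))\in U^{\ciD}}\supseteq\lambda_x$ for every $x$; by Observation~\ref{obs:appendix:main-lemma:unary_predicates} applied with $c=\mu(x)$ (which by~\ref{prop:appendix:main-lemma:hom_correspondence:a} equals $\col(\nu(x))$), the left-hand side is exactly $\vl(\nu(x))$, so this is precisely condition $(\star)$. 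For the binary atoms: since the order $<$ is compatible with the descendant relation of $T$, an edge $e=\set{x,y}$ with $x<y$ satisfies $x=\Parent(y)$ and $e=\ParentEdge(y)$; thus $\mu$ satisfies $E_{\lambda_e}(x,y)$ iff $\hatnumSucc{\lambda_e}{\mu(\Parent(y))}{\mu(y)}>0$. Using Fact~\ref{fact:col} (so that $\hatnumSucc{\lambda}{c}{c'}$ is well defined) together with~\ref{prop:appendix:main-lemma:hom_correspondence:a}, this number equals $\card{\hatNSucc{\lambda_e}{\nu(\Parent(y))}{\mu(y)}}$, and by hypothesis~\ref{prop:appendix:main-lemma:hom_correspondence:b} the set $\hatNSucc{\lambda_e}{\nu(\Parent(y))}{\mu(y)}$ contains $\nu(y)$, hence is non-empty. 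So every binary atom of $\ciQ$ is automatically satisfied by $\mu$, and therefore $\mu$ is a homomorphism from $\ciQ$ to $\ciD$ iff $(\star)$ holds.

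Combining the two sides proves the lemma. I do not expect a genuine obstacle here; the argument is essentially bookkeeping. The points that need care are (i) recognising that an edge $\set{x,y}$ of $G(\QOne)$ with $x<y$ is precisely the parent-edge $\ParentEdge(y)$ with $x=\Parent(y)$, (ii) keeping straight which atoms belong to $\ciQ$ and into which relations of $\ciD$ they must be mapped, and (iii) using Fact~\ref{fact:col} so that $\hatnumSucc{\lambda}{c}{c'}$ does not depend on the chosen representative of the colour class.
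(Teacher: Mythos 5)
Your proof is correct and relies on exactly the same ingredients as the paper's (Corollary~\ref{cor:homomorphisms}, Observation~\ref{obs:appendix:main-lemma:unary_predicates}, and the definition of $E_\lambda^{\ciD}$ via $\hatnumSucc{\lambda}{c}{c'}>0$); the only difference is organizational, in that you factor both homomorphism properties through the common condition $(\star)$ and observe that hypotheses (a) and (b) already make condition~(2) of the corollary and all binary atoms of $\ciQ$ automatic, whereas the paper argues the two implications separately and re-derives the neighbor membership in the backward direction from $\nu$ being a homomorphism. This is a slightly cleaner packaging of the same argument, not a different route.
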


\begin{proof}
	\noindent($\Rightarrow$):\;
	We use Corollary~\ref{cor:homomorphisms} to verify that
	$\nu$ is a homomorphism from $\QOne$ to $\DOne$:

	\noindent (1):\; Let $x \in \Vars(\QOne)$ and let $\nu(x) = u$ and $\mu(x) = c$.
	We must show that $\vl(u) \supseteq \lambda_x$.
	Since $\mu$ is a homomorphism from $\ciQ$ to $\ciD$,
	$\lambda_x \subseteq \set{ U \mid (c) \in U^{\ciD} } =: A$.
	Since $\col(u) = c$ by~\eqref{prop:appendix:main-lemma:hom_correspondence:a},
	we have $A = \vl(u)$ using Observation~\ref{obs:appendix:main-lemma:unary_predicates}.
	Thus, $\lambda_x \subseteq \vl(u)$.
	\smallskip

	\noindent (2):\;
	This follows directly from~\eqref{prop:appendix:main-lemma:hom_correspondence:b}.

	\medskip
	\noindent($\Leftarrow$):\;
	Let $x \in \Vars(\ciQ)$ and let $\mu(x) = c$ and $\nu(x) = u$.
	Note that $\col(u) = c$.

	Suppose $U(x) \in \Atoms(\ciQ)$, then we must show that $(c) \in U^{\ciD}$.
	By definition of $\ciQ$ and $\QOne$, we also have $U(x) \in \Atoms(\QOne)$.
	It follows from Corollary~\ref{cor:homomorphisms} that $\vl(u) \supseteq \lambda_x$,
	and thus in particular $U \in \vl(u)$.
	According to Observation~\ref{obs:appendix:main-lemma:unary_predicates} it therefore also holds
	that $U \in \set{ U' \mid (c) \in U'^{\ciD} }$, i.e., $(c) \in U^{\ciD}$.

	Suppose there exists a $y \in \Vars(\ciQ)$ such that $E_{\lambda}(x,y) \in \Atoms(\ciQ)$.
	Then by definition of $\ciQ$, $x < y$ and $\lambda = \lambda_e$ for $e = \set{ x,y }$.
	Let $w \isdef \nu(y)$ and let $d \isdef \mu(y)$.
	Again, note that $\col(w) = d$.
	We must show that $(c, d) \in E_{\lambda}^{\ciD}$.
	It follows from Corollary~\ref{cor:homomorphisms} that
	$w \in \hatNSucc{\lambda}{u}{d}$, i.e., $\hatnumSucc{\lambda}{c}{d} > 0$.
	Thus, by definition of $\ciD$, $(c,d) \in E_{\lambda}^{\ciD}$.
\end{proof}

\subsection{Proof of Lemma~\ref{lemma:bool}}
If there is a homomorphism $\nu: \vars(\QOne) \to \VOne$ from $\QOne$ to $\DOne$,
then let $\mu: \vars(\ciQ) \to C$ with $\mu(x) = \col(\nu(x))$ for all $x \in \vars(\ciQ)$.
Then, according to Corollary~\ref{cor:homomorphisms}, the mappings $\mu$ and $\nu$ match
the requirements (a) and (b) of Lemma~\ref{prop:appendix:main-lemma:hom_correspondence},
which means $\mu$ is a homomorphism as well.

If there is a homomorphism $\mu: \vars(\ciQ) \to C$ from $\ciQ$ to $\ciD$,
we can combine Lemma~\ref{prop:appendix:main-lemma:hom_correspondence} with
Lemma~\ref{claim:appendix:main-lemma:neighbors}
to obtain a homomorphism $\nu: \Vars(\QOne) \to \VOne$ from $\QOne$ to $\DOne$.
\qed{}

\subsection{Preparing the proof of Lemma~\ref{lemma:mainLemma}}
For the next two lemmas, consider a fixed $\bar{c} \isdef (c_1, \dots, c_k) \in \sem{\ciQ}(\ciD)$
and a fixed homomorphism $\mu: \Vars(\ciQ) \to C$ from $\ciQ$ to $\ciD$,
that witnesses $\bar{c} \in \sem{\ciQ}(\ciD)$, i.e., $\mu(x_i) = c_i$ for all $i \in [k]$.
For $\ell \in [k]$, we call an $\ell$-tuple $(v_1, \dots, v_{\ell})$ \emph{consistent with}
$\bar{c} \isdef (c_1, \dots, c_k)$, if for all $i \in [\ell]$ $\col(v_i) = c_i$,
and all $j < i$ where $e \isdef \set{ x_i, x_j }$ is an edge in $G(\QOne)$
we have $v_i \in \hatNSucc{\lambda_e}{v_j}{c_i}$.
Note that $(v_1)$ is consistent with $\bar{c}$
for every $v_1 \in \Adom(\DOne)$ with $\col(v_1) = c_1$.

\begin{lemma}
	Let $\ell \in \set{ 1, \dots, k{-}1 }$
	and let $(v_1, \dots, v_{\ell})$ be consistent with $\bar{c}$.
	Let $x_j = \Parent(x_{\ell+1})$ and $\lambda = \lambda_{\ParentEdge(x_{\ell+1})}$.
	Then, $(v_1, \dots, v_{\ell}, v_{\ell+1})$ is consistent with $\bar{c}$
	for all $v_{\ell+1} \in \hatNSucc{\lambda}{v_j}{c_{\ell+1}}$.
\end{lemma}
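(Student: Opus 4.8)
The plan is to verify directly that the two defining clauses of ``consistent with $\bar{c}$'' hold for the $(\ell{+}1)$-tuple $(v_1,\dots,v_\ell,v_{\ell+1})$. Since $(v_1,\dots,v_\ell)$ is already consistent with $\bar{c}$, both clauses are automatic for all indices $\le\ell$, so only the new index $\ell{+}1$ needs attention. The key preliminary observation is that $x_j=\Parent(x_{\ell+1})$ is the \emph{only} variable among $x_1,\dots,x_\ell$ adjacent to $x_{\ell+1}$ in $G(\QOne)$. This is essentially the well-definedness of $\Parent$: since $\QOne\in\fcACQOne$ is connected, $G(\QOne)$ is a tree; rooting it at $x_1$ and using that $<$ is compatible with the descendant relation of $T$, the only neighbor of $x_{\ell+1}$ that is $<x_{\ell+1}$ is its parent. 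Moreover that parent is again a free variable (the free variables induce a connected subtree of $T$ containing the root $x_1$, by Proposition~\ref{prop:binaryfcacqs}), hence it is some $x_j$ with $j\le\ell$. Thus, for the index $i=\ell{+}1$, the set of $m<\ell{+}1$ with $\set{x_{\ell+1},x_m}$ an edge of $G(\QOne)$ equals $\set{j}$, and the label attached to this edge is $\lambda_{\set{x_j,x_{\ell+1}}}=\lambda_{\ParentEdge(x_{\ell+1})}=\lambda$.

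With this observation in hand I would first check the color clause: for $i\le\ell$ it holds because $(v_1,\dots,v_\ell)$ is consistent with $\bar{c}$; for $i=\ell{+}1$ it holds because $v_{\ell+1}$ lies in $\hatNSucc{\lambda}{v_j}{c_{\ell+1}}$, which by definition is a union of sets $\NSucc{\lambda'}{v_j}{c_{\ell+1}}$ with $\lambda'\supseteq\lambda$, each of which consists only of vertices of color $c_{\ell+1}$; hence $\col(v_{\ell+1})=c_{\ell+1}$. Then I would check the neighbor clause: for every $i\le\ell$ and every $m<i$ with $\set{x_i,x_m}$ an edge of $G(\QOne)$, the requirement $v_i\in\hatNSucc{\lambda_{\set{x_m,x_i}}}{v_m}{c_i}$ again holds by consistency of $(v_1,\dots,v_\ell)$; and for $i=\ell{+}1$, the preliminary observation reduces the requirement to the single condition $v_{\ell+1}\in\hatNSucc{\lambda}{v_j}{c_{\ell+1}}$, which is precisely the hypothesis. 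Together these show that $(v_1,\dots,v_\ell,v_{\ell+1})$ is consistent with $\bar{c}$.

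I do not expect a genuine obstacle; this is a bookkeeping step inside the correctness proof of the enumeration procedure. The one point that deserves a careful sentence is the preliminary observation that $x_j$ is the unique earlier neighbor of $x_{\ell+1}$ in $G(\QOne)$ --- so that adjoining $v_{\ell+1}$ creates no constraint beyond the one carried by $\ParentEdge(x_{\ell+1})$ --- which rests on $G(\QOne)$ being a tree and on the way the linear order $<$ was chosen (compatible with the descendant relation and placing free variables before quantified ones).
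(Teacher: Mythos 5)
Your proposal is correct and follows essentially the same route as the paper's proof: check both clauses of consistency, observe that they are inherited for indices $\le\ell$, and reduce the case $i=\ell{+}1$ to the single hypothesis $v_{\ell+1}\in\hatNSucc{\lambda}{v_j}{c_{\ell+1}}$ via the uniqueness of the earlier neighbor $x_j=\Parent(x_{\ell+1})$ in the tree $G(\QOne)$. You even spell out two points the paper leaves implicit (that $\col(v_{\ell+1})=c_{\ell+1}$ follows from the definition of $\hatNSucc{\lambda}{v_j}{c_{\ell+1}}$, and why the parent is a free variable with index $\le\ell$), so nothing is missing.
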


\begin{proof}
	Let $v_{\ell+1} \in \hatNSucc{\lambda}{v_j}{c_{\ell+1}}$.
	It suffices to show that $(v_1, \dots, v_{\ell}, v_{\ell+1})$ is consistent.
	Let $i \in [\ell+1]$.
	Clearly, $\col(v_i) = c_{i}$ holds.
	Let $j < i$ such that $e \isdef \set{ x_i, x_j }$ is an edge in $G(\QOne)$.
	If $i \leq \ell$, then $v_i \in \hatNSucc{\lambda_e}{v_j}{c_i}$
	since $(v_1, \dots, v_{\ell})$ is consistent with $\bar{c}$.
	If $i = \ell+1$, then we have already seen that $x_j$ is unique.
	This means we have $x_j = \Parent(x_i)$ and $e = \ParentEdge(x_i)$.
	By choice of $v_i = v_{\ell+1}$, $v_{i} \in \hatNSucc{\lambda_e}{v_j}{c_i}$.

	Thus, $(v_1, \dots, v_{\ell}, v_{\ell+1})$ is consistent.
\end{proof}

\begin{lemma}\label{claim:appendix:main-lemma:consistency}
	For every $\ell \in [k]$ the following is true:
	Every $\ell$-tuple consistent with $\bar{c}$ is a partial output of $\QOne$ over $\DOne$
	of color $\bar{c}$; and vice versa,
	every partial output of $\QOne$ over $\DOne$ of color $\bar{c}$ is consistent with $\bar{c}$.
\end{lemma}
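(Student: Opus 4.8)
The plan is to prove the two implications separately, reducing in each case to the homomorphism characterizations of Corollary~\ref{cor:homomorphisms} and Lemma~\ref{prop:appendix:main-lemma:hom_correspondence}. Throughout I keep in force the fixed $\bar c=(c_1,\dots,c_k)\in\sem{\ciQ}(\ciD)$ and the fixed witnessing homomorphism $\mu\colon\vars(\ciQ)\to C$ with $\mu(x_i)=c_i$ for all $i\in[k]$.

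For the ``partial output $\Rightarrow$ consistent'' direction I would start from a partial output $(v_1,\dots,v_\ell)$ of $\QOne$ over $\DOne$ of color $\bar c$. By definition there is a $k$-tuple $(v_1,\dots,v_k)\in\sem{\QOne}(\DOne)$ whose first $\ell$ components are $v_1,\dots,v_\ell$ and whose color tuple equals $\bar c$; fix a homomorphism $\nu\colon\vars(\QOne)\to\VOne$ from $\QOne$ to $\DOne$ with $\nu(x_i)=v_i$ for all $i\in[k]$. Then $\col(v_i)=c_i$ for $i\in[\ell]$ holds by assumption, and since $G(\QOne)$ is a tree rooted at $x_1$ with $<$ compatible with the descendant order, the only neighbour $x_j$ of $x_i$ with $j<i$ is its parent $\Parent(x_i)$; hence the remaining consistency condition is precisely condition~(2) of Corollary~\ref{cor:homomorphisms} applied to $\nu$, which gives $v_i=\nu(x_i)\in\hatNSucc{\lambda_e}{\nu(\Parent(x_i))}{c_i}$ for $e=\ParentEdge(x_i)$. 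So $(v_1,\dots,v_\ell)$ is consistent with $\bar c$.

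For the ``consistent $\Rightarrow$ partial output'' direction I would proceed in two moves. First, I extend a consistent $\ell$-tuple step by step to a consistent $k$-tuple: given a consistent $m$-tuple $(v_1,\dots,v_m)$ with $\ell\le m<k$, let $x_j=\Parent(x_{m+1})$, so $j\le m$ by compatibility of $<$ with the tree order, whence $v_j$ is already defined and $\col(v_j)=c_j=\mu(x_j)$; put $\lambda=\lambda_{\ParentEdge(x_{m+1})}$. By Lemma~\ref{claim:appendix:main-lemma:neighbors}, applied with the fixed $\mu$ and with $x=x_j$, $v=v_j$, $z=x_{m+1}\in\Children(x_j)$, the set $\hatNSucc{\lambda}{v_j}{c_{m+1}}$ is non-empty, so we pick some $v_{m+1}$ in it, and by the one-step extension established in the lemma immediately preceding this statement, $(v_1,\dots,v_m,v_{m+1})$ is again consistent with $\bar c$. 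Iterating yields a consistent $k$-tuple $(v_1,\dots,v_k)$ extending the original one.

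Second, I show that every consistent $k$-tuple $(v_1,\dots,v_k)$ gives an element of $\sem{\QOne}(\DOne)$ of color $\bar c$: define $\nu\colon\vars(\QOne)\to\VOne$ by $\nu(x_i)=v_i$, and check that $\nu$ together with the fixed $\mu$ satisfies hypotheses~(a) and~(b) of Lemma~\ref{prop:appendix:main-lemma:hom_correspondence} — (a) because $\col(\nu(x_i))=c_i=\mu(x_i)$, and (b) because consistency gives $v_i\in\hatNSucc{\lambda_e}{v_j}{c_i}$ for $x_j=\Parent(x_i)$, $e=\ParentEdge(x_i)$, using again that the parent is the unique smaller-indexed neighbour. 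Since $\mu$ is a homomorphism from $\ciQ$ to $\ciD$, Lemma~\ref{prop:appendix:main-lemma:hom_correspondence} yields that $\nu$ is a homomorphism from $\QOne$ to $\DOne$, so $(v_1,\dots,v_k)\in\sem{\QOne}(\DOne)$ with color tuple $\bar c$; applied to the extension from the first move, this makes the original $\ell$-tuple a partial output of color $\bar c$. The only real point of care — the closest thing to an obstacle — is the bookkeeping on the variable order: at each extension step the parent of the newly added variable must already carry a value of the correct color, which is exactly what makes the hypothesis $\col(v)=\mu(x)$ of Lemma~\ref{claim:appendix:main-lemma:neighbors} available; with that in hand, the rest is a direct appeal to the established correspondences.
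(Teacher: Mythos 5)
Your overall route is the same as the paper's (reduce both directions to Corollary~\ref{cor:homomorphisms}, Lemma~\ref{claim:appendix:main-lemma:neighbors} and Lemma~\ref{prop:appendix:main-lemma:hom_correspondence}), and the ``partial output $\Rightarrow$ consistent'' direction is correct as written. But there is a genuine gap in the second move of the converse direction. You extend a consistent $\ell$-tuple to a consistent $k$-tuple and then ``define $\nu\colon\vars(\QOne)\to\VOne$ by $\nu(x_i)=v_i$'' and apply Lemma~\ref{prop:appendix:main-lemma:hom_correspondence}. This only defines $\nu$ on the head variables $x_1,\dots,x_k$, i.e.\ on $\free(\QOne)$. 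If $\quant(\QOne)\neq\emptyset$, your $\nu$ is not a total map on $\Vars(\QOne)$, so it is not even a candidate homomorphism, and hypotheses~(a) and~(b) of Lemma~\ref{prop:appendix:main-lemma:hom_correspondence} --- which quantify over \emph{all} $x\in\Vars(\QOne)$ --- cannot be verified. The notion of consistency is only defined for tuples of values of the head variables, so stopping the extension at length $k$ leaves the quantified variables unassigned; as stated, your argument proves the claim only for full (quantifier-free) queries.

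The fix is exactly the mechanism you already use in your first move, continued past the head variables: process the remaining variables $z\in\quant(\QOne)$ in the order $<$ (they all come after the free variables by the choice of $<$), note that $\Parent(z)$ is already assigned a value $u$ with $\col(u)=\mu(\Parent(z))$, invoke Lemma~\ref{claim:appendix:main-lemma:neighbors} to see that $\hatNSucc{\lambda_{\ParentEdge(z)}}{u}{\mu(z)}$ is non-empty, and pick any element of it as $\nu(z)$. This yields a total $\nu$ satisfying (a) and (b), after which your appeal to Lemma~\ref{prop:appendix:main-lemma:hom_correspondence} goes through. The paper does precisely this in a single uniform top-down pass over all of $\Vars(\QOne)\setminus\set{x_1,\dots,x_\ell}$ rather than splitting the extension into ``remaining head variables'' and then (missing, in your write-up) ``quantified variables.''
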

\begin{proof}
	Let $(v_1, \dots, v_{\ell})$ be consistent with $\bar{c}$.
	We show that there exists a homomorphism $\nu: \Vars(\QOne) \to \VOne$ from $\QOne$ to $\DOne$
	such that $\nu(x_i) = v_i$ for all $i \in [\ell]$ and $\col(\nu(z)) = \mu(z)$
	for all $z \in \Vars(\QOne)$.
	From that it directly follows that $(v_1, \dots, v_{\ell})$ is a partial output.

	We define $\nu$ along the order $<$ on $\Vars(\QOne)$ (which implies a top-down approach).
	For all $i \in [\ell]$, let $\nu(x_i) \isdef v_i$.
	
	For all $z \in \Vars(\QOne)$ such that $z \neq x_i$
	for all $i \in [\ell]$, let $y = \Parent(z)$.
	Since $y < z$, $\nu(y)$ is already defined and $\col(\nu(y)) = \mu(y)$.
	Let $u = \nu(y)$, $f = \mu(z)$ and $e \isdef \ParentEdge(z)$.
	By Lemma~\ref{claim:appendix:main-lemma:neighbors} we know that
	$\hatNSucc{\lambda_e}{u}{f}$ is not empty.
	We choose an arbitrary $w \in \hatNSucc{\lambda_e}{u}{f}$ and let $\nu(z) \isdef w$.
	Clearly, $\col(\nu(w)) = f = \mu(z)$.
	According to Lemma~\ref{prop:appendix:main-lemma:hom_correspondence}, $\nu$ is a homomorphism.
	\medskip
	
	Let $(v_1, \dots, v_{\ell})$ be a partial output of $\QOne$ over $\DOne$ of color $\bar{c}$.
	We have to show that it is consistent with $\bar{c}$.
	Let $i \in [\ell]$.
	Clearly, $\col(v_i) = c_i$.
	Let $j < i$ such that $e \isdef \set{ x_j, x_i }$ is an edge in $G(\QOne)$.
	Since $(v_1, \dots, v_{\ell})$ is a partial output, there exists a homomorphism
	$\nu$ from $\QOne$ to $\DOne$ such that $\nu(x_i) = v_i$ and $\nu(x_j) = v_j$.
	Notice that $x_j = \Parent(x_i)$, $e = \ParentEdge(x_i)$ and $\col(v_i) = c_i$.
	Thus, by Corollary~\ref{cor:homomorphisms} we get that $v_i \in \hatNSucc{\lambda_e}{v_j}{c_i}$.
\end{proof}

\subsection{Proof of Lemma~\ref{lemma:mainLemma}}
\begin{mea}
	\item
	Since $(v_1, \dots, v_k) \in \sem{\QOne}(\DOne)$, there exists a homomorphism
	$\nu: \Vars(\QOne) \to \VOne$ from $\QOne$ to $\DOne$ such that
	$\nu(x_i) = v_i$ for all $i \in [k]$.
	Let $\mu: \Vars(\ciQ) \to C$ with $\mu(x) = \col(\nu(x))$ for all $x \in \Vars(\ciQ)$.
	Using Corollary~\ref{cor:homomorphisms},
	we can apply Lemma~\ref{prop:appendix:main-lemma:hom_correspondence} to $\mu$ and $\nu$.
	Thus, $\mu$ is a homomorphism witnessing that $(\col(v_1), \dots, \col(v_k)) \in \sem{Q}(\ciD)$.
	\item
	For every $v_1 \in \Adom(\DOne)$ with $\col(v_1) = c_1$, $(v_1)$ is a partial output of
	$\QOne$ over $\DOne$ of color $\bar{c}$ because $(v_1)$ is consistent with $\bar{c}$.
	If $(v_1,\ldots,v_i)$ is a partial output of $\QOne$ over $\DOne$ of color $\bar{c}$,
	then it is also consistent with $\bar{c}$.
	If $e=\set{x_j,x_{i+1}}$ is an edge of $G(\QOne)$ with $x_j < x_{i+1}$,
	then $(v_1,\ldots,v_i,v_{i+1})$ is also consistent with $\bar{c}$ for every
	$v_{i+1} \in \hatNSucc{\lambda_e}{v_j}{c_{i+1}}$ and thus also a partial output of $\QOne$ over
	$\DOne$ of color $\ov{c}$ for every $v_{i+1} \in \hatNSucc{\lambda_e}{v_j}{c_{i+1}}$.
	$\hatNSucc{\lambda_e}{v_j}{c_{i+1}} \neq \emptyset$ holds by
	Lemma~\ref{claim:appendix:main-lemma:neighbors}.
	\qed{}
\end{mea}

\subsection{Proof of Lemma~\ref{lemma:counting}}
We prove this lemma by induction over the tree $T$.

\noindent\emph{Base case}:\;
Let $v \in \VOne$, $\col(v) = c$ and let $x$ be a leaf.
Then (b) trivially holds, since $\Children(x) = \emptyset$.
To prove (a), notice that $V(T_x) = \set{ x }$ and $E(T_x) = \emptyset$.
Thus, there is at most one $\nu: V(T_x) \to \VOne$ such that $\nu(x) = v$;
and it exists if and only if $\vl(\nu(x)) \supseteq \lambda_x$.
By definition, $\fd(c, x) = f_1(c, x)$ and $f_1(c,x) = 1$ if and only if
$\vl(v_c) \supseteq \lambda_x$ and 0 otherwise.
Since $\col(v) = c$, we have $\vl(v) = \vl(v_c)$ according to
Observation~\ref{obs:appendix:main-lemma:unary_predicates}.
Hence, condition (a)(1) holds.
Condition (a)(2) holds trivially, because $E(T_x) = \emptyset$.

\bigskip
\noindent\emph{Inductive step}:\;
Let $x \in T$ such that $x$ is not a leaf, and let $v \in \VOne$ with $\col(v) = c$.

\bigskip
\noindent\emph{Induction hypothesis}:\;
For every $y \in \Children(x)$, every $d \in C$ and every $w \in \VOne$ with $\col(w) = d$,
$\fd(d, y)$ is the number of mappings $\nu: V(T_y) \to \VOne$ satisfying $\nu(y) = w$ and
\begin{enumerate}[(1)]
	\item
	for every $x'\in V(T_y)$ we have $\vl(\val(x'))\supseteq \lambda_{x'}$, and
	\item
	for every edge $e=\set{x',y'}$ in $T_y$ with $x'<y'$ we have\\
	$(\val(x'),\val(y'))\in \EOne$ and $\el(\val(x'),\val(y'))\supseteq\lambda_e$.
\end{enumerate}
\bigskip

\noindent\emph{Claim}:\; For every $y \in \Children(x)$, $g(c, y)$ is the number of mappings
$\nu: \set{ x } \cup V(T_y) \to \VOne$ satisfying $\nu(x) = v$ and
\begin{enumerate}[(1)]
	\item
	for every $x'\in V(T_y)$ we have $\vl(\val(x'))\supseteq \lambda_{x'}$, and
	\item
	for every edge $e=\set{x',y'}$ in $T_y$ with $x'<y'$\\
	we have $(\val(x'),\val(y'))\in \EOne$ and $\el(\val(x'),\val(y'))\supseteq\lambda_e$, and
	\item
	for $e = \set{ x, y }$ we have $(\val(x),\val(y))\in \EOne$
	and $\el(\val(x),\val(y))\supseteq\lambda_e$.
\end{enumerate}
\smallskip

\begin{claimproof}
	Notice that (1) and (2) are the same conditions as in the induction hypothesis.
	Since every mapping has to map $x$ to $v$, every child $y \in \Children(x)$ has to be mapped to
	a $w$ such that $(v, w) \in \EOne$ and
	$\el(v, w) \supseteq \lambda_e$ for $e = \set{ x, y }$ according to (3), i.e.,
	we know that $y$ has to be mapped to an outgoing $\EOne$-neighbor $w$ of $v$
	such that $\el(v,w) \supseteq \lambda_e$.
	That is, any $w \in \hatNSucc{\lambda_e}{v}{c'}$ for any $c' \in C$ is a valid choice, i.e.,
	we have $|\hatNSucc{\lambda_e}{v}{c'}| = \hatnumSucc{\lambda_e}{c}{c'}$ choices for $w$
	for all $c' \in C$.
	Once we chose a $w$ in this way and let $\nu(y) = w$,
	we can use the induction hypothesis to get the number of choices we have to map
	the remaining variables such that $\nu(y) = w$ and (1) and (2) hold, which is $\fd(\col(w), y)$.
	Thus, we get that the number of mappings is
	$\sum_{c' \in C} \fd(c', y) \cdot \hatnumSucc{\lambda_e}{c}{c'}$,
	which is $g(c, y)$ by definition.
\end{claimproof}
\medskip

Notice that condition~(1) is the same as~(b)(1) and if we merge the conditions~(2) and~(3) above,
we get the condition~(2)(b).
Therefore, it follows from this claim that statement~(b) holds.
\medskip

\noindent\emph{Claim}:\;
$f_{\downarrow}(c,x)$ is the number of mappings $\val:V(T_x)\to \VOne$ satisfying $\val(x)=v$ and
\begin{enumerate}[(1)]
	\item
	for every $x'\in V(T_x)$ we have $\vl(\val(x'))\supseteq \lambda_{x'}$, and
	\item
	for every edge $e=\set{x',y'}$ in $T_x$ with $x'<y'$\\
	we have $(\val(x'),\val(y'))\in \EOne$ and $\el(\val(x'),\val(y'))\supseteq\lambda_e$.
\end{enumerate}
\smallskip
\begin{claimproof}
	In the same way as in the base case, we can see that the number of mappings is 0
	unless $f_1(c, x)$ is 1.
	If $f_1(c, x)$ is 1, then, according to~(2), a valid mapping must put every $y \in \Children(x)$
	on an outgoing $\EOne$-neighbor $w$ of $v$ such that $\el(v, w) \supseteq \lambda_e$
	where $e = \set{ x,y }$.

	Thus, every valid mapping must fulfill the requirements~(1)-(3) of the previous claim for every
	$y \in \Children(y)$ if we restrict it to $\set{ x } \cup T_{y}$.
	On the other hand, these trees only intersect in $x$ for all children.
	Thus, we can combine every map $\nu: \set{ x } \cup V(T_y) \to \VOne$ of a child $y$
	that adheres to these requirements with every other map $\nu': \set{ x } \cup V(T_{y'})$ of
	every other child $y'$ that adheres to these requirements,
	and we will get a map $\nu$ that satisfies (1) and (2).
	Thus, by the previous claim, we get a total of $\prod_{y \in \Children(x)} g(c,y)$ choices
	if $f_1(c,x)$ is 1 and 0 choices if $f_1(c,x)$ is 0, which is precisely $\fd(c, x)$.
\end{claimproof}
\medskip

It follows directly from this claim that statement (a) holds.\qed%

\section{A modified approach that avoids the exponential dependency on \texorpdfstring{$\sigma$}{sigma}}\label{appendix:MoreComplicatedIndexing}
Before finalizing this paper, we had two versions of our construction.
Version~1 is clean, easy to understand, and easily proved to be correct ---
but has an exponential dependency on the schema.
Version~2 is more complicated, more difficult to be proved to be correct ---
but avoids the exponential dependency on the schema.
Once having understood the construction of Version~1,
we believe that the overall idea of Version~2 can be explained with only moderate additional effort.
But presenting only Version~2 without prior explanation of Version~1, we believe,
the main idea of our index structure (and its correctness) would have been much harder to grasp.
Version~2 works as follows.

In the definition of the color-index structure $\DSD$,
instead of $\widehat{N}_{\rightarrow}^{\lambda}$ and $\widehat{\#}_{\rightarrow}^{\lambda}$,
we only need $N_{\rightarrow}^{\lambda}$ and $\#_{\rightarrow}^{\lambda}$
for each $\lambda \in \ELabels$.
But the definition of the color database $\ciD$ and the color schema $\cisigma$ are quite different
than in Section~\ref{sec:mainresult}: $\cisigma$ has only one binary relation symbol $E$,
all the unary relation symbols of $\sigmaOne$, and further unary relation symbols
$U_{(R,\rightarrow)}$ and $U_{(R,\leftarrow)}$ for each binary symbol $R$ in $\sigma$.

The color-database $\ciD$ is constructed as follows.
Initialize it to be the empty database.
For each unary symbol $S \in \sigmaOne$ and each color $c \in C$,
insert into the $S$-relation of $\ciD$ the element $c$
iff the $S$-relation of $\DOne$ contains a node of color $c$.
For each edge-label $\lambda \in \ELabels$ and all colors $c,c' \in C$ with
$\numSucc{\lambda}{c}{c'}>0$, insert into the active domain a new element $e_{\lambda,c,c'}$
and insert into the $E$-relation of $\ciD$ an edge from $c$ to $e_{\lambda,c,c'}$
and an edge from $e_{\lambda,c,c'}$ to $c'$.
Furthermore, for every $(R,\ast) \in \lambda$,
insert into the $U_{(R,\ast)}$-relation of $\ciD$ the element $e_{\lambda,c,c'}$.

In Section~\ref{sec:eval}, the definition of the color query $\ciQ$ has to be modified as follows.
First, initialize the body of $\ciQ$ to contain all
the unary atoms present in the body of $\QOne$.
Then, we enrich the body of the query as follows:
We loop through the edges $e$ of the Gaifman graph $G(\QOne)$ of $\QOne$.
For each $e$, we proceed as follows.
Introduce a new variable $z_e$.
Let $x<y$ be the two variables that are the endpoints of edge $e$.
Insert into the body of $\ciQ$ the new atoms $E(x,z_e), E(z_e,y)$ and, moreover,
$U_{(R,\ast)}(z_e)$ for every $(R,\ast) \in \lambda_e$
(where $\lambda_e$ is defined in the same way as in Section~\ref{sec:eval}).
This completes the definition of the body of $\ciQ$.
The head of $\ciQ$ contains the same variables as the head of $\QOne$,
and in addition to these, also the variables $z_e$ for all those edges $e$ of $G(\QOne)$
where both endpoints of $e$ belong to the head of $\QOne$.
For later use let us write $f$ for a bijection that maps all these edges $e$ to the numbers
$1, \dots ,k{-}1$, where $k$ is the number of variables in the head of $\QOne$.
The reason why we have to add these variables $z_e$ to the head of $\ciQ$ is to ensure that
the query $\ciQ$ is an fc-ACQ (this is needed for being able to use Theorem~\ref{thm:BGS-enum}).

Finally, the details on the evaluation phase for \enumtask{} have to be modified accordingly:
We use Theorem~\ref{thm:BGS-enum} to enumerate (with constant delay)
the result tuples of $\ciQ$ on $\ciD$.
Now, these tuples are of the form $t \isdef (c_1, \dots, c_k, e_1, \dots, e_{k{-}1})$,
where the $c_1, \dots, c_k$ are colors in $C$,
and the $e_1, \dots, e_{k{-}1}$ are elements of $\ciD$ of the form $e_{\lambda,c,c'}$.

In the enumeration algorithm presented in Section~\ref{sec:eval}
we now have to do the following modifications:
$\ov{c}$ is replaced by $t$.
And in the final for-loop we replace $\hat{N}_{\rightarrow}^{\lambda_e}$ by
$N_{\rightarrow}^{\mu_e}$, where $\mu_e$ is defined as follows.
Let $j \isdef f(e)$.
Let $\mu_e$ consist of exactly those $(R,\ast)$,
for which $e_j$ is in the $U_{(R,\ast)}$-relation of $\ciD$.
A careful inspection of the obtained algorithm shows that it indeed enumerates,
without duplicates, all the tuples in the result of $Q$ on $D$.

Note that also the proof of our counting result can be adapted in such a way
that it builds upon Version~2 instead of Version~1.
We plan to include details on Version~2 (as well as correctness proofs)
in the journal version of this paper.
  
\end{document}